\newif\iffull
\newtheorem{fact}{Fact}[section]
\newtheorem{definition}[fact]{Definition}
\newtheorem{theorem}[fact]{Theorem}
\newtheorem{lemma}[fact]{Lemma}
\newtheorem{corollary}[fact]{Corollary}
\newcommand{\on}{\{-1,1\}}
\newcommand{\infl}{\mathsf{Inf}}
\newcommand{\aedeg}{\mathsf{deg}^{\ell_2}_\epsilon}
\newcommand{\cF}{{\cal F}}
\newcommand{\lcond}{\ \left|\ }
\def\b1{{\bf 1}}
\newcommand{\RR}{\mathbb R}
\newenvironment{proof}{\noindent \textbf{Proof:}}{\hfill{$\Box$}}
\title{Tight Bounds on Low-degree Spectral Concentration of Submodular and XOS Functions}
\author{Vitaly Feldman \\
IBM Research - Almaden
 \and Jan Vondr\'{a}k \\
IBM Research - Almaden\\
}
\begin{document}

\maketitle

\begin{abstract}
Submodular and fractionally subadditive (or equivalently XOS) functions play a fundamental role in combinatorial optimization, algorithmic game theory and machine learning. 
Motivated by learnability of these classes of functions from random examples, we consider the question of how well such functions can be approximated by low-degree polynomials in $\ell_2$ norm over the uniform distribution. This question is equivalent to understanding the concentration of Fourier weight on low-degree coefficients, a central concept in Fourier analysis.
Denoting the smallest degree sufficient to approximate $f$ in $\ell_2$ norm within $\eps$ by $\aedeg(f)$, we show that
\begin{itemize}
\item For any submodular function $f:\zo^n \rightarrow [0,1]$, $\aedeg(f)=O(\log (1/\epsilon)/\epsilon^{4/5})$ and there is a submodular function that requires degree $\Omega(1/\epsilon^{4/5})$.
\item For any XOS function $f:\zo^n \rightarrow [0,1]$, $\aedeg(f)=O(1/\epsilon)$ and there exists an XOS function that requires degree $\Omega(1/\epsilon)$.
\end{itemize}
This improves on previous approaches that all showed an upper bound of $O(1/\eps^2)$ for submodular \cite{CheraghchiKKL:12,FeldmanKV:13,FeldmanVondrak:13arxiv} and XOS \cite{FeldmanVondrak:13arxiv} functions. The best previous lower bound was $\Omega(1/\eps^{2/3})$ for monotone submodular functions \cite{FeldmanKV:13}.
Our techniques reveal new structural properties of submodular and XOS functions and the upper bounds lead to nearly optimal PAC learning algorithms for these classes of functions.


\end{abstract}
\thispagestyle{empty}
\newpage
\setcounter{page}{1}
\section{Introduction}
\label{sec:intro}
Analysis of the discrete Fourier transform of functions over the hypercube has a wide range of notable applications in theoretical computer science. It is also the object of significant research interest in its own right \cite{ODonnell14:book}. While most of this research has been devoted to Boolean-valued functions, many works analyze general real-valued functions (\eg \cite{Talagrand:94,DinurFKO:06}). Recently, the analysis of real-valued functions over the hypercube has also attracted significant attention due to applications in learning theory, property testing, differential privacy, algorithmic game theory and quantum complexity \cite{GHIM09,BalcanHarvey:12full,GuptaHRU:11,SV11,CheraghchiKKL:12,BadanidiyuruDFKNR:12,BalcanCIW:12,
RaskhodnikovaYaroslavtsev:13,FeldmanKV:13,FeldmanVondrak:13arxiv,FeldmanK14,BermanRY14,AaronsonA:14,BackursB:14}. Most of the Fourier-analytic techniques apply to real-valued functions as well but many new questions arise when one considers the richer structure of real-valued functions.

Our focus is on {\em structural properties} of two fundamental classes of real-valued functions: submodular and fractionally subadditive. Submodularity, a discrete analog of convexity, has played an essential role in combinatorial optimization~\cite{E70,L83,Q95,F97,FFI00} and, more recently, in algorithmic game theory and machine learning \cite{GKS05,LLN06,DS06,KGGK06,KSG08,Vondrak08}.
In algorithmic game theory, submodular functions have found application as {\em valuation functions}
with the property of diminishing returns \cite{LLN06,DS06,Vondrak08}. Along with submodular functions, fractionally subadditive functions have been studied in the algorithmic game theory context \cite{LLN06} (see Sec.~\ref{sec:prelims} for the definition). Feige showed that these functions have an additional characterization as a maximum of non-negative linear functions or XOS \cite{Feige:06}. Here we also show that the Rademacher complexity of a set of vectors that plays a fundamental role in statistical learning gives yet another equivalent way to define this class of functions. For comparison, we also discuss the class of self-bounding functions that contains both submodular and XOS functions and shares a number of properties with those classes such as dimension-free concentration of measure \cite{BoucheronLM:00}. Informally, a function $f:\zo^n \rightarrow \R$ is self-bounding if for every $x \in \zo^n$, $f(x)$ upper bounds the sum of all the $n$ marginal decreases in the value of the function at $x$.  We define these classes and their relationships in Section~\ref{sec:prelims}.

The primary property we consider is how well these functions can be approximated by {\em low-degree polynomials}, where the approximation is measured in $\ell_2$ norm over the uniform distribution $\U$ defined as $\|f-g\|_2= \sqrt{\E_\U[(f(x)-g(x))^2]}$. By the standard duality for the $\ell_2$ norm, approximability of $f$ by polynomials of degree $d$ is characterized by how much of $f$'s Fourier weight resides on coefficients of degree above $d$. Concentration of the Fourier spectrum on low-degree coefficients is one of the central and most well-studied properties in Fourier analysis and its applications. In particular, following the seminal work of Linial, Mansour and Nisan \cite{LinialMN:93}, a large number of learning algorithms over the uniform (and other) distributions relies crucially on approximation by low-degree polynomials (\eg \cite{KalaiKMS:08,KlivansS08,KaneKM13}).

Motivated by learning of submodular functions and its application in differential privacy in \cite{GuptaHRU:11}, Cheraghchi \etal \cite{CheraghchiKKL:12} proved that every submodular function\footnote{Here and below we normalize the function range to $[0,1]$.} can be $\eps$-approximated in $\ell_2$ norm by a polynomial of degree $O(1/\eps^2)$. Their proof is based on the analysis of the noise sensitivity of submodular functions, a standard tool from Fourier analysis for establishing low-degree spectral concentration. Subsequently, Feldman \etal proved the same upper bound of $O(1/\eps^2)$ using approximation of submodular functions by real-valued decision trees \cite{FeldmanKV:13}. They also gave a lower bound for learning that implies a lower bound of $\Omega(1/\eps^{2/3})$ on the degree necessary to $\eps$-approximate submodular functions.

Most recently, we considered the approximability of submodular and XOS functions by functions of few variables or {\em juntas} \cite{FeldmanVondrak:13arxiv}. We showed that submodular functions are $\eps$-approximated in $\ell_2$ by functions depending on $O(\frac{1}{\epsilon^2} \log \frac{1}{\epsilon})$ variables, while for XOS functions, a junta of size $2^{O(1/\epsilon^2)}$ suffices. In addition, we showed that submodular and XOS functions (in fact, all self-bounding functions) have constant total influence implying that they can be approximated by a polynomial of degree $O(1/\eps^2)$. These results have lead to substantially faster learning and testing algorithms for these classes of functions, most notably, a $2^{\tilde{O}(1/\epsilon^2)} \cdot n^2$ time PAC learning algorithm for submodular functions and a $2^{O(1/\epsilon^4)} \cdot n$ time PAC learning algorithm for XOS functions. Learning of submodular and XOS functions is also the main motivating application of this work.

\eat{
Clearly, a function on $\{0,1\}^n$ depending on at most $\tilde{O}({1}/{\epsilon^2})$ variables can be represented as a polynomial of degree $\tilde{O}({1}/{\epsilon^2})$. However, while submodular (even linear) functions require $\Omega({1}/{\epsilon^2})$ variables to approximate within an $\ell_2$-error of $\epsilon$, the bound in terms of low-degree approximation was not known to be tight.
We remark that even given this strong junta approximation result, approximating by a lower-degree polynomial is still meaningful: To describe a function of $\tilde{\Theta}(1/\epsilon^2)$ variables (or to learn it from examples), we need $2^{\tilde{\Theta}(1/\epsilon^2)}$ real coefficients. A lower-degree polynomial over $\tilde{\Theta}(1/\epsilon^2)$ variables can be described and also learned more efficiently.}

\subsection{Our Results}
\label{sec:our-results}
In this work, we investigate the degree that is necessary to approximate XOS and submodular functions in detail. For a real-valued function $f:\zon \rightarrow \R$ let $\aedeg(f)$ denote the smallest $d$ such that there is a polynomial $p$ of degree $d$ for which $\|f - p\|_2 \leq \epsilon$ and we refer to it as $(\ell_2,\eps)$-approximate degree of $f$.  The three known upper bounds on $(\ell_2,\eps)$-approximate degree of submodular functions are all $O(1/\eps^2)$ \cite{CheraghchiKKL:12,FeldmanKV:13,FeldmanVondrak:13arxiv}. The bounds are derived via three different approaches
suggesting that this might be the right answer. This bound also applies to XOS and self-bounding functions \cite{FeldmanVondrak:13arxiv} and the known lower bound of $\Omega(1/\eps^{2/3})$ also applies to all of these classes of functions \cite{FeldmanKV:13}.
Here we show that, in fact, the picture is substantially richer: each of these classes requires a different degree to approximate that corresponds to the increasing complexity of functions in these classes. We detail our bounds
\iffull below and also summarize them in Figure \ref{fig:table}.
\else
below.
\fi
\begin{itemize}
\item For any submodular function $f:\zo^n \rightarrow [0,1]$, $\aedeg(f)=O(\log (1/\epsilon)/\epsilon^{4/5})$. This is almost tight: we prove that even for very simple submodular functions of the form $f(x) = \min \{ \frac{2}{k} \sum_{i=1}^{k} x_i, 1 \}$ for some $k$, $\aedeg(f)=\Omega(1/\epsilon^{4/5})$.
\item For any XOS function $f:\zo^n \rightarrow [0,1]$, $\aedeg(f)=O(1/\epsilon)$. We also show that degree $\Omega(1/\epsilon)$ is necessary for XOS functions.
\end{itemize}
\iffull
\begin{figure}
\centering
$\begin{array}{|| c || c | c ||} \hline
\mbox{Class of functions} & \mbox{lower bound} & \mbox{upper bound} \\
\hline \hline
\mbox{linear} & 1 & 1 \\
\hline
\mbox{coverage} & \Omega(\log ({1}/{\epsilon})) \cite{FeldmanK14}  & O(\log ({1}/{\epsilon})) \  \cite{FeldmanK14} \\
\hline
\mbox{submodular} & \Omega(1/\epsilon^{4/5}) & O(1/\epsilon^{4/5} \cdot \log (1/\epsilon))  \\
\hline
\mbox{XOS} & \Omega(1/\epsilon)  & O(1/\epsilon) \\
\hline
\mbox{self-bounding} & \Omega(1/\epsilon^2) & O(1/\epsilon^2) \cite{FeldmanVondrak:13arxiv} \\
\hline
\end{array} $
\caption{\small Overview of low-degree approximations: bounds on $(\ell_2,\eps)$-approximate degree for a function with range $[0,1]$. \label{fig:table}}
\end{figure}
\fi

For comparison we show that the bounds above do not hold for the more general class of self-bounded functions (and, consequently, for functions with constant total influence). Namely, we show that there exists a self-bounding function $f:\zo^n \rightarrow [0,1]$, such that $\aedeg(f)=\Omega(1/\epsilon^2)$. This matches the upper bound in \cite{FeldmanVondrak:13arxiv}.
As an additional point of comparison, coverage functions, a subclass of submodular functions, can be approximated by polynomials of exponentially smaller $O(\log(1/\eps))$ degree \cite{FeldmanK14}. At the same time monotone functions and subadditive functions cannot be approximated by polynomials of dimension-free degree and require $\Omega(\sqrt{n})$ and $\Omega(n)$ degree, respectively, to approximate within a constant.

As a first application we show that the improved upper bound on $\aedeg$ of XOS functions leads to an upper bound of $2^{O(1/\epsilon)}$ on the size of junta sufficient to approximate an XOS function within $\ell_2$ error of $\eps$. This improves on the $2^{O(1/\epsilon^2)}$ upper bound and matches the lower bound of $2^{\Omega(1/\epsilon)}$ in \cite{FeldmanVondrak:13arxiv}.

\paragraph{Our techniques:}
It is easy to verify that previous approaches to proving upper bounds on $\aedeg$ cannot lead to upper bounds stronger than $1/\eps^2$ even in the case of submodular functions. For example, a bound on the total sum of squared influences $\infl^2(f)$ leads to $\aedeg(f) \leq \infl^2(f)/\eps^2$. However, $\infl^2(f) = 1$ even for the monotone submodular function $f(x) = x_1$.

The first step of both of our upper bounds is a spectral concentration bound based on the total {\em second-degree} influences. Namely, we consider the quantity $\sum_{i,j=1}^{n} \|\partial_{ij} f\|_2^2$, where $\partial_{ij} f$ is a second-degree discrete partial derivative of $f$. This quantity measures interactions between pairs of variables. It is particularly meaningful in the setting of submodular functions, where it measures the drop in marginal value of element $i$ due to the presence of $j$. That is, we always have $\partial_{ij} f \leq 0$ for submodular functions. We prove that for XOS functions the quantity $\sum_{i,j=1}^{n} \|\partial_{ij} f\|_2^2$ is at most a constant. This leads to an upper bound of $O(1/\epsilon)$ on $\aedeg(f)$, since $\sum_{i,j=1}^{n} \|\partial_{ij} f\|_2^2 \simeq 16 \cdot \sum_S |S|^2 \hat{f}^2(S)$. The proof of this bound is based on a careful analysis of contributions of the linear functions in the XOS representation. In particular, it also reveals that XOS functions satisfy a degree-two version of self-boundedness property: for all $x$, $\sum_{i,j: x_i=x_j=1} (\partial_{ij} f(x))^2 \leq 5 (f(x))^2$ (for comparison, self-bounding monotone functions satisfy $\sum_{i: x_i=1} \partial_{i} f(x) \leq f(x)$).


The upper bound above is optimal for XOS functions. To prove this we give an embedding of monotone DNF formulas into XOS functions. We then use the high noise sensitivity of Talagrand's random DNF \cite{mossel2002noise} to prove our lower bound on the low-degree spectral concentration.

For submodular functions, we use a different approach to obtain the stronger bound. We examine how the sum of second-degree influences $\sum_{i,j=1}^{n} \|\partial_{ij} f\|_2^2$ behaves when no individual influence is too large. The technical notion of ``large" that we use is the following ``threshold norm": $\|\partial_i f\|_T = \sup \{\alpha \geq 0: \Pr[|\partial_i f(x)| \geq \alpha] \geq \alpha^3 \}$. We prove that at most $O(\frac{1}{\epsilon} \log \frac{1}{\epsilon})$ partial derivatives can be large in the sense that $\|\partial_i f\|_T > \epsilon$. This result is a special case of almost-everywhere boundedness of almost all the partial derivatives of a submodular function that we show. Namely, the number of variables $i$ for which $\Pr[|\partial_i f(x)| \geq \alpha] \geq \delta$  is at most $O(\log(1/\delta)/\eps)$. To prove this result we rely on the ``boosting lemma" of Goemans and Vondrak \cite{GoemansVondrak:06}, also used in our recent work \cite{FeldmanVondrak:13arxiv}. (We note that an equivalent statement also appeared in \cite{KahnKalai07}.) Finally, we prove that for submodular functions with partial derivatives bounded by $\|\partial_i f\|_T \leq \epsilon$, we have $\sum_{i,j=1}^{n} \|\partial_{ij} f\|_2^2 = \tilde{O}(\sqrt{\epsilon})$. This leads to the upper bound of $\tilde{O}(1/\epsilon^{4/5})$.

As a warm-up to the upper bound for general submodular functions we also show a substantially simpler analysis for totally symmetric submodular functions (functions invariant under permutations of variables). In this case we avoid the logarithmic factor and get an $O(1/\epsilon^{4/5})$ upper bound. While the exponent of $\eps$ in our upper bound is quite unexpected it is actually optimal.
In particular, using direct estimation of spectral concentration we show that the simple function $f(x) = \min \{\frac{2}{k} \sum_{i=1}^{k} x_i, 1 \}$ requires degree $\Omega(1/\epsilon^{4/5})$ for $k = \Theta(1/\epsilon^{4/5})$. This function is monotone, totally symmetric, budget-additive and also can be viewed as a scaled rank function of a uniform matroid. Hence the lower bound applies to these subclasses of submodular functions as well. We remark that the weaker lower bound of $\Omega(1/\eps^{2/3})$ in \cite{FeldmanKV:13} was given for same function.

Finally, the lower bound of $\Omega(1/\eps^2)$ for self-bounding functions is based on an embedding of any Boolean function into a self-bounding Boolean function over $n + \log(n) + O(1)$ variables using the Hamming error-correcting code of distance 3.

\paragraph{Learning:}
The new structural results directly translate into improved learning algorithms using the techniques from \cite{FeldmanKV:13,FeldmanVondrak:13arxiv}. For brevity we describe the improvements for learning from random examples in the PAC model with $\ell_2$ error. Similar improvements can be obtained for agnostic learning and learning with value queries (which allow the learner to ask for the value of the function at any point). For both XOS and submodular functions we give a new lower bound which shows that our learning algorithms are close to optimal.
\begin{theorem}
\label{thm:submod-learning-intro}
There exists an algorithm $\A$ that given $\eps > 0$ and access to random uniform examples of a submodular XOS function $f:\zo^n \rightarrow [0,1]$, with probability at least $2/3$, outputs a function $h$, such that $\|f-h\|_2 \leq \epsilon$. Further, $\A$ runs in time $2^{\tilde{O}(1/\eps^{4/5})} \cdot n^2$ and uses $2^{\tilde{O}(1/\eps^{4/5})} \log n$ random examples.
\end{theorem}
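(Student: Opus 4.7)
The plan is to combine the new structural bound $\aedeg(f) = O(\log(1/\eps)/\eps^{4/5})$ for submodular $f$ with the junta-identification and polynomial-regression recipe developed in \cite{FeldmanVondrak:13arxiv}. The three ingredients are: (i) reduce the problem to a small set of relevant variables; (ii) invoke the approximate-degree bound to restrict attention to a small set of low-degree monomials over that junta; (iii) estimate their Fourier coefficients from random examples.

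First, I would invoke the junta approximation from \cite{FeldmanVondrak:13arxiv} for submodular functions: there exists a set $J$ of size $k = \tilde O(1/\eps^2)$ and a function $g$ supported on $J$ with $\|f-g\|_2 \le \eps/3$, and such a set $J$ can be identified from uniform random examples using $\poly(n,1/\eps)$ samples and time $\tilde O(n^2) \cdot 2^{\tilde O(1/\eps^2)}$ (the $n^2$ coming from estimating individual influences and the $\log n$ arising in the sample count via a union bound over coordinates). Then, by our main structural result, $f$ is $(\eps/3)$-approximated by some polynomial of degree $d = O(\log(1/\eps)/\eps^{4/5})$; by the triangle inequality the same quality of approximation (up to $2\eps/3$) is achieved by some polynomial of degree $d$ on variables in $J$ alone.

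Second, I would perform standard $\ell_2$-polynomial regression on the $N$-dimensional subspace of polynomials of degree at most $d$ over the $k$ variables in $J$. The number of monomials is
\[
N \le \binom{k}{\le d} \le k^d, \qquad \log N \le d\log k = \tilde O(1/\eps^{4/5}) \cdot O(\log(1/\eps)) = \tilde O(1/\eps^{4/5}),
\]
so $N = 2^{\tilde O(1/\eps^{4/5})}$. Empirically estimating the Fourier coefficient $\hat f(S)$ for each $S \subseteq J$ with $|S|\le d$ to accuracy $O(\eps/\sqrt{N})$ suffices, by Parseval, to control the added $\ell_2$ error by $\eps/3$; Hoeffding plus a union bound over the $N$ coefficients then gives sample complexity $m = O(\eps^{-2}\cdot N\cdot \log(N/\delta)) = 2^{\tilde O(1/\eps^{4/5})}$, and time $O(m \cdot N) = 2^{\tilde O(1/\eps^{4/5})}$ once we have restricted to $J$. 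The $\log n$ factor in the sample count comes exclusively from junta identification. Output $h$ equal to the truncation of the empirical polynomial to $[0,1]$, which only decreases $\ell_2$ error.

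The bulk of the work is already done by the structural theorem; the only thing one has to verify carefully is that the combination $\log N = d\log k$ lands at $\tilde O(1/\eps^{4/5})$ rather than, say, $\tilde O(1/\eps^2)$. This is the crux: the junta size enters only logarithmically into the exponent, so the $\eps^{4/5}$ saving from the degree bound is what drives the runtime, while the $\eps^{-2}$ junta size contributes merely a polylogarithmic factor absorbed by $\tilde O$. A minor technical point is composing the errors from junta identification and regression; both are $\ell_2$ errors and simply add, so the constants can be tuned to yield total error at most $\eps$. The main obstacle — getting past the $2^{\tilde O(1/\eps^2)}$ barrier of earlier work — is resolved automatically by plugging in our improved $\aedeg$ bound, since the prior template already separated junta identification (which costs $2^{\tilde O(k)}$ in earlier analyses but only $k^d$ here once we restrict to degree $d$ rather than general juntas) from coefficient estimation.
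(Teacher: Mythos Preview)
Your overall template is exactly the paper's: identify a small junta, then do low-degree polynomial regression over those variables using the new $\tilde O(1/\eps^{4/5})$ degree bound. Two points in your write-up need fixing, however.

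First, an internal inconsistency that, if taken literally, breaks the theorem. You state that the junta $J$ can be identified in time $\tilde O(n^2) \cdot 2^{\tilde O(1/\eps^2)}$. That step alone would then dominate, and the final running time could not be $2^{\tilde O(1/\eps^{4/5})} \cdot n^2$. In fact the junta identification the paper uses (Corollary~\ref{cor:find-junta-examples}) runs in time $\tilde O(n^2/\eps^{12})$ with $\tilde O(\log(n)/\eps^{12})$ examples --- polynomial in $1/\eps$, not exponential --- by estimating all degree-$1$ and degree-$2$ Fourier coefficients and keeping the variables with large ones. You appear to have conflated the junta-identification cost with the total running time of the \emph{previous} learning algorithm from \cite{FeldmanVondrak:13arxiv}. (Similarly, ``$\poly(n,1/\eps)$ samples'' should be $\poly(1/\eps)\cdot\log n$; otherwise the stated sample bound also fails.)

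Second, citing \cite{FeldmanVondrak:13arxiv} directly for $\ell_2$ junta identification is not quite legitimate. The paper explicitly points out that the junta-finding algorithm in that prior work was analyzed only in $\ell_1$ for submodular functions, and therefore supplies a new $\ell_2$ analysis (Theorem~\ref{thm:find-junta-examples}). The argument is short --- it combines the bound on the number of variables with a large degree-$1$ or degree-$2$ Fourier coefficient (Lemma~\ref{lem:important-var-bound}) with the inequality $|\hat f(\{i,j\})| \ge \tfrac12 \sum_{S \ni i,j} \hat f(S)^2$ (Lemma~\ref{lem:upp-bound-sum}) --- but it is a genuine ingredient you cannot simply import. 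The resulting junta has size $\tilde O(1/\eps^6)$ rather than $\tilde O(1/\eps^2)$, but as you correctly observe this only enters through $\log k$ and is absorbed into the $\tilde O$ in the exponent.
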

The best previous algorithm for this task runs in time $2^{\tilde{O}(1/\eps^{2})} \cdot n^2$ and uses $2^{\tilde{O}(1/\eps^{2})} \log n$ random examples \cite{FeldmanVondrak:13arxiv}. We complement the new learning upper bound by a nearly tight information-theoretic lower bound of $2^{\Omega(1/\eps^{4/5})}$ examples (of value queries) for any PAC learning
\iffull
algorithm (see Thm.~\ref{thm:submod-PAC-hardness} for a formal statement).
\else
algorithm.
\fi

The proof of the lower bound relies on the reduction in \cite{FeldmanKV:13}.
The improved polynomial approximation and junta size for XOS functions lead to the following improved PAC learning algorithm.
\begin{theorem}
\label{thm:XOS-learning-intro}
There exists an algorithm $\A$ that given $\eps > 0$ and access to random uniform examples of an XOS function $f:\zo^n \rightarrow [0,1]$, with probability at least $2/3$, outputs a function $h$, such that $\|f-h\|_2 \leq \epsilon$. Further, $\A$ runs in time $r^{O(1/\eps)} \cdot n$ and uses $r^{O(1/\eps)} \log n$ random examples, where $r = \min\{n, 2^{1/\eps}\}$.
\end{theorem}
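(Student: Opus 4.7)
The plan is to combine the two new structural results of this paper---the low-degree approximation bound $\aedeg(f) = O(1/\eps)$ for XOS functions and the improved junta size of $2^{O(1/\eps)}$---with the Fourier-regression learning template developed in~\cite{FeldmanKV:13,FeldmanVondrak:13arxiv}. Set $d = O(1/\eps)$ and $k = 2^{O(1/\eps)}$ as per the structural theorems, and observe that together they imply the existence of a set $J \subseteq [n]$ with $|J| \leq k$ and a polynomial $p$ of degree at most $d$, depending only on the variables in $J$, such that $\|f - p\|_2 \leq \eps/3$. Indeed, starting from the junta $g$ on $J$ that $\eps/4$-approximates $f$ and the degree-$d$ polynomial $q$ that $\eps/4$-approximates $f$, one obtains $p$ by projecting $q$ onto the subspace of polynomials supported on $J$; averaging the variables outside $J$ only decreases the distance to $g$, hence increases the distance to $f$ by at most $\eps/4$.

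The algorithm proceeds in two phases. Phase~I identifies an enlarged junta $\tilde{J} \supseteq J$ of size $O(k)$. When $n \leq 2^{1/\eps}$ there is nothing to do---set $\tilde{J} = [n]$. When $n > 2^{1/\eps}$, invoke the relevant-variable detection procedure of~\cite{FeldmanVondrak:13arxiv}: for each $i \in [n]$ estimate the degree-$d$ Fourier mass incident to $i$ from a short random sample, and keep those $i$ for which this quantity exceeds a carefully chosen threshold. Combining our improved tail bound $\sum_{|S|>d}\hat{f}(S)^2 \le (\eps/3)^2$ with the junta bound guarantees $|\tilde{J}| = O(k)$, and the routine uses $\mathrm{poly}(k,1/\eps)\log n$ random examples and $\mathrm{poly}(k,1/\eps)\cdot n$ time.

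Phase~II performs least-squares Fourier estimation over the $N \leq \binom{|\tilde{J}|}{\le d} = r^{O(1/\eps)}$ subsets $S \subseteq \tilde{J}$ with $|S|\leq d$, outputting $h = \sum_S \tilde{f}(S)\chi_S$ where $\tilde{f}(S) = \frac{1}{m}\sum_\ell f(x^\ell)\chi_S(x^\ell)$. A Hoeffding bound together with a union bound over the $N$ coefficients shows that $m = O((N/\eps^2)\log(N/\delta)) = r^{O(1/\eps)}\log n$ examples are enough to make $\sum_S(\tilde{f}(S)-\hat{f}(S))^2 \leq \eps^2/9$; Parseval together with the structural guarantee then gives $\|f-h\|_2 \leq \eps$ with probability at least $2/3$, within the claimed time and sample budget.

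The main obstacle, and the reason for the two-phase design, is Phase~I: with only random examples (no value queries) variables of the junta need not carry a large \emph{individual} Fourier coefficient, so one must estimate the degree-$d$ Fourier mass per variable and then argue, using both structural bounds, that at most $O(k)$ variables pass the threshold. This is precisely the mechanism already developed in~\cite{FeldmanVondrak:13arxiv}; the improved running time of $r^{O(1/\eps)}\cdot n$ compared to the previous $2^{O(1/\eps^4)}\cdot n$ follows mechanically once our sharper bounds on $\aedeg$ and junta size for XOS functions are substituted into that framework.
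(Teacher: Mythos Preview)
Your overall architecture---identify a small set of relevant variables, then run low-degree least-squares regression over that set---is exactly the paper's approach, and Phase~II is fine. The issue is in Phase~I.

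You describe the variable-detection step as ``for each $i \in [n]$ estimate the degree-$d$ Fourier mass incident to $i$ from a short random sample.'' But the quantity $\sum_{S \ni i,\,|S|\le d}\hat f(S)^2$ is not something you can estimate from random examples in $\mathrm{poly}(k,1/\eps)\cdot n$ time: it is essentially $\infl_i^2$ of the degree-$d$ truncation, and without value queries you cannot evaluate $\partial_i f$ on demand, so the only obvious route is to estimate all $\binom{n}{\le d}$ low-degree coefficients first---which costs $n^{\Theta(1/\eps)}$ and defeats the purpose of the junta step. This is a genuine gap in the argument as written.

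What the paper (and \cite{FeldmanVondrak:13arxiv}) actually does for XOS is simpler and crucially exploits monotonicity: it selects variables with large \emph{degree-$1$} Fourier coefficients. Since an XOS function is monotone, $\partial_i f \ge 0$ everywhere, hence $|\hat f(\{i\})| = \tfrac12\E[\partial_i f] = \infl_i^1(f) \ge \infl_i^{\kappa}(f)$ for any $\kappa\ge 1$ (using $|\partial_i f|\le 1$). Therefore every variable in the junta $I$ from the generalized Friedgut theorem (Theorem~\ref{th:rvjunta}, with the threshold $\alpha = 2^{-O(1/\eps)}$ coming from your new degree bound $d=O(1/\eps)$) already has $|\hat f(\{i\})|\ge \alpha$; and by $\infl^1(f)\le 1$ (Lemma~\ref{lem:xos}) at most $O(1/\alpha)=2^{O(1/\eps)}$ variables can have $|\hat f(\{i\})|\ge \alpha/2$. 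Estimating all $n$ degree-$1$ coefficients to accuracy $\alpha/4$ takes $\tilde O(n)\cdot 2^{O(1/\eps)}$ time and $2^{O(1/\eps)}\log n$ examples, which is what is needed. If you replace your Phase~I description with this degree-$1$ test, the rest of your write-up goes through and matches the paper.
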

The best previous upper bound was polynomial in $r^{O(1/\eps^2)}$ for $r = \min\{n, 2^{1/\eps^2}\}$ \cite{FeldmanVondrak:13arxiv}. We prove that any PAC algorithm for XOS functions requires $2^{\Omega(1/\eps)}$ examples\iffull (see Thm.~\ref{thm:xos-pac-hardness} for a formal statement)\fi. This upper bound is close to being tight when $n$ is subexponential in $2^{1/\eps}$. The lower bound is based on the embedding of monotone DNF into XOS functions that we used in the lower bound on $\aedeg$ together with the lower bound for learning monotone DNF of Blum \etal \cite{BlumBL:98}.
\iffull
Finally, using the Hamming code-based embedding we mentioned above we give a stronger lower bound of $2^{\Omega(1/\eps^2)}$ examples for any PAC learning algorithm for learning self-bounding functions.
\fi

\eat{
 is necessary to achieve an $\epsilon$-approximation in $\ell_2$. The example is based on Hadamard's code of distance $3$. It was already known that self-bounding functions can be approximated within $\ell_2$ error $\epsilon$ by polynomials of degree $O(1/\epsilon^2)$.
  the example is based on Talagrand's DNF formula.
}

\iffull
\paragraph{Organization:} Following the preliminaries we present the proofs of our main upper bounds: in Section \ref{sec:xos} for XOS functions and in Section \ref{sec:submod} for submodular functions. Applications to approximation of XOS functions by juntas and learning algorithms appear in Section \ref{sec:applications}. Details of lower bounds on spectral concentration and learning appear in Section \ref{sec:lower-bounds}. In Appendix \ref{sec:rademacher} we prove the equivalence of Rademacher complexity and XOS functions.
\else
\paragraph{Organization:} Following the preliminaries, we present outlines of the proofs of our main upper bounds. Applications to approximation of XOS functions by juntas and learning algorithms appear in Section 5 of the full version. Details of lower bounds on spectral concentration and learning appear in Section 6 of the full version. The proof of the equivalence of Rademacher complexity and XOS functions is given in Appendix A of the full version.
\fi

\subsection{Related Work}
\label{sec:related-work}
Analysis of functions on the Boolean hypercube has a long history with strong ties to combinatorics, probability, learning theory, cryptography and complexity theory (see \cite{ODonnell14:book}). One of the fundamental and most well-studied properties of Boolean functions is monotonicity. There is now a rich and detailed literature on the structure of the Fourier spectrum of monotone Boolean functions and their learnability over the uniform distribution \cite{KearnsLV:94,Talagrand:94,Talagrand:96,BshoutyTamon:96,BlumBL:98,mossel2002noise,Odonnell:03thesis,AmanoM06,Dachman-SoledLMSWW08,OWimmer:09,DachmanFTWW:15}. Starting with the work of Goldreich \etal \cite{GoldreichGLRS00} numerous works have also investigated testing of monotone functions over the Boolean hypercube. Submodularity is closely related to monotonicity: indeed a function is submodular if and only if its partial derivatives are monotone non-increasing. In addition, XOS functions which are monotone share structural similarities with monotone DNF formulas (we make this explicit in Section \ref{sec:lower-xos}). Hence our work is both inspired by the research on understanding of monotonicity over the Boolean hypercube and builds on techniques and results developed in that research. At the same time we are not aware of techniques closely related to those we use to prove our upper bounds for submodular and XOS functions having been used before.

\iffull
We now review some recent work on learning of submodular, XOS and related classes of real-valued functions.
Reconstruction of submodular functions up to some multiplicative factor (on every point) from value queries was first considered by Goemans \etal \cite{GHIM09}. They show a polynomial-time algorithm for reconstructing monotone submodular functions with $\tilde{O}(\sqrt{n})$-factor approximation and prove a nearly matching lower-bound. This was extended to the class of all subadditive functions in \cite{BadanidiyuruDFKNR:12} which studies small-size approximate representations of valuation functions (referred to as {\em sketches}).

\else
We now briefly review some recent work on learning of submodular and XOS functions. A more detailed overview can be found in the full version or \cite{FeldmanVondrak:13arxiv}.
\fi
Motivated by applications in economics, Balcan and Harvey initiated the study of learning submodular functions from random examples coming from an unknown distribution and introduced the PMAC learning model that requires a multiplicative approximation to the target function on most of the domain \cite{BalcanHarvey:12full}. They give an O($\sqrt{n}$)-factor PMAC learning algorithm and show an information-theoretic $\Omega(\sqrt[3]{n})$-factor impossibility result for submodular functions. Subsequently, Balcan \etal gave a distribution-independent PMAC learning algorithm for XOS functions that achieves an $\tilde{O}(\sqrt{n})$-approximation and showed that this is essentially optimal \cite{BalcanCIW:12}.


Learning of submodular functions with additive rather than multiplicative guarantees over the uniform distribution was first considered by Gupta \etal who were motivated by applications in private data release \cite{GuptaHRU:11}. \iffull
They show that submodular functions can be $\epsilon$-approximated by a collection of $n^{O(1/\epsilon^2)}$ $\epsilon^2$-Lipschitz submodular functions. Concentration properties imply that each $\epsilon^2$-Lipschitz submodular function can be $\epsilon$-approximated by a constant. This leads to
\else
They gave
\fi
a learning algorithm running in time $n^{O(1/\epsilon^2)}$, which however requires value queries in order to build the collection. Using the upper bound of $O(1/\eps^2)$ on $\aedeg$ of submodular functions Cheraghchi \etal gave a $n^{O(1/\eps^2)}$ learning algorithm which uses only random examples and, in addition, works in the agnostic setting \cite{CheraghchiKKL:12}.
\iffull
Feldman \etal show that the decomposition from \cite{GuptaHRU:11} can be computed by a low-rank binary decision tree \cite{FeldmanKV:13}. They then show that this decision tree can then be pruned to obtain depth $O(1/\eps^2)$ decision tree that approximates a submodular function. This construction implies approximation by a $2^{O(1/\eps^2)}$-junta of degree $O(1/\eps^2)$.
\else
Feldman \etal show that any submodular function can be approximated by a low-rank real-valued binary decision tree and derive approximation by a $2^{O(1/\eps^2)}$-junta of degree $O(1/\eps^2)$ from it \cite{FeldmanKV:13}.
\fi
They also show how approximation by a junta can be used to obtain a $2^{O(1/\eps^4)}$ PAC learning algorithm for submodular functions. Feldman \etal extend the results on noise sensitivity of submodular functions in \cite{CheraghchiKKL:12} to all self-bounding functions and show that they imply approximation within $\ell_1$ distance of $\eps$ by a polynomial of $O(\log(1/\eps)/\eps)$ degree and $2^{O(\log(1/\eps)/\eps)}$-junta \cite{FeldmanKV14}. Note that approximation in $\ell_2$ norm we give here is stronger and our lower bound for self-bounding functions implies that any approach that works for all self-bounding functions cannot improve on the $O(1/\eps^2)$ bound on $\aedeg$.

\iffull
Raskhodnikova and Yaroslavtsev consider learning and testing of submodular functions taking values in the range $\{0,1,\ldots,k\}$ (referred to as {\em pseudo-Boolean}) \cite{RaskhodnikovaYaroslavtsev:13}. The error of a hypothesis in their framework is the probability that the hypothesis disagrees with the unknown function. They show that pseudo-Boolean submodular functions can be expressed as $2k$-DNF and thus obtain a $\poly(n) \cdot k^{O(k \log{k/\eps})}$-time PAC learning algorithm using value queries. In a subsequent work, Blais \etal prove existence of a junta of size $(k \log(1/\eps))^{O(k)}$ and use it to give an algorithm for testing submodularity using $(k \log(1/\eps))^{\tilde{O}(k)}$ value queries \cite{BlaisOSY:13manu}.
\fi

\section{Preliminaries} \label{sec:prelims}


Let us define submodular, fractionally subadditive and subadditive functions. These classes are well known in combinatorial optimization and there has been a lot of recent interest in these functions in algorithmic game theory, due to their expressive power as {\em valuations} of self-interested agents.

\begin{definition}
A set function $f:2^N \rightarrow \RR$ is
\begin{compactitem}
 \item monotone, if $f(A) \leq f(B)$ for all $A \subseteq B \subseteq N$.
 \item submodular, if $f(A \cup B) + f(A \cap B) \leq f(A) + f(B)$ for all $A,B \subseteq N$.
 \item subadditive, if $f(A \cup B) \leq f(A) + f(B)$ for all $A \subseteq B \subseteq N$.
 \item fractionally subadditive, if $f(A) \leq \sum \beta_i f(B_i)$ whenever $\beta_i \geq 0$
 and $\sum_{i:a \in B_i} \beta_i \geq 1 \ \forall a \in A$.
\end{compactitem}
\end{definition}

We identify functions on $\zo^n$ with set functions on $N = [n]$ in a natural way. By $\bf 0$ and $\b1$, we denote the all-zeroes and all-ones vectors in $\{0,1\}^n$ respectively. Submodular functions are not necessarily nonnegative, but in many applications (especially when considering multiplicative approximations), this is a natural assumption. All our approximations are shift-invariant and hence also apply to submodular functions with range $[-1/2,1/2]$ (and can also be scaled in a straightforward way). Fractionally subadditive functions are nonnegative by definition (by considering $A=B_1, \beta_1 > 1$) and satisfy $f({\bf 0}) = 0$ (by considering $A = B_1 = \emptyset, \beta_1 = 0$). There is an equivalent definition known as ``XOS" or maximum of non-negative linear functions \cite{Feige:06}:
$f(x) = \max_{c \in C} \sum_{i=1}^{n} w_{ci} x_i.$
Here, $w_{ci} \geq 0$ are nonnegative weights. 
This class includes all (nonnegative) monotone submodular functions such that $f({\bf 0}) = 0$ (but does not contain non-monotone functions).
\iffull In Appendix \ref{sec:rademacher}
\else In the full version
\fi
we show that Rademacher complexity of a set of vectors, a powerful and well-studied tool in statistical learning theory \cite{KoltchinskiiPanchenko:00,BartlettMendelson:02}, gives an equivalent way to define XOS functions.
We also show that the class of monotone self-bounding functions is stricly broader than than of XOS functions.

A broader class is that of {\em self-bounding functions}. Self-bounding functions were defined by Boucheron, Lugosi and Massart
\cite{BoucheronLM:00} and further generalized by McDiarmid and Reed \cite{MR06} as a unifying class of functions that enjoy strong concentration properties. Here, we define self-bounding functions in the special case of $\zo^n$ as follows.
A function $f:\zo^n \rightarrow \RR$ is called $a$-self-bounding, if $f$ is $1$-Lipschitz and for all $x \in \zo^n$,
$$ \sum_{i=1}^{n} (f(x) - f(x \oplus e_i))_+ \leq a f(x),$$
where $x \oplus e_i$ is $x$ with $i$-th bit flipped and $(\alpha)_+$ denotes $\max\{0,\alpha\}$. 
The $1$-Lipschitz condition does not play a role in this paper, as we normalize functions to have values in the $[0,1]$ range.
Self-bounding functions subsume fractionally subadditive functions, and 2-self-bounding functions subsume (possibly non-monotone) submodular functions.  See \cite{FeldmanVondrak:13arxiv} for a more detailed discussion of these classes of functions.

The $\ell_1$ and $\ell_2$-norms of $f:\zon\rightarrow \RR$ are defined by $\|f\|_1 =  \E_{x \sim \U} [|f(x)|]$ and $\|f\|_2 =  (\E_{x \sim \U} [f(x)^2])^{1/2}$, respectively, where $\U$ is the uniform distribution.
\iffull
\begin{definition}[Discrete derivatives]
\fi
For $x \in \zon$, $b \in \zo$ and $i \in n$, let $x_{i\leftarrow b}$ denote the vector in $\zo^n$ that equals $x$  with $i$-th coordinate set to $b$. For a function $f:\zon \rightarrow \RR$ and index $i \in [n]$ we define
$\partial_i f(x) = f(x_{i\leftarrow 1}) - f(x_{i\leftarrow 0})$.
We also define $\partial_{i,j} f(x) = \partial_i \partial_j f(x)$.
\iffull
\end{definition}
\fi

\iffull
A function is monotone (non-decreasing) if and only if for all $i\in [n]$ and $x\in \zo^n$, $\partial_i f(x) \geq 0$.
For a submodular function, $\partial_{i,j} f(x) \leq 0$, by considering the submodularity condition for $x_{i \leftarrow 0, j \leftarrow 0}$, $x_{i \leftarrow 0, j \leftarrow 1}$, $x_{i \leftarrow 1, j \leftarrow 0}$, and  $x_{i \leftarrow 1, j \leftarrow 1}$.
\fi

\iffull
\smallskip
\noindent{\bf Absolute error vs.~error relative to norm:}
In our results, we typically assume that the values of $f(x)$ are in a bounded interval $[0,1]$,
and our goal is to learn $f$ with an additive error of $\epsilon$. Some prior work considered an error relative to the norm of $f$, for example at most $\epsilon \|f\|_2$ \cite{CheraghchiKKL:12}. In fact, it is known that for a non-negative submodular, XOS or self-bounding function $f$, $\|f\|_2 = \Omega(\|f\|_\infty)$ \cite{Feige:06,FeigeMV:07,FeldmanKV14} and hence this does not make much difference. If we scale $f(x)$ by $\frac{1}{4 \|f\|_2}$, we obtain a function with values in $[0,1]$ and learning the original function within an additive error of $\epsilon \|f\|_2$ is equivalent to learning the scaled function within an error of $\epsilon/4$.
\fi

\smallskip
\noindent{\bf Fourier Analysis:}
\label{sec:lowsens-prelims}
We rely on the standard Fourier transform representation of real-valued functions over $\zon$ as linear combinations of parity functions.
For $S \subseteq [n]$, the parity function $\chi_S:\zon \rightarrow \on$ is defined by
$ \chi_S(x) = (-1)^{\sum_{i \in S} x_i}$.
The Fourier expansion of $f$ is given by $f(x) = \sum_{S \subseteq [n]} \hat{f}(S) \chi_S(x)$. The {\em Fourier degree} of $f$ is the largest $|S|$ such that $\hat{f}(S) \neq 0$. Note that Fourier degree of $f$ is exactly the polynomial degree of $f$ when viewed over $\on^n$ instead of $\zon$ and therefore it is also equal to the polynomial degree of $f$ over $\zon$. 

For degree $d$, let $W^{d}(f) = \sum_{S \subseteq [n],\ |S| = d} (\hat{f}(S))^2$ and $W^{>d}(f) = \sum_{i > d} W^i(f)$.
For any function $f$, Parseval's identity states that $\|f\|_2^2 = \sum_{S \subseteq [n]} (\hat{f}(S))^2$.
This implies that the degree $d$ polynomial closest in $\ell_2$ distance to $f$ is precisely $p(x) = \sum_{S \subseteq [n],\ |S| \leq d} \hat{f}(S) \chi_S(x)$ and $\|f - p\|_2 = \sqrt{W^{>d}(f)}$. In other words, $\aedeg(f) = d$ if and only if $d$ is the smallest such that $W^{>d}(f) \leq \eps^2$.
\iffull

\fi
Observe that:
$\partial_i f(x) = -2 \sum_{S \ni i}\hat{f}(S)\chi_{S\setminus\{i\}}(x)$, and
$\partial_{i,j} f(x) = 4 \sum_{S \ni i,j}\hat{f}(S)\chi_{S\setminus\{i,j\}}(x)$.


\section{Degree $O(1/\epsilon)$ approximation for XOS functions}
\label{sec:xos}

In this section, we consider XOS functions $f:\zo^n \rightarrow \RR_+$,
$ f(x) = \max_{c \in C} \sum_{i=1}^{n} w_{ci} x_i,$
where, $w_{ci} \geq 0$ are nonnegative weights. We call each $c \in C$ a {\em clause} of the XOS function.

We recall that XOS functions, and more generally self-bounding functions, satisfy the following inequality for each $x \in \zo^n$:
$ \sum_{i=1}^{n} (f(x) - f(x \oplus e_i))_+  \leq f(x).$
In particular, for XOS functions (which are monotone), this can be written as
\begin{equation}
\label{eq:self-bound}
 \sum_{i: x_i=1} \partial_i f(x) \leq f(x).
\end{equation}
This leads to a bound of the form $\sum_S |S| \hat{f}^2(S) = O(\|f\|^2_2)$, which implies that degree $O(1/\epsilon^2)$ is sufficient to approximate XOS functions within $\ell_2$-error $\epsilon$. Here, we aim to improve the degree bound from $O(1/\epsilon^2)$ to $O(1/\epsilon)$. For this purpose, we seek a ``second-degree variant" of inequality (\ref{eq:self-bound}), using the second-degree derivatives $$\partial_{ij} f(x) = f(x_{i \leftarrow 1, j \leftarrow 1}) - f(x_{i \leftarrow 1, j \leftarrow 0}) - f(x_{i \leftarrow 0, j \leftarrow 1}) + f(x_{i \leftarrow 0, j \leftarrow 0}).$$
(For $i=j$, we define $\partial_{ii} f(x) = 0$.)
Our plan is to use these expressions as follows.

\begin{lemma}
\label{lem:degree-2-bound}
For any function $f:\zo^n \rightarrow \RR_+$ and any $1 \leq k \leq n$,
$$ \sum_{|S| > k} \hat{f}^2(S) \leq \frac{1}{16k^2} \sum_{i,j=1}^{n} \| \partial_{ij} f \|_2^2.$$
\end{lemma}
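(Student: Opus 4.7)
The plan is a direct Parseval calculation using the Fourier expansion of $\partial_{ij}f$ recorded in the preliminaries. The only real content is bookkeeping of constants and counting, with essentially no obstacle.

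First I would invoke the formula $\partial_{ij} f(x) = 4 \sum_{S \ni i,j} \hat{f}(S) \chi_{S\setminus\{i,j\}}(x)$, valid for $i \neq j$ (the $i=j$ case contributes nothing since $\partial_{ii}f \equiv 0$ by convention). Since $\{\chi_{S\setminus\{i,j\}}\}_{S \ni i,j}$ are distinct parity functions in $S \setminus \{i,j\}$, orthonormality / Parseval gives
$$\|\partial_{ij} f\|_2^2 = 16 \sum_{S \ni i,j} \hat{f}^2(S).$$

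Next I would sum over ordered pairs $(i,j)$ with $i \neq j$. Each subset $S$ with $|S| \geq 2$ is counted exactly $|S|(|S|-1)$ times (as the number of ordered pairs of distinct elements inside $S$), while singletons and the empty set do not contribute. Thus
$$\sum_{i,j=1}^{n} \|\partial_{ij} f\|_2^2 = 16 \sum_{S} |S|(|S|-1)\, \hat{f}^2(S).$$

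Finally, I would restrict this nonnegative sum to $|S| > k$. For such $S$ one has $|S| \geq k+1$, so $|S|(|S|-1) \geq (k+1)k \geq k^2$. Therefore
$$\sum_{i,j=1}^{n} \|\partial_{ij} f\|_2^2 \;\geq\; 16 \sum_{|S|>k} |S|(|S|-1)\, \hat{f}^2(S) \;\geq\; 16 k^2 \sum_{|S|>k} \hat{f}^2(S),$$
and rearranging yields the claimed bound. The only steps that require care are keeping the factor $16$ (from the $4$ in the Fourier expansion of $\partial_{ij}f$), treating the diagonal $i=j$ as zero, and noting that ordered rather than unordered pairs account for the $|S|(|S|-1)$ multiplicity — there is no genuine difficulty beyond this bookkeeping.
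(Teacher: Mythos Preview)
Your proposal is correct and follows essentially the same argument as the paper: compute $\|\partial_{ij}f\|_2^2 = 16\sum_{S \ni i,j}\hat f^2(S)$ via Parseval, sum over ordered pairs to get the factor $|S|(|S|-1)$, and then lower-bound by $16k^2\sum_{|S|>k}\hat f^2(S)$. The only cosmetic difference is that you note the slightly sharper $|S|(|S|-1)\ge (k+1)k$ before relaxing to $k^2$, whereas the paper goes directly to $k^2$.
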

\begin{proof}
For every pair $i \neq j \in [n]$, we have
$ \| \partial_{ij} f \|_2^2 = 16 \sum_{S: i,j \in S} \hat{f}^2 (S).$
Summing up over all choices of $i \neq j$, each set $S$ appears $|S| (|S|-1)$ times:
$$ \sum_{i \neq j} \| \partial_{ij} f \|_2^2 = 16 \sum_{S \subseteq [n]} |S| (|S|-1) \hat{f}^2(S).$$
Therefore, we obtain
$ \sum_{i,j=1}^{n} \| \partial_{ij} f \|_2^2 \geq 16 \sum_{S \subseteq [n]} |S| (|S|-1) \hat{f}^2(S) \geq 16 k^2 \sum_{|S|>k} \hat{f}^2(S).$
\end{proof}

Our goal in the following is to bound the expression $\sum_{i,j=1}^{n} \| \partial_{ij} f \|_2^2$.
First we prove the following.

\begin{lemma}
\label{lem:square-self-bound}
For an XOS function $f:\zo^n \rightarrow \RR_+$ and any $x \in \zo^n$,
\begin{equation}
\label{eq:square-self-bound}
 \sum_{i,j: x_i=x_j=1} (\partial_{ij} f(x))^2 \leq 5 (f(x))^2.
\end{equation}
\end{lemma}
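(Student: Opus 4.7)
The plan is to exploit the XOS representation of $f$. Fix $x$ with $x_i = x_j = 1$ and $f(x) > 0$, pick a clause $c^* = c^*(x)$ achieving the maximum so that $f(x) = \sum_{k: x_k = 1} w_{c^* k}$, and write $w_k := w_{c^* k}$ for brevity. Because the linear function $g_{c^*}$ is a pointwise lower bound on $f$, the quantities
\[
\eta_i := f(x \setminus i) - (f(x) - w_i), \qquad \eta_{ij} := f(x \setminus \{i,j\}) - (f(x) - w_i - w_j)
\]
are nonnegative. A direct calculation yields the identity $\partial_{ij} f(x) = \eta_{ij} - \eta_i - \eta_j$ together with $\partial_i f(x) = w_i - \eta_i \in [0, w_i]$. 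Monotonicity of $f$ (e.g., $f(x \setminus \{i,j\}) \leq f(x \setminus i)$) further gives $\eta_{ij} \leq \eta_i + w_j$ and, symmetrically, $\eta_{ij} \leq \eta_j + w_i$.

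I then split the sum by the sign of $\partial_{ij} f(x)$. In Case~1, where $\partial_{ij} f(x) \geq 0$, the two monotonicity bounds combine into $\partial_{ij} f(x) \leq \min(w_j - \eta_j,\, w_i - \eta_i) = \min(\partial_j f(x),\, \partial_i f(x))$, so $(\partial_{ij} f(x))^2 \leq \partial_i f(x) \cdot \partial_j f(x)$. Summing and applying the self-bound (\ref{eq:self-bound}) at $x$ gives
\[
\sum_{(i,j) \in \text{Case 1}} (\partial_{ij} f(x))^2 \leq \Bigl(\sum_{i: x_i = 1} \partial_i f(x)\Bigr)^2 \leq f(x)^2.
\]

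Case~2, where $\partial_{ij} f(x) < 0$, is the main difficulty. Here $|\partial_{ij} f(x)| = \partial_i f(x \setminus j) - \partial_i f(x) \leq \partial_i f(x \setminus j)$, and symmetrically $\leq \partial_j f(x \setminus i)$, giving $(\partial_{ij} f(x))^2 \leq \partial_i f(x \setminus j) \cdot \partial_j f(x \setminus i)$. To bound the resulting bilinear sum I would invoke the XOS upper bounds $\partial_i f(x \setminus j) \leq w_{c^*(x \setminus j) i}$ and $\partial_j f(x \setminus i) \leq w_{c^*(x \setminus i) j}$, and then use that for each fixed $i$ one has $\sum_k w_{c^*(x \setminus i) k} = f(x \setminus i) \leq f(x)$ together with the first-order self-bound $\sum_k \partial_k f(x \setminus i) \leq f(x \setminus i) \leq f(x)$ applied at the neighbor point $x \setminus i$. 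A careful choice of summation order, combined with a Cauchy--Schwarz-type rearrangement, should collapse the double sum into a product of two $f(x)$-bounded quantities and give at most $4 f(x)^2$ for this case.

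The principal obstacle is precisely this Case~2 estimate: a naive AM--GM application to $\partial_i f(x \setminus j) \cdot \partial_j f(x \setminus i)$ only gives a bound that grows linearly in $|\{k : x_k = 1\}|$, and the argument must exploit the XOS structure at the neighbor points $x \setminus i$, $x \setminus j$ to avoid this blowup. The constant $5$ itself is not critical; any absolute constant suffices to deduce the degree $O(1/\epsilon)$ bound from Lemma~\ref{lem:degree-2-bound}.
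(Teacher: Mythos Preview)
Your Case~1 argument is correct and matches the paper's: the paper writes $\partial_{ij} f(x) \leq \min\{w_{ci}, w_{cj}\}$ rather than $\min\{\partial_i f(x), \partial_j f(x)\}$, but both collapse to $(f(x))^2$ after summing.

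Your Case~2 argument has a genuine gap, and it is precisely the obstacle you flag. The symmetric estimate $(\partial_{ij} f(x))^2 \leq \partial_i f(x\setminus j)\cdot \partial_j f(x\setminus i)$, together with the self-bound at neighbor points, gives you only column-sum control on the matrix $a_{ij} := \partial_i f(x\setminus j)$: for each fixed $j$, $\sum_i a_{ij} \leq f(x)$. But column-sum bounds alone cannot force $\sum_{i,j} a_{ij} a_{ji} = O(f(x)^2)$; an involution-supported matrix with a single entry $f(x)$ per column already gives $\sum_{i,j} a_{ij} a_{ji} = |S|\,(f(x))^2$, and Cauchy--Schwarz or AM--GM only recovers this linear blowup. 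The further bound $\partial_i f(x\setminus j) \leq w_{c^*(x\setminus j)\,i}$ does not help either: the clause $c^*(x\setminus j)$ varies with $j$, so the inner sum over $j$ never factors.

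The missing idea, and this is exactly the paper's trick, is to break the symmetry: bound one copy of $|\partial_{ij} f(x)|$ using only the clause $c$ at the \emph{original} point $x$, via
\[
\partial_i f(x\setminus j) \;\leq\; f(x) - f(x\setminus\{i,j\}) \;\leq\; w_{ci} + w_{cj},
\]
and the other copy using a clause $c'_i$ maximizing at $x\setminus i$, so that $\partial_j f(x\setminus i) \leq w_{c'_i\, j}$. Then $(\partial_{ij} f(x))^2 \leq (w_{ci}+w_{cj})\,w_{c'_i\, j}$, where crucially the second factor depends on $i$ but \emph{not} on $j$. Order the coordinates so that $w_{ci}$ is nondecreasing and sum over $j<i$ (absorbing a factor $2$), which lets you replace $(w_{ci}+w_{cj})$ by $2w_{ci}$; now $\sum_{j} w_{c'_i\, j} \leq f(x)$ and $\sum_i w_{ci} = f(x)$ give the $4(f(x))^2$ you were aiming for. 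Your ``careful choice of summation order'' has to become this ordering-by-$w_{ci}$ device, and the enabling step is routing one factor through the fixed clause at $x$ rather than through a neighbor clause.
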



\medskip

\begin{proof}
Let $S$ denote the set of coordinates such that $x_i = 1$. Let $c \in C$ be a clause that achieves the maximum, defining $f(x) = \sum_{j \in S} w_{cj}$ (if there are multiple such clauses, fix one arbitrarily). Fix any $i \in S$ and define $c'_i \in C$ to be a clause achieving the maximum that defines $f(x_{i \leftarrow 0}) = \sum_{j \in S \setminus \{i\}} w_{c_i 'j}$. Fix another $j \in S$. We claim the following bounds:
\begin{equation}
\label{eq:square-bound}
 -\min \{ w_{ci} + w_{cj}, w_{c'_i j} \} \leq \partial_{ij} f(x) \leq \min \{ w_{ci}, w_{cj} \}.
\end{equation}

First, assume that $\partial_{ij} f(x) > 0$. Since $f$ is monotone, we have $\partial_{ij} f(x) \leq \min \{ \partial_i f(x), \partial_j f(x) \}$. Since $c$ is the clause defining $f(x)$, $f(x)$ cannot decrease by more than $w_{ci}$ when flipping $x_i$ from $1$ to $0$. Similarly, $f(x)$ cannot decrease by more than $w_{cj}$ when flipping $x_j$ from $1$ to $0$. Therefore, $\partial_{ij} f(x) \leq \min \{ w_{ci}, w_{cj} \}$.

Second, assume that $\partial_{ij} f(x) < 0$. Here we have $\partial_{ij} f(x) \geq -\min \{ \partial_i f(x_{j \leftarrow 0}), \partial_j f(x_{i \leftarrow 0}) \}$. Recall that after flipping $x_i$ from $1$ to $0$, $c'_i$ is a maximizing clause, and therefore $\partial_j f(x_{i \leftarrow 0}) = f(x_{i \leftarrow 0, j \leftarrow 1}) - f(x_{i \leftarrow 0, j \leftarrow 0}) \leq w_{c'_i j}$. To bound $\partial_i f(x_{j \leftarrow 0})$, we use the following (by monotonicity): $\partial_i f(x_{j \leftarrow 0}) = f(x_{i \leftarrow 1, j \leftarrow 0}) - f(x_{i \leftarrow 0, j \leftarrow 0}) \leq f(x_{i \leftarrow 1, j \leftarrow 1}) - f(x_{i \leftarrow 0, j \leftarrow 0}) \leq w_{ci} + w_{cj}$, using the fact that $c$ is a maximizing clause for $f(x)$. (We remark that although this seems like a weak bound, it could be actually tight.)  This proves (\ref{eq:square-bound}).

Next, we sum up over all pairs of coordinates $i,j \in S$. Note that $c \in C$ is fixed before choosing $i,j$, and we can assume for convenience that the coordinates are ordered so that $i \leq j$ implies $w_{ci} \leq w_{cj}$. We have
\begin{eqnarray*}
 \sum_{i,j \in S} (\partial_{ij} f(x))^2 & = & \sum_{i,j \in S: \partial_{ij} f(x) > 0} (\partial_{ij} f(x))^2 +  2 \sum_{i,j \in S: i>j, \partial_{ij} f(x) < 0} (\partial_{ij} f(x))^2 \\
& \leq & \sum_{i,j \in S} w_{ci} w_{cj} + 2 \sum_{i,j \in S, i > j} (w_{ci} + w_{cj}) w_{c'_i j} \\
& \leq & \sum_{i,j \in S} w_{ci} w_{cj} + 4 \sum_{i,j \in S, i > j} w_{ci} w_{c'_i j} \\
& = & \left( \sum_{i \in S} w_{ci} \right)^2 + 4 \sum_{i \in S} \left( w_{ci} \sum_{j \in S, j<i} w_{c'_i j} \right) \\
& \leq & (f(x))^2 + 4 (f(x))^2 = 5 (f(x))^2
\end{eqnarray*}
since $\sum_{j \in S} w_{c' j} \leq f(x)$ for every clause $c' \in C$.
\end{proof}

\begin{lemma}
\label{lem:exp-square-bound}
For any XOS function $f:\zo^n \rightarrow \RR_+$,
$$ \sum_{i,j=1}^{n} \| \partial_{ij} f \|_2^2 \leq 20  \| f \|_2^2.$$
\end{lemma}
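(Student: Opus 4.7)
The plan is to deduce this $\ell_2$ bound by integrating the pointwise inequality of Lemma \ref{lem:square-self-bound} over the uniform distribution on $\{0,1\}^n$. Lemma \ref{lem:square-self-bound} bounds $\sum_{i,j: x_i=x_j=1}(\partial_{ij}f(x))^2$ by $5(f(x))^2$, so if we can relate $\|\partial_{ij}f\|_2^2$ to $\E_x[(\partial_{ij}f(x))^2\,\mathbf{1}[x_i=x_j=1]]$, then a single swap of sum and expectation will finish the proof.

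The key observation that makes this possible is that $\partial_{ij}f(x)$ does not actually depend on the coordinates $x_i$ and $x_j$: by definition it evaluates $f$ at all four combinations $(x_i,x_j)\in\{0,1\}^2$ while fixing the remaining coordinates, so flipping $x_i$ or $x_j$ leaves the value unchanged. Consequently, for any fixed $i\neq j$, the random variable $(\partial_{ij}f(x))^2$ is independent of the (uniform, independent) bits $x_i,x_j$, which gives
\[
\|\partial_{ij}f\|_2^2 \;=\; \E_x\bigl[(\partial_{ij}f(x))^2\bigr] \;=\; 4\,\E_x\bigl[(\partial_{ij}f(x))^2 \cdot \mathbf{1}[x_i=x_j=1]\bigr].
\]

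Summing over all $i,j\in[n]$ (the diagonal terms vanish since $\partial_{ii}f\equiv 0$ by convention) and interchanging the finite sum with the expectation, I get
\[
\sum_{i,j=1}^n \|\partial_{ij}f\|_2^2 \;=\; 4\,\E_x\!\left[\,\sum_{i,j:\, x_i=x_j=1}(\partial_{ij}f(x))^2\,\right] \;\leq\; 4\cdot 5\,\E_x[(f(x))^2] \;=\; 20\,\|f\|_2^2,
\]
where the inequality is Lemma \ref{lem:square-self-bound} applied pointwise. I do not expect any real obstacle: Lemma \ref{lem:square-self-bound} does the combinatorial work, and this lemma is essentially just the observation that a second discrete derivative is orthogonal to the two variables it is taken with respect to, so its $\ell_2$ norm can be computed over any fixing of those two coordinates at the cost of a factor $4$.
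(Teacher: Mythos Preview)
Your proof is correct and is essentially identical to the paper's: the paper also observes that $\partial_{ij}f(x)$ is independent of $x_i,x_j$ to write $\|\partial_{ij}f\|_2^2 = \frac{4}{2^n}\sum_{x:x_i=x_j=1}(\partial_{ij}f(x))^2$, swaps the sums over $(i,j)$ and $x$, and applies Lemma~\ref{lem:square-self-bound} pointwise to obtain $20\|f\|_2^2$. The only difference is cosmetic---you phrase things in terms of expectations and indicators rather than counting sums.
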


\begin{proof}
Since all norms here are over the uniform distribution, we have
$ \| \partial_{ij} f \|_2^2 = \frac{1}{2^n} \sum_{x \in \zo^n} (\partial_{ij} f(x))^2.$
Note that Lemma~\ref{lem:square-self-bound} counts only the contributions from points such that $x_i = x_j = 1$.
However, $\partial_{ij} f(x)$ does not depend on the values of $x_i$ and $x_j$. Therefore, we can write equivalently
$$ \| \partial_{ij} f \|_2^2 = \frac{4}{2^n} \sum_{x \in \zo^n: x_i=x_j=1} (\partial_{ij} f(x))^2.$$
Summing up over all $i,j$ and switching the sums, we get
$$ \sum_{i,j=1}^{n} \| \partial_{ij} f \|_2^2 = \frac{4}{2^n} \sum_{i,j=1}^{n} \sum_{x \in \zo^n: x_i=x_j=1} (\partial_{ij} f(x))^2
 = \frac{4}{2^n} \sum_{x \in \zo^n} \sum_{i,j: x_i=x_j=1} (\partial_{ij} f(x))^2.$$
Now, we can apply Lemma~\ref{lem:square-self-bound} to conclude that
$ \sum_{i,j=1}^{n} \| \partial_{ij} f \|_2^2  \leq \frac{4}{2^n} \sum_{x \in \zo^n} 5 (f(x))^2 = 20 \| f \|_2^2.$
\end{proof}

We can conclude as follows.

\begin{corollary}
\label{cor:XOS-degree-upper}
For any XOS function $f:\zo^n \rightarrow [0,1]$, there is a polynomial $p$ of degree $O(1/\epsilon)$ such that
$ \| f - p \|_2 \leq \epsilon.$
\end{corollary}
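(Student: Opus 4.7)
The plan is to chain together the two main lemmas we have already established and observe that, since $f$ takes values in $[0,1]$, its $\ell_2$ norm is bounded by $1$. Concretely, I would let $p$ be the Fourier truncation of $f$ at degree $k$, that is, $p(x) = \sum_{|S| \le k} \hat{f}(S) \chi_S(x)$, which by Parseval is the degree-$k$ polynomial minimizing $\|f - p\|_2$, with $\|f - p\|_2^2 = \sum_{|S| > k} \hat{f}^2(S)$.

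Applying Lemma~\ref{lem:degree-2-bound} to this tail sum and then Lemma~\ref{lem:exp-square-bound} to the resulting sum of squared second-degree derivatives gives
\[
 \|f - p\|_2^2 \;\le\; \frac{1}{16 k^2} \sum_{i,j=1}^{n} \|\partial_{ij} f\|_2^2 \;\le\; \frac{20}{16 k^2} \|f\|_2^2 \;=\; \frac{5}{4 k^2} \|f\|_2^2.
\]
Because $f:\zo^n \to [0,1]$, we have $\|f\|_2^2 \le 1$, so $\|f - p\|_2^2 \le \frac{5}{4 k^2}$. Choosing $k = \lceil \sqrt{5}/(2\epsilon) \rceil = O(1/\epsilon)$ makes the right-hand side at most $\epsilon^2$, yielding $\|f - p\|_2 \le \epsilon$ as desired.

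There is essentially no obstacle here: all the work was absorbed into proving Lemma~\ref{lem:square-self-bound} (the degree-two self-bounding inequality for XOS functions) and its averaged consequence Lemma~\ref{lem:exp-square-bound}. The only care point is that Lemma~\ref{lem:degree-2-bound} is stated for any nonnegative real-valued function, so it applies here without modification, and that the range $[0,1]$ (rather than $\RR_+$) is what lets us drop the $\|f\|_2^2$ factor; for an XOS function with larger range one would simply get an approximation error scaling with $\|f\|_2$.
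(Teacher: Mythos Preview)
Your proof is correct and follows essentially the same route as the paper: apply Lemma~\ref{lem:degree-2-bound} to bound the Fourier tail by $\frac{1}{16k^2}\sum_{i,j}\|\partial_{ij} f\|_2^2$, invoke Lemma~\ref{lem:exp-square-bound} together with $\|f\|_2 \le 1$ to get $\frac{5}{4k^2}$, and take $k = \Theta(1/\epsilon)$. The only cosmetic difference is that you explicitly take the ceiling to make $k$ an integer.
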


\begin{proof}
By Lemma~\ref{lem:exp-square-bound}, we have $\sum_{i,j=1}^{n} \| \partial_{ij} f \|_2^2 \leq 20$, since $\| f \|_2 \leq 1$ here.
Therefore, applying Lemma~\ref{lem:degree-2-bound},
$ \sum_{|S|>k} \hat{f}^2(S) \leq \frac{5}{4k^2}.$
We choose $k =\sqrt{5} / (2\epsilon)$, which ensures that $\sum_{|S|>k} \hat{f}^2(S) \leq \epsilon^2$ and therefore the polynomial consisting of all terms up to degree $k$ approximates $f$ within $\ell_2$-error $\epsilon$.
\end{proof}


\section{Degree $\tilde{O}(1/\epsilon^{4/5})$ approximation for submodular functions}
\label{sec:submod}

In this section, we show that the $O(1/\epsilon)$ degree approximation for XOS functions can be improved to $\tilde{O}(1/\epsilon^{4/5})$ for submodular functions. Interestingly, $1/\epsilon^{4/5}$ turns out to be the right answer for submodular functions (ignoring logarithmic factors).

We build on the technique of bounding $\sum_{i,j} \| \partial_{ij} f \|_2^2$, which in the case of submodular functions seems particularly appropriate since we know that $\partial_{ij} f(x) \leq 0$ for every $x \in \zo^n$, which simplifies certain expressions. However, Lemma~\ref{lem:exp-square-bound} itself cannot be improved to a sub-constant bound --- it is easy to see that $\sum_{i,j}  \| \partial_{ij} f \|_2^2$ could be at least $\| f \|^2_2$ for a submodular function (e.g., $f(x) = 1 - (1-x_1) (1-x_2)$). However, as we show below this can happen only when some variables have a very large influence. Our goal is to handle such variables separately and prove that under the assumption of low influences, the quantity $\sum_{i,j} \| \partial_{ij} f \|_2^2$ cannot be large.

Once we can control the influences of individual variables (for now imagine that we can control $\| \partial_i f \|_2$),
we use the following way of bounding the sum of second partial derivatives.

\begin{lemma}
\label{lem:square-root-bound}
For any submodular function $f:\zo^n \rightarrow \RR$, any coordinate $i$ and a subset of coordinates $A$,
$$ \sum_{j \in A} \|\partial_{ij} f\|_1 \leq 2 \sqrt{|A|} \ \| \partial_i f\|_2.$$
\end{lemma}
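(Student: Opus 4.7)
The plan is to exploit the pointwise non-positivity of $\partial_{ij} f$ for submodular $f$, which collapses each $L^1$ norm on the left-hand side into a single Fourier coefficient, and then close with Cauchy--Schwarz. Set $g := \partial_i f$. Submodularity of $f$ is equivalent to $g$ being non-increasing in every coordinate $x_j$ with $j \neq i$, i.e.\ to $\partial_{ij} f(x) = \partial_j g(x) \le 0$ for every $x$. Therefore for each $j \in A$,
\[
\|\partial_{ij} f\|_1 \;=\; -\,\E_x[\partial_{ij} f(x)],
\]
a plain expectation rather than an expectation of an absolute value.

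Next I would identify this expectation with a Fourier coefficient of $g$. Using the preliminaries' expansion $\partial_j g(x) = -2 \sum_{S \ni j} \hat g(S)\, \chi_{S \setminus \{j\}}(x)$, taking expectations kills every term except $S = \{j\}$, giving $\E_x[\partial_j g(x)] = -2\hat g(\{j\})$ and hence $\|\partial_{ij} f\|_1 = 2\hat g(\{j\})$. The sign is consistent: $2\hat g(\{j\}) = \E[g \mid x_j=0] - \E[g \mid x_j=1] \ge 0$ by the monotonicity just noted. The case $j = i$ is harmless, since $\partial_{ii}f \equiv 0$ by convention and $\hat g(\{i\}) = 0$ because $g$ does not depend on $x_i$. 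Summing over $j \in A$ and applying Cauchy--Schwarz followed by Parseval,
\[
\sum_{j\in A}\|\partial_{ij} f\|_1 \;=\; 2\sum_{j\in A}\hat g(\{j\}) \;\le\; 2\sqrt{|A|}\sqrt{\sum_{j\in A}\hat g(\{j\})^2} \;\le\; 2\sqrt{|A|}\,\|g\|_2,
\]
which is exactly $2\sqrt{|A|}\,\|\partial_i f\|_2$, as desired.

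I do not expect any real obstacle: once the sign of $\partial_{ij} f$ is used, the rest is one Fourier identity plus Cauchy--Schwarz. The one subtle point is that the first step really does need the \emph{pointwise} inequality $\partial_{ij} f \le 0$ from submodularity, not merely $\E[\partial_{ij} f] \le 0$; a proof that tried to shortcut via $\|\partial_{ij} f\|_1 \le \|\partial_{ij} f\|_2$ and then Parseval would end up with a bound involving $\sum_{S \ni i} |S|\,\hat f^2(S)$ and would not recover the tight $\sqrt{|A|}\,\|\partial_i f\|_2$ scaling the lemma asserts.
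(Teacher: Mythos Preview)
Your proof is correct and is essentially the same argument as the paper's. The paper writes $\sum_{j\in A}\|\partial_{ij}f\|_1 = 2\,\E_x\big[\partial_i f(x)\sum_{j\in A}(-1)^{x_j}\big]$ and applies Cauchy--Schwarz in $L^2$ with $\big\|\sum_{j\in A}\chi_{\{j\}}\big\|_2=\sqrt{|A|}$; you instead read off $\|\partial_{ij}f\|_1 = 2\,\widehat{\partial_i f}(\{j\})$ and apply Cauchy--Schwarz on the coefficient vector followed by Parseval --- by Plancherel these are literally the same inner product and the same bound.
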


Note the improvement from $2 |A| \| \partial_i f\|_1$ (which is trivial) to $2 \sqrt{|A|} \| \partial_i f\|_2$ on the right-hand-side. 

\medskip

\begin{proof}
Since $f$ is submodular, we have $\partial_{ij} f(x) \leq 0$, and
\begin{eqnarray*}
\sum_{j \in A} \|\partial_{ij} f\|_1 = \sum_{j \in A} \E_{x \sim \U}[ -\partial_{ij} f(x)]
 = 2 \sum_{j \in A} \E_{x \sim \U}[ (-1)^{x_j} \partial_i f(x)] = 2 \cdot \E_{x \sim \U}[ \partial_i f(x) g(x) ]  \leq 2 \| \partial_i f \|_2 \| g \|_2
\end{eqnarray*}
where $g(x) = \sum_{j \in A} (-1)^{x_j}$ and we used the Cauchy-Schwartz inequality at the end.
It is easy to check that
$ \|g\|_2 = \sqrt{|A|} $ which proves the lemma.
\end{proof}

\smallskip

First, let us sketch how this argument leads to an $O(1/\epsilon^{4/5})$ bound in the case of {\em totally symmetric} submodular functions,  to illustrate some of the ideas employed in the general case.

\paragraph{Totally symmetric submodular functions.}
Let us assume that $f:\zo^n \rightarrow [0,1]$ is totally symmetric in the sense that $f(x)$ depends only on $\sum_{i=1}^{n} x_i$.
Note that such a function is simply a concave function of $\sum_{i=1}^{n} x_i$.
First, we observe that the influences of individual variables in this case cannot be too large.

\begin{lemma}
For any totally symmetric submodular function $f:\zo^n \rightarrow [0,1]$ and $x \in \zo^n$ such that $\frac{n}{3} \leq \sum_{i=1}^{n} x_i \leq \frac{2n}{3}$, we have $|\partial_i f(x)| \leq \frac{3}{n}$ for all $i \in [n]$.
\end{lemma}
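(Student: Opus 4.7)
The plan is to reduce the claim to a statement about discrete concave functions of one variable. Since $f$ is totally symmetric, I would write $f(x) = g(|x|)$ where $|x| = \sum_{i=1}^{n} x_i$ and $g:\{0,1,\ldots,n\} \to [0,1]$. Standard manipulation of the submodularity inequality (applied to $A$, $A\cup\{i\}$, $A\cup\{j\}$, $A\cup\{i,j\}$) shows that $g$ is discretely concave, i.e., the first differences $d_k := g(k) - g(k-1)$ are non-increasing in $k$.

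Next I would express $\partial_i f(x)$ in terms of the $d_k$'s: if $x_i = 1$ and $|x| = k$, then $\partial_i f(x) = d_k$; if $x_i = 0$ and $|x| = k$, then $\partial_i f(x) = d_{k+1}$. Thus it suffices to bound $|d_j|$ for $j$ roughly in the range $[n/3, 2n/3]$.

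The key step is then a two-sided bound coming from the monotonicity of $(d_j)$ together with the range bound $g \in [0,1]$. Since the $d_j$ are non-increasing, I would write
\[ j \cdot d_j \leq \sum_{\ell=1}^{j} d_\ell = g(j) - g(0) \leq 1, \qquad (n-j+1) \cdot d_j \geq \sum_{\ell=j}^{n} d_\ell = g(n) - g(j-1) \geq -1, \]
which yields $-\frac{1}{n-j+1} \leq d_j \leq \frac{1}{j}$. Plugging in $j \in \{k, k+1\}$ for $k \in [n/3, 2n/3]$, both $j$ and $n-j+1$ are at least $n/3$, so $|d_j| \leq 3/n$, which gives the claim.

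I do not expect any real obstacle here; the only thing to be a little careful about is covering both orientations ($x_i=0$ vs.\ $x_i=1$) simultaneously with the right range of indices, and checking that the boundary indices $k+1 \leq n$ and $k \geq 1$ are in the valid domain (which they are whenever $n$ is at least a small constant, while for very small $n$ the bound $3/n$ is vacuous since $|\partial_i f(x)| \leq \|f\|_\infty \leq 1$ in any case).
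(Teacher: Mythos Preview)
Your proposal is correct and is essentially the same argument as the paper's, just presented as a direct bound rather than a contradiction. The paper assumes $\partial_i f(x) > 3/n$, uses symmetry to get $\partial_j f(x) > 3/n$ for all $j$, and then invokes submodularity to obtain $f(x)-f({\bf 0}) \geq \sum_{j:x_j=1} \partial_j f(x) > \tfrac{3}{n}\cdot |x| \geq 1$, a contradiction (with the case $\partial_i f(x) < -3/n$ handled symmetrically); unwinding this is exactly your inequality $k\,d_k \leq \sum_{\ell\leq k} d_\ell = g(k)-g(0) \leq 1$ and its mirror image.
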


\begin{proof}
Assume that $\partial_i f(x) > \frac{3}{n}$ (the opposite case is similar). Since the function is totally symmetric, we actually have $\partial_j f(x) > \frac{3}{n}$ for every $j \in [n]$. Also, $\sum_{i=1}^{n} x_i \geq \frac{n}{3}$. By submodularity, $f(x) - f(0) \geq \sum_{j: x_j=1} \partial_j f(x) > \frac{3}{n} \sum_{j=1}^{n} x_j \geq 1$. This contradicts the fact that the values of $f(x)$ are in $[0,1]$.
\end{proof}

\iffull
To simplify the analysis, let us assume that in fact, $|\partial_i f(x)| = O(\frac{1}{n})$ for all $i \in [n]$ and all $x \in \zo^n$. This can be accomplished by modifying the function in the regions where $\sum_{i=1}^{n} x_i < \frac{n}{3}$ or $>\frac{2n}{3}$ in such a way that $\partial_i f(x)$ is constant in each region. For example, if $t$ is maximum such that $\partial_i f(x') > \frac{3}{n}$ for $\sum_{i=1}^{n} x'_i = t$, let $f(x') = F$ for this point $x'$ (and we know that $t = \sum_{i=1}^{n} x'_i < \frac{n}{3}$). We can set $f(x) = F - \frac{3}{n} (t - \sum_{i=1}^{n} x_i)$ whenever $\sum_{i=1}^{n} < t$. Similarly, we adjust the function for $\sum_{i=1}^{n} x_i > \frac{2n}{3}$. These are sets of small measure (under the uniform distribution) and so any approximation of the modified function also works well for the original function. In the following, we assume that $|\partial_i f(x)| = O(\frac{1}{n})$ everywhere. Now we can show the following bound.
\else
To simplify the analysis, let us assume that in fact, $|\partial_i f(x)| = O(\frac{1}{n})$ for all $i \in [n]$ and all $x \in \zo^n$. Since we proved this up to a set of small measure, this is not an issue as we argue in the full version.
\fi

\begin{lemma}
If $|\partial_i f(x)| = O(\frac{1}{n})$ for all $i \in [n]$ and $x \in \zo^n$, then
$ \sum_{i,j=1}^{n} \| \partial_{ij} f \|_2^2 = O\left(\frac{1}{\sqrt{n}} \right).$
\end{lemma}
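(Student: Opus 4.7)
My plan is to reduce the sum of squared $\ell_2$ norms to a sum of $\ell_1$ norms by paying a pointwise bound, and then control the $\ell_1$ sum via the square-root inequality of Lemma \ref{lem:square-root-bound}.

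The first step is to observe that the hypothesis $|\partial_i f(x)| = O(1/n)$ on first derivatives immediately gives a pointwise bound on the second derivatives. Indeed, $\partial_{ij} f(x) = \partial_i f(x_{j \leftarrow 1}) - \partial_i f(x_{j \leftarrow 0})$, so $|\partial_{ij} f(x)| \leq 2 \max_y |\partial_i f(y)| = O(1/n)$ uniformly in $x$. Consequently,
\[
 \|\partial_{ij} f\|_2^2 \;=\; \E_{x \sim \U}[(\partial_{ij} f(x))^2] \;\leq\; \|\partial_{ij} f\|_\infty \cdot \|\partial_{ij} f\|_1 \;\leq\; \frac{C}{n}\, \|\partial_{ij} f\|_1
\]
for some absolute constant $C$. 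Summing over $i,j$,
\[
 \sum_{i,j=1}^n \|\partial_{ij} f\|_2^2 \;\leq\; \frac{C}{n} \sum_{i=1}^n \sum_{j=1}^n \|\partial_{ij} f\|_1.
\]

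The second step is to apply Lemma \ref{lem:square-root-bound} with $A = [n]$ to the inner sum, which uses crucially that $\partial_{ij} f \leq 0$ for a submodular function. This gives $\sum_{j=1}^n \|\partial_{ij} f\|_1 \leq 2\sqrt{n}\, \|\partial_i f\|_2$. Combining with the previous display,
\[
 \sum_{i,j=1}^n \|\partial_{ij} f\|_2^2 \;\leq\; \frac{2C}{\sqrt{n}} \sum_{i=1}^n \|\partial_i f\|_2.
\]

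The third step is to bound $\|\partial_i f\|_2$ using the hypothesis: $\|\partial_i f\|_2 \leq \|\partial_i f\|_\infty = O(1/n)$, so $\sum_{i=1}^n \|\partial_i f\|_2 = O(1)$. Plugging this in yields $\sum_{i,j} \|\partial_{ij} f\|_2^2 = O(1/\sqrt{n})$, as desired.

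I do not expect any serious obstacle: the proof is essentially a chain of three estimates, each of which uses a single hypothesis (pointwise bound on $\partial_i f$, submodularity via Lemma \ref{lem:square-root-bound}, and the pointwise bound again). The only mildly non-obvious step is the first one, where one trades one factor of $\|\partial_{ij} f\|_\infty$ for one of the factors in $\|\partial_{ij} f\|_2^2$; this is what allows the eventual gain of $1/\sqrt{n}$ rather than merely $O(1)$ on the right-hand side.
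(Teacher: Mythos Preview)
Your proof is correct and follows essentially the same route as the paper: bound $\|\partial_{ij} f\|_2^2 \le \|\partial_{ij} f\|_\infty \|\partial_{ij} f\|_1 = O(1/n)\|\partial_{ij} f\|_1$, apply Lemma~\ref{lem:square-root-bound} with $A=[n]$ to convert $\sum_j \|\partial_{ij} f\|_1$ into $2\sqrt{n}\,\|\partial_i f\|_2$, and then use $\|\partial_i f\|_2 = O(1/n)$. The paper's argument is identical, just more terse.
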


\begin{proof}
Note that the assumption on partial derivatives also implies that $|\partial_{ij} f(x)| = O(\frac{1}{n})$.
We estimate $ \sum_{i,j=1}^{n} \| \partial_{ij} f \|_2^2$ as follows:
\begin{eqnarray*}
\sum_{i,j=1}^{n} \| \partial_{ij} f \|_2^2 =  O\left(\frac{1}{n} \right) \sum_{i,j=1}^{n} \| \partial_{ij} f \|_1
= O\left(\frac{1}{\sqrt{n}} \right) \sum_{i=1}^{n} \| \partial_{i} f \|_2
\end{eqnarray*}
using Lemma~\ref{lem:square-root-bound} with $A = [n]$. Since we assume that $|\partial_i f(x)| = O(\frac{1}{n})$, it follows that
$\| \partial_i f \|_2 = O(\frac{1}{n})$ for all $i \in [n]$ which proves the lemma.
\end{proof}

\smallskip

Now we can apply the method of bounding the Fourier tail above a certain level using Lemma~\ref{lem:degree-2-bound}:
$$ \sum_{|S|>k} \hat{f}^2(S) \leq \frac{16}{k^2} \sum_{i,j=1}^{n} \| \partial_{ij} f\|^2_2 = O\left( \frac{1}{k^2 \sqrt{n}} \right).$$
We choose $k = 1 / (\epsilon n^{1/4})$ in order to make the Fourier tail bounded by $O(\epsilon^2)$ as it should be.
Finally, note that if $n \leq 1 / \epsilon^{4/5}$, we can take trivially a polynomial of degree $n$. Therefore, the non-trivial case
is when $n > 1 / \epsilon^{4/5}$ and then we have $k = 1 / (\epsilon n^{1/4}) \leq 1 / \epsilon^{4/5}$. This proves that degree $O(1/\epsilon^{4/5})$ is sufficient for totally symmetric submodular functions.

\paragraph{General submodular functions.}
Let us turn now to the case of general submodular functions. The main complication here is that some variables can have large influences and we need to handle those separately. The main technical lemma here is that there cannot be too many variables of large influence, measured in a suitable way.
\iffull
The most technical part of the proof is to prove that there cannot be too many variables of large influence, and the influences decay relatively fast as we consider more variables. We also have to define ``influence" in a suitable way.
\fi
We denote by $\mu_p$ a product distribution on $\zo^n$ such that $\Pr_{x \sim \mu_p}[x_i=1] = p$ for each $i \in [n]$. We prove the following.

\begin{lemma}
\label{lem:submod-large-derivatives}
Let $f:\zo^n \rightarrow [0,1]$ be a submodular function, $0 < \delta,\epsilon < 1$, and let
$$ J(\epsilon,\delta) = \{ i \in [n]:  \Pr_{x \sim \mu_{1/2}}[| \partial_i f(x) | \geq \epsilon] \geq \delta \}.$$
Then $|J(\epsilon,\delta)| = O(\frac{1}{\epsilon} \log \frac{1}{\delta})$.
\end{lemma}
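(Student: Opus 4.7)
The plan is to derive a contradiction from assuming $|J(\epsilon,\delta)|$ is too large, by first \emph{amplifying} the $\delta$-probability events under $\mu_{1/2}$ into constant-probability events under a biased distribution $\mu_p$ with $p \approx 1/\log(1/\delta)$, and then using submodularity to sum up marginal gains until they hit the ceiling $f \leq 1$.

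First I would split $J(\epsilon,\delta)$ into $J^+$ (indices with $\Pr_{\mu_{1/2}}[\partial_i f(x) \geq \epsilon] \geq \delta/2$) and $J^-$ (indices with $\Pr_{\mu_{1/2}}[\partial_i f(x) \leq -\epsilon] \geq \delta/2$), and bound each separately. Focus on $J^+$. Because $f$ is submodular, $\partial_i f$ is non-increasing in the remaining coordinates, so the event $A_i := \{x : \partial_i f(x) \geq \epsilon\}$ depends only on $x_{[n]\setminus\{i\}}$ and is down-monotone there. The Goemans--Vondr\'ak boosting lemma says that $g_i(q) := \Pr_{\mu_q}[A_i]$ satisfies $g_i(q)^{1/q} \geq g_i(p)^{1/p}$ for $q \leq p$, so
\[
 g_i(p) \;\geq\; g_i(1/2)^{2p} \;\geq\; (\delta/2)^{2p}.
\]
Choosing $p = \Theta(1/\log(1/\delta))$ (a small constant divided by $\log(2/\delta)$) makes $g_i(p) \geq 1/2$ for every $i \in J^+$.

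Next I count ``witnesses'' at a random $x \sim \mu_p$, namely indices $i \in J^+$ with both $A_i$ holding and $x_i = 1$. Since $A_i$ is independent of $x_i$ under $\mu_p$, the expected number of witnesses is $\sum_{i \in J^+} p \cdot g_i(p) \geq p|J^+|/2$, so some $x^*$ admits a set $S \subseteq X^* := \{j : x^*_j = 1\}$ of size at least $p|J^+|/2$ such that $\partial_i f(x^*) \geq \epsilon$ for every $i \in S$. Fix any ordering of $S$ and telescope from $\b0$ to $\b1_S$:
\[
 f(\b1_S) - f(\b0) \;=\; \sum_{i \in S} \partial_i f(\b1_{S_{<i}}) \;\geq\; \sum_{i \in S} \partial_i f(\b1_{S \setminus \{i\}}) \;\geq\; \sum_{i \in S} \partial_i f(x^*) \;\geq\; |S|\,\epsilon,
\]
where both inequalities between partial derivatives use submodularity via the inclusions $S_{<i} \subseteq S \setminus \{i\} \subseteq X^* \setminus \{i\}$. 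Since $f \leq 1$, we get $|S| \leq 1/\epsilon$, hence $|J^+| \leq 2/(p\epsilon) = O(\log(1/\delta)/\epsilon)$. The bound on $|J^-|$ follows symmetrically: now $\{\partial_i f(x) \leq -\epsilon\}$ is up-monotone, apply boosting under $\mu_{1-p}$, find an $x^*$ whose \emph{zero}-set contains a large $S$ of indices with $\partial_i f(x^*) \leq -\epsilon$, and telescope downward from $\b1_{X^* \cup S}$ to $\b1_{X^*}$ to conclude $|S| \epsilon \leq f(\b1_{X^* \cup S}) - f(\b1_{X^*}) \cdot (-1) \leq 1$ wait, more carefully, $f(\b1_{X^*}) \geq f(\b1_{X^* \cup S}) + |S|\epsilon \geq |S|\epsilon$.

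The main obstacle is the boosting step: I need a quantitative form of the Goemans--Vondr\'ak lemma strong enough to push $p$ down to $\Theta(1/\log(1/\delta))$ while keeping $g_i(p) \geq 1/2$. The form $g(q)^{1/q} \geq g(p)^{1/p}$ for down-closed events is what I would use (and which is equivalent to statements of Kahn--Kalai). A secondary care point is the telescoping in the $J^-$ case: one has to add $S$ to the 1-set $X^*$ (rather than build from $\b0$) so that the monotonicity of $\partial_i f$ correctly chains $\partial_{i_\ell} f(\b1_{X^* \cup S_{<\ell}}) \leq \partial_{i_\ell} f(x^*) \leq -\epsilon$, producing a drop of at least $|S|\epsilon$ which is bounded by the range of $f$.
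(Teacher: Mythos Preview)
Your proposal is correct and follows essentially the same approach as the paper: split into $J^+$ and $J^-$, use the Goemans--Vondr\'ak boosting lemma on the down-monotone event $\{\partial_i f \geq \epsilon\}$ to pass to a bias $p = \Theta(1/\log(1/\delta))$ where the event has probability $\geq 1/2$, count expected ``witnesses'' $i$ with $x_i=1$ and $\partial_i f(x) \geq \epsilon$, and cap their number by $1/\epsilon$ via the telescoping/submodularity bound $\sum_{i \in S} \partial_i f(\mathbf{1}_S) \leq f(\mathbf{1}_S) \leq 1$. The only cosmetic difference is that the paper dispatches $J^-$ in one line by applying the $J^+$ argument to $\bar f(x) = f(\mathbf{1}-x)$, which is cleaner than your direct up-monotone treatment (and would spare you the ``wait, more carefully'' moment).
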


We prove this using the ``boosting lemma" of \cite{GoemansVondrak:06} (which was already used for the purpose of approximating submodular functions by juntas in \cite{FeldmanVondrak:13arxiv}).

\medskip

{\bf Boosting Lemma.}
{\em Let $\cF \subseteq \zo^X$ be down-monotone (if $x \in \cF$ and $y \leq x$ coordinate-wise, then $y \in \cF$).
For $p \in (0,1)$, define $\sigma_p = \Pr_{x \sim \mu_p}[x \in {\cal F}]$.
Then
$ \sigma_p = (1-p)^{\phi(p)} $
where $\phi(p)$ is a non-decreasing function for $p \in (0,1)$.
}

\medskip

\begin{proof}[of Lemma~\ref{lem:submod-large-derivatives}]
Let
\begin{itemize}
\item $J^+(\epsilon,\delta) = \{ i \in [n]:  \Pr_{x \sim \mu_{1/2}}[\partial_i f(x)  \geq \epsilon] \geq \delta/2 \}.$
\item $J^-(\epsilon,\delta) = \{ i \in [n]:  \Pr_{x \sim \mu_{1/2}}[\partial_i f(x)  \leq -\epsilon] \geq \delta/2 \}.$
\end{itemize}
We have $J(\epsilon,\delta) \subseteq J^+(\epsilon,\delta) \cup J^-(\epsilon,\delta)$. Hence it is enough to bound
$|J^+(\epsilon,\delta)|$; the same bound on $|J^-(\epsilon,\delta)|$ follows by considering the function $\bar{f}(x) = f(\b1 - x)$.

For each $j \in [n]$, define
$$\cF^+_j = \{ x \in \zo^n: \partial_j f(x) \geq \epsilon \}.$$
By submodularity, this set is down-monotone.
By assumption, we have $\Pr_{x \sim \mu_{1/2}}[x \in \cF^+_j] \geq \delta/2$ for $j \in J^+(\epsilon,\delta)$.
Using the terminology of the boosting lemma, we have $\sigma_{1/2} = (1/2)^{\phi(1/2)}$ where $\phi(1/2) \leq \log_2 (2/\delta)$.
We define $q = 1 - (1/2)^{1/\log_2 (2/\delta)} \leq 1/2$. By the boosting lemma \cite{GoemansVondrak:06}, we have
$$ \Pr_{x \sim \mu_q}[x \in \cF^+_j] = (1-q)^{\phi(q)} \geq (1-q)^{\log_2 (2/\delta)} = \frac{1}{2} $$
for each $j \in J^+(\epsilon,\delta)$.  We also have $\Pr_{x \sim \mu_q}[x_j=1] = q$.
Note that $x_j=1$ and $x \in \cF^+_j$ are independent events,
since $x \in \cF^+_j$ depends only on $\partial_j f(x)$ and this is independent of $x_j$. Therefore,
$$ \Pr_{x \sim \mu_q}[x_j=1 \ \& \ x \in \cF^+_j] \geq \frac{q}{2} $$
for each $j \in J^+(\epsilon,\delta)$. Let $L(x) = \{j: x_j=1 \ \& \ x \in \cF^+_j \}$. We have
$$ \E_{x \sim \mu_q}[|L(x)|] \geq \E_{x \sim \mu_q}[| \{ j \in J^+(\epsilon,\delta): x_j=1 \ \& \ x \in \cF^+_j\} |]
 \geq \frac{q}{2} |J^+(\epsilon,\delta)|.$$
On the other hand, denoting by $\b1_{S}$ the indicator vector of $S$, for each $j \in L(x)$, we have $\partial_j f(\b1_{L(x)}) \geq \epsilon$ and therefore
$$ \epsilon |L(x)| \leq \sum_{j \in L(x)} \partial_j f(\b1_{L(x)}) \leq f(\b1_{L(x)}) \leq 1 $$
where we used submodularity in the second inequality.
This means that $|L(x)| \leq 1 / \epsilon$ with probability $1$. Therefore, we have $|J^+(\epsilon,\delta)| \leq \frac{2}{\epsilon q}$.
Recall that $q = 1 - (1/2)^{1/\log_2 (2/\delta)} \geq \frac{1}{2 \log_2 (2/\delta)}$ (using $\delta < 1$) which means
$ |J^+(\epsilon, \delta)| \leq \frac{4}{\epsilon} \log_2 \frac{2}{\delta}.$
\end{proof}

We use Lemma~\ref{lem:submod-large-derivatives} for two purposes. First, it allows us to take out a small set of variables $L$ whose derivatives can be large with large probability. Conditioned on these variables, we get an ``almost $\epsilon$-Lipschitz" function, for which using Lemma~\ref{lem:submod-large-derivatives} again allows us to prove an improved bound on $\sum_{i,j \notin L} \| \partial_{ij} f \|_2^2$.

We introduce the following notation (the ``threshold norm").
In the following, all probabilities and expectations are over the uniform distribution ($x \sim {\cal U}$).

\begin{definition}
For a function $f:\zo^n \rightarrow \R$, we define
$$ \| f \|_T  = \sup \{ \alpha: \Pr_{x}[|f(x)| \geq \alpha] \geq \alpha^3 \}.$$
\end{definition}

\iffull
We remark that $\|f\|_T$ is not really a norm --- it is not linear under scalar multiplication. In fact $\|f\|_T$ is never more than $1$. The choice of $\alpha^3$ is somewhat arbitrary here. The notation $\|f\|_T$ is convenient for our proof but in general we do not attribute any significance to it. Lemma~\ref{lem:submod-large-derivatives} (with $\delta = \epsilon^3$) implies the following.
\fi

\begin{corollary}
\label{cor:threshold-norm}
For a submodular function $f:\zo^n \rightarrow [0,1]$, the number of coordinates with $\| \partial_i f \|_T \geq \epsilon$ is at most $O(\frac{1}{\epsilon} \log \frac{1}{\epsilon})$.
\end{corollary}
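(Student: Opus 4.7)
The plan is to derive the corollary as a direct consequence of Lemma~\ref{lem:submod-large-derivatives} by plugging in suitable parameters. The only preliminary work is to convert the supremum-based condition $\|\partial_i f\|_T \geq \epsilon$ into a concrete probability lower bound of the form hypothesised by the lemma, namely $\Pr_{x \sim {\cal U}}[|\partial_i f(x)| \geq \alpha] \geq \delta$ for specific $\alpha$ and $\delta$.

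First I would unpack the definition. By definition, $\|\partial_i f\|_T = \sup\{\alpha : \Pr[|\partial_i f(x)| \geq \alpha] \geq \alpha^3\}$, so $\|\partial_i f\|_T \geq \epsilon$ implies that for every $\alpha < \epsilon$ there exists some $\alpha' \geq \alpha$ with $\Pr[|\partial_i f(x)| \geq \alpha'] \geq (\alpha')^3$. Because $\Pr[|\partial_i f(x)| \geq \cdot]$ is monotonically non-increasing and $(\alpha')^3 \geq \alpha^3$, this immediately gives $\Pr[|\partial_i f(x)| \geq \alpha] \geq \alpha^3$. Specialising to $\alpha = \epsilon/2$ yields the clean statement
$$\Pr_{x \sim {\cal U}}[|\partial_i f(x)| \geq \epsilon/2] \geq \epsilon^3/8.$$
Note that ${\cal U} = \mu_{1/2}$, so this fits the hypothesis of Lemma~\ref{lem:submod-large-derivatives} exactly.

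The second step is to invoke Lemma~\ref{lem:submod-large-derivatives} with parameters $(\epsilon/2, \epsilon^3/8)$. Every coordinate $i$ with $\|\partial_i f\|_T \geq \epsilon$ belongs to $J(\epsilon/2, \epsilon^3/8)$, so the number of such coordinates is at most
$$|J(\epsilon/2, \epsilon^3/8)| \;=\; O\!\left(\frac{1}{\epsilon/2}\,\log \frac{1}{\epsilon^3/8}\right) \;=\; O\!\left(\frac{1}{\epsilon}\,\log \frac{1}{\epsilon}\right),$$
since $\log(8/\epsilon^3) = 3\log(1/\epsilon) + O(1)$. This is precisely the claimed bound.

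There is essentially no obstacle here: all the real work is done by Lemma~\ref{lem:submod-large-derivatives} (which in turn invokes the boosting lemma). The only subtlety is that the supremum in the definition of $\|\cdot\|_T$ need not be attained, which is why I back off from $\epsilon$ to $\epsilon/2$; the cost is only a constant factor absorbed into the $O(\cdot)$.
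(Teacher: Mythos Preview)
Your proposal is correct and follows essentially the same approach as the paper, which simply states that the corollary follows from Lemma~\ref{lem:submod-large-derivatives} with $\delta = \epsilon^3$. Your extra caution in backing off to $\epsilon/2$ is in fact unnecessary here: since $\partial_i f$ takes only finitely many values on $\zo^n$, the tail $\alpha \mapsto \Pr[|\partial_i f(x)| \geq \alpha]$ is left-continuous, so the supremum defining $\|\partial_i f\|_T$ is actually attained and one may take $\delta = \epsilon^3$ directly; but your version is harmless and costs only a constant factor.
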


\iffull
We also have the following useful property (which we apply to $h=\partial_i f$).

\begin{lemma}
\label{lem:2vsT}
For any $h:\zo^n \rightarrow [-1,1]$,
$\|h\|_2 \leq \sqrt{2} \|h\|_T.$
\end{lemma}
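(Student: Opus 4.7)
My plan is to bound $\|h\|_2^2 = \E[h(x)^2]$ via the layer-cake identity
\begin{equation*}
\|h\|_2^2 = \int_0^1 2s \, \Pr[|h(x)| \geq s] \, ds,
\end{equation*}
where the upper limit is $1$ because $|h| \leq 1$. Writing $\alpha = \|h\|_T$, I would split this integral at $s = \alpha$ and bound each piece separately. On $[0,\alpha]$, I would use only the trivial bound $\Pr[|h(x)| \geq s] \leq 1$, which contributes $\int_0^\alpha 2s \, ds = \alpha^2$.

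The main obstacle is bounding the tail $\int_\alpha^1 2s \, \Pr[|h(x)| \geq s] \, ds$ tightly enough. The definition of the threshold norm gives $\Pr[|h(x)| \geq s] < s^3$ for every $s > \alpha$, but plugging this in pointwise only produces $\int_\alpha^1 2s^4 \, ds = \tfrac{2}{5}(1 - \alpha^5)$, which has an $\Omega(1)$ additive term and destroys the desired linear dependence on $\alpha$ once $\alpha$ is small. The trick is to upgrade this pointwise bound using the monotonicity of the survival function $s \mapsto \Pr[|h(x)| \geq s]$: for any $s > \alpha$ and any $\beta \in (\alpha, s]$, we have $\Pr[|h(x)| \geq s] \leq \Pr[|h(x)| \geq \beta] < \beta^3$, and letting $\beta \to \alpha^+$ yields $\Pr[|h(x)| \geq s] \leq \alpha^3$ uniformly for all $s > \alpha$. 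This gives $\int_\alpha^1 2s \cdot \alpha^3 \, ds = \alpha^3 (1 - \alpha^2) \leq \alpha^3$.

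Combining the two pieces, $\|h\|_2^2 \leq \alpha^2 + \alpha^3 \leq 2\alpha^2$, where in the last step I use $\alpha = \|h\|_T \leq 1$ (which holds because $\Pr[|h(x)| \geq s] = 0 < s^3$ for any $s > 1$, so no $s > 1$ lies in the set whose supremum defines $\|h\|_T$). Taking square roots yields $\|h\|_2 \leq \sqrt{2}\, \|h\|_T$, as required.
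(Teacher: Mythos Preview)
Your proof is correct and arrives at exactly the same estimate $\|h\|_2^2 \leq \alpha^2 + \alpha^3 \leq 2\alpha^2$ as the paper, via the same key observation that the survival function is bounded by $\alpha^3$ past $\alpha$. The only cosmetic difference is that the paper avoids the layer-cake formula and instead bounds $\E[h^2]$ directly by the case split $\E[h^2] \leq \alpha'^2 \Pr[|h| \leq \alpha'] + 1 \cdot \Pr[|h| > \alpha']$ for $\alpha' > \alpha$ and then lets $\alpha' \to \alpha^+$; your monotonicity-and-limit step is exactly the same limiting argument in a slightly different guise.
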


\begin{proof}
Suppose that $\|h\|_T = \eta$ and note that $0 \leq \eta \leq 1$. For every $\alpha > \eta$, we have by definition $\Pr[\|h(x)\| \geq \alpha] < \alpha^3]$. Consequently
$$\|h\|_2^2 = \E[(h(x))^2] \leq \alpha^2 \cdot \Pr[|h(x)| \leq \alpha] + 1 \cdot \Pr[|h(x)| > \alpha] \leq \alpha^2 + \alpha^3.$$
 Since this holds for every $\alpha > \eta$, we also have $\|h\|_2^2 \leq \eta^2 + \eta^3 \leq 2 \eta^2$.
\end{proof}
\fi

The following is our main bound on the quantity $\sum_{i,j} \| \partial_{ij} f \|_2^2$.

\begin{lemma}
\label{lem:submod-Lipschitz-bound}
Let $f:\zo^n \rightarrow [0,1]$ be a submodular function such that $\| \partial_i f\|_T \leq \alpha$ for all $i \in S$. Then
$$ \sum_{i,j \in S} \| \partial_{ij} f \|_2^2 = O\left(\sqrt{\alpha} \log^{3/2} \frac{1}{\alpha} \right).$$
\end{lemma}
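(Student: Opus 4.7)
The plan is to extend the truncation argument used in the totally symmetric case. There we had the uniform bound $|\partial_i f(x)| = O(1/n)$, which made $|\partial_{ij} f(x)| = O(1/n)$ pointwise; here we only have $\|\partial_i f\|_T \leq \alpha$, so I would use a threshold $\beta = \Theta(\alpha\log(1/\alpha))$ to get an almost-uniform bound. Defining the bad set $B_i = \{x: |\partial_i f(x)| > \beta\}$, the assumption $\|\partial_i f\|_T \leq \alpha < \beta$ implies $\Pr[B_i] \leq \beta^3$ by the definition of the threshold norm. For each pair $(i,j)$ the event $\{|\partial_{ij} f(x)| > 2\beta\}$ forces one of $x_{j\leftarrow 0}, x_{j\leftarrow 1}$ to lie in $B_i$, which has probability at most $4\beta^3$ by a union bound. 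Combining this with $|\partial_{ij} f| \leq 2$ on the bad event and $(\partial_{ij} f)^2 \leq 2\beta|\partial_{ij} f|$ on the good event gives the per-pair estimate
\[
\|\partial_{ij} f\|_2^2 \;\leq\; 2\beta\,\|\partial_{ij} f\|_1 + O(\beta^3).
\]

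Summing over $j\in S$ and applying Lemma~\ref{lem:square-root-bound} (with $A=S$) together with $\|\partial_i f\|_2\leq\sqrt 2\,\alpha$ from Lemma~\ref{lem:2vsT} yields, for each $i\in S$,
\[
\sum_{j\in S}\|\partial_{ij} f\|_2^2 \;\leq\; 4\sqrt 2\,\alpha\beta\sqrt{|S|} + O(\beta^3|S|).
\]
The outer sum over $i\in S$ is the delicate step. I would control $\sum_{i\in S}\|\partial_i f\|_2$ by combining the pointwise truncation $\|\partial_i f\|_2^2 \leq \alpha\|\partial_i f\|_1 + \alpha^3$ (derived by splitting on $\{|\partial_i f|\leq \alpha\}$ and its complement, exactly as in Lemma~\ref{lem:2vsT}) with the self-boundedness identity $\sum_i \|\partial_i f\|_1 = O(1)$ valid for any submodular $f:\zo^n\to[0,1]$, and then stratifying $S$ at dyadic scales and invoking Lemma~\ref{lem:submod-large-derivatives} at each scale to cap the number of coordinates whose $T$-norm sits at that scale by $O(\log(1/\alpha)/\alpha)$ times a decaying factor. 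This multi-scale accounting produces $\sum_{i\in S}\|\partial_i f\|_2$ bounded, up to logarithmic factors, independently of $|S|$ itself.

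Finally, substituting back and choosing $\beta \asymp \alpha\log(1/\alpha)$ to balance the term coming from $\alpha\beta\sqrt{|S|}\cdot\sum_{i\in S}\|\partial_i f\|_2$ against $\beta^3|S|^2$ yields the claimed bound $O(\sqrt\alpha\,\log^{3/2}(1/\alpha))$. The main obstacle is the outer sum: a naive Cauchy--Schwarz on $\sum_{i\in S}\|\partial_i f\|_2$ loses a factor of $\sqrt{|S|}$ which could be as large as $\sqrt{n}$, so exploiting the combinatorial input of Lemma~\ref{lem:submod-large-derivatives} at several scales is essential to obtain a dimension-free bound. The $\log^{3/2}$ overhead relative to the clean $O(\sqrt\alpha)$ available in the totally symmetric case arises naturally from this argument: one logarithmic factor comes from the threshold choice $\beta/\alpha \asymp \log(1/\alpha)$ and the remaining half-logarithm emerges from the multi-scale summation.
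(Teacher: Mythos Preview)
Your plan has a genuine gap: it leaves factors of $|S|$ that no amount of one-sided multi-scale bookkeeping can remove, and $|S|$ is not bounded in terms of $\alpha$ (there can be arbitrarily many coordinates with $\|\partial_i f\|_T\le\alpha$; add dummy variables if you like). Concretely, with a single threshold $\beta$ and Lemma~\ref{lem:square-root-bound} applied with $A=S$, your two aggregate terms are
\[
4\beta\sqrt{|S|}\,\sum_{i\in S}\|\partial_i f\|_2
\qquad\text{and}\qquad
O(\beta^3)\,|S|^2.
\]
Both carry explicit powers of $|S|$. Your proposed multi-scale argument targets only $\sum_{i\in S}\|\partial_i f\|_2$; even granting that this sum is $O(\log(1/\alpha))$ (and in fact the dyadic stratification you sketch gives $\sum_{i\in A_k}\|\partial_i f\|_2 = O(\log(2^k/\alpha))$ per level, which does \emph{not} sum over $k$), the surviving $\sqrt{|S|}$ from the square-root lemma and the $|S|^2$ in the error term remain. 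No choice of $\beta$ can balance two terms that both diverge with $|S|$.

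The repair is exactly what the paper does, and it requires tying the multi-scale structure to \emph{both} ingredients, not just the outer sum. First, the truncation threshold must be scale-dependent: for $i\in A_k=\{i:\|\partial_i f\|_T\in(2^{-k-1}\alpha,2^{-k}\alpha]\}$ one truncates $|\partial_i f|$ at level $\asymp 2^{-k}\alpha$, so the tail probability is $O(2^{-3k}\alpha^3)$ rather than a uniform $O(\beta^3)$; this makes the error term geometrically summable in $k$. Second, and just as important, Lemma~\ref{lem:square-root-bound} must be applied not with $A=S$ but with $A=B_{k+1}=\bigcup_{\ell\le k}A_\ell$, whose size is bounded by Corollary~\ref{cor:threshold-norm} as $O\bigl(\tfrac{2^k}{\alpha}\log\tfrac{2^k}{\alpha}\bigr)$. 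This replaces your unbounded $\sqrt{|S|}$ by $\sqrt{|B_{k+1}|}=O(2^{k/2}\alpha^{-1/2}\log^{1/2})$, which, when multiplied by the $2^{-k}\alpha$ coming from the adaptive truncation, yields a factor $2^{-k/2}\sqrt\alpha$ that is again summable over $k$. The $\log^{3/2}$ emerges from the combination $\sqrt{|B_{k+1}|}\cdot |A_k|\cdot 2^{-k}\alpha$, not from the mechanism you describe.
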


\iffull
\begin{proof}
First, note that coordinates $i \in S$ such that $\| \partial_i f \|_T = 0$ do not contribute anything to the sum $\sum_{i,j \in S} \| \partial_{ij} f \|_2^2$. This is because if  $\| \partial_i f \|_T = 0$ then $\partial_i f(x) = 0$ for all $x \in \zo^n$ and hence also $\partial_{ij} f(x) = 0$ for every other coordinate $j \in S$. Therefore, we can assume that $\|\partial_i f\|_T > 0$ for all $i \in S$.

Let us partition the coordinates as follows. For each $\ell \geq 0$, let
$$ B_k = \{ i \in S: \| \partial_i f \|_T > 2^{-k} \alpha \}.$$
Note that by assumption, $B_0 = \emptyset$, the sets $B_k$ form a chain and each $i \in S$ belongs to $B_k$ for large enough $k$. By Corollary~\ref{cor:threshold-norm}, the sets $B_k$ are bounded in size:
 $$ |B_k| = O\left(\frac{1}{2^{-k} \alpha} \log \frac{1}{2^{-k} \alpha}\right)
  = O\left(\frac{2^k}{\alpha} \log \frac{2^k}{\alpha} \right).$$
We define $A_k = B_{k+1} \setminus B_k$; clearly, each coordinate belongs to exactly one set $A_k, k \geq 0$,
and we have $\| \partial_i f \|_T \leq 2^{-k} \alpha$ for each $i \in A_k$.

We estimate $\sum_{i,j \in S} \| \partial_{ij} f \|_2^2$ as follows. We can write
$$ |\partial_{ij} f(x)| = |\partial_i f(x_{j \leftarrow 1}) - \partial_i f(x_{j \leftarrow 0})|
 \leq |\partial_i f(x_{j \leftarrow 1})| + |\partial_i f(x_{j \leftarrow 0})|. $$
Therefore,
$$ \| \partial_{ij} f \|_2^2 = \E_{x}[ |\partial_{ij} f(x)|^2 ]
 \leq \E_{x} [ (|\partial_i f(x_{j \leftarrow 1})| + |\partial_i f(x_{j \leftarrow 0})|)  \cdot |\partial_{ij} f(x)| ].$$
Assuming that $i \in A_k$, we know that $\| \partial_i f\|_T \leq 2^{-k} \alpha$, and hence $\Pr_x[|\partial_i f(x)| \geq 2^{1-k} \alpha] \leq 2^{-3k} \alpha^3$. Therefore, we also have
$\Pr_x[|\partial_i f(x_{i \leftarrow 1})| + |\partial_i f(x_{i \leftarrow 0})| \geq 2^{2-k} \alpha] \leq 2^{1-3k} \alpha^3$. Hence for $i \in A_k$ we can estimate
$$ \| \partial_{ij} f \|_2^2  \leq \E_{x} [ (|\partial_i f(x_{j \leftarrow 1})| + |\partial_i f(x_{j \leftarrow 0})|)  \cdot |\partial_{ij} f(x)| ]
\leq 2^{2-k} \alpha \cdot \E_{x} [|\partial_{ij} f(x)|] + 2^{2-3k} \alpha^3 $$
using the trivial bound that $|\partial_{ij} f(x)| \leq 2$ in the case where $|\partial_i f(x_{i \leftarrow 1})| + |\partial_i f(x_{i \leftarrow 0})|$ is large.

Overall, we obtain
\begin{eqnarray*}
 \sum_{i,j \in S} \| \partial_{ij} f \|_2^2
 & \leq & 2 \sum_{0 \leq \ell \leq k} \sum_{i \in A_k} \sum_{j \in A_\ell} \| \partial_{ij} f \|_2^2 \\
 & \leq & 2 \sum_{0 \leq \ell \leq k} \sum_{i \in A_k} \sum_{j \in A_\ell} \left( 2^{2-k} \alpha \cdot \E_{x} [|\partial_{ij} f(x)|] + 2^{2-3k} \alpha^3 \right) \\
 &  = & \sum_{k \geq 0} 2^{3-k} \alpha \sum_{i \in A_k}  \sum_{\ell=0}^{k} \sum_{j \in A_\ell} \|\partial_{ij} f \|_1
  + \sum_{0 \leq \ell \leq k} 2^{3-3k} \alpha^3 |A_k| \cdot |A_\ell| .
\end{eqnarray*}
Here we use the bounds $|A_k| \leq |B_{k+1}| = O(\frac{2^{k}}{\alpha} \log \frac{2^{k}}{\alpha})$ to estimate the second term. We get (up to constant factors) $\sum_{0 \leq \ell \leq k} 2^{-3k} \alpha \cdot 2^{k+\ell} (k + \log \frac{1}{\alpha}) (\ell + \log \frac{1}{\alpha}) \leq \sum_{k \geq 0} k^3 2^{-k} \alpha \log^2 \frac{1}{\alpha} = O(\alpha \log^2 \frac{1}{\alpha})$.
Hence, we get
\begin{equation}
\label{eq:partial-bound}
 \sum_{i,j \in S} \| \partial_{ij} f \|_2^2
 \leq \sum_{k \geq 0} 2^{3-k} \alpha \sum_{i \in A_k}  \sum_{\ell=0}^{k} \sum_{j \in A_\ell} \|\partial_{ij} f \|_1
  + O\left(\alpha \log^2 \frac{1}{\alpha} \right).
\end{equation}
Here we use Lemma~\ref{lem:square-root-bound} to estimate $\sum_{\ell=0}^{k} \sum_{j \in A_\ell} \| \partial_{ij} f \|_1$.
Recall that the $A_\ell$'s are disjoint and $\bigcup_{\ell=0}^{k} A_\ell = B_{k+1}$. By Lemma~\ref{lem:square-root-bound},
$$ \sum_{\ell=0}^{k} \sum_{j \in A_\ell} \| \partial_{ij} f \|_1
 = \sum_{j \in B_{k+1}} \| \partial_{ij} f \|_1 \leq 2 \sqrt{|B_{k+1}|} \| \partial_i f \|_2
 = O\left( \frac{2^{k/2}}{\sqrt{\alpha}} \log^{1/2} \frac{2^k}{\alpha} \right) \cdot \| \partial_i f \|_2. $$
Assuming $i \in A_k$, Lemma~\ref{lem:2vsT} says $\| \partial_i f \|_2 \leq \sqrt{2} \| \partial_i f \|_T \leq \sqrt{2} \frac{\alpha}{2^k}$.
Also, $|A_k| = O(\frac{2^k}{\alpha} \log \frac{2^k}{\alpha})$, so we get
$$ \sum_{i \in A_k}  \sum_{\ell=0}^{k} \sum_{j \in A_\ell} \|\partial_{ij} f \|_1
 = O\left( \frac{2^{k/2}}{\sqrt{\alpha}} \log^{1/2} \frac{2^k}{\alpha} \right) \cdot \sum_{i \in A_k} \| \partial_i f \|_2
 = O\left( \frac{2^{k/2}}{\sqrt{\alpha}} \log^{3/2} \frac{2^k}{\alpha} \right).$$
Continuing the computation from equation (\ref{eq:partial-bound}), we have
\begin{eqnarray*} \sum_{i,j \in S} \| \partial_{ij} f \|_2^2
& \leq & \sum_{k \geq 0} 2^{3-k} \alpha \sum_{i \in A_k}  \sum_{\ell=0}^{k} \sum_{j \in A_\ell} \|\partial_{ij} f \|_1
  + O\left(\alpha \log^2 \frac{1}{\alpha} \right) \\
 & = & O\left(\sum_{k \geq 0} \frac{\sqrt{\alpha}}{2^{k/2}} \log^{3/2} \frac{2^k}{\alpha} \right)
  + O\left(\alpha \log^2 \frac{1}{\alpha} \right) \\
 & = & O\left(\sqrt{\alpha} \log^{3/2} \frac{1}{\alpha} \right).
\end{eqnarray*}
\end{proof}

\else

The proof is based on a careful aggregation of contributions from different pairs of variables, depending on $\| \partial_i f \|_T$, using Lemma~\ref{lem:square-root-bound} and Corollary~\ref{cor:threshold-norm}. We defer the proof to the full version.  Once we obtain this bound, the main result ($\epsilon$-approximation using an $O(\frac{1}{\epsilon^{4/5}} \log \frac{1}{\epsilon})$ degree polynomial) follows by a natural extension of the totally symmetric case.
\fi

\iffull
\begin{theorem}
\label{thm:submod-lowdegree}
For any submodular function $f:\zo^n \rightarrow [0,1]$, there is a polynomial of degree $O(\frac{1}{\epsilon^{4/5}} \log \frac{1}{\epsilon})$ such that
$ \| f - p \|_2 \leq \epsilon.$
\end{theorem}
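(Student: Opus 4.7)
The plan is to combine the bound on heavy coordinates provided by Corollary~\ref{cor:threshold-norm} with the bound on light coordinates provided by Lemma~\ref{lem:submod-Lipschitz-bound}, and then to use a variant of Lemma~\ref{lem:degree-2-bound} in which the ``low-degree'' requirement is imposed only on the light coordinates. Concretely, fix a threshold parameter $\alpha$ (to be chosen as $\alpha = \Theta(\epsilon^{4/5})$) and let $L = \{i \in [n] : \|\partial_i f\|_T > \alpha\}$ and $S = [n]\setminus L$. By Corollary~\ref{cor:threshold-norm}, $|L| = O\!\left(\frac{1}{\alpha}\log\frac{1}{\alpha}\right)$, while Lemma~\ref{lem:submod-Lipschitz-bound} gives $\sum_{i,j \in S} \|\partial_{ij} f\|_2^2 = O\!\left(\sqrt{\alpha}\,\log^{3/2}\frac{1}{\alpha}\right)$.

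Next I would take $p$ to be the truncation of $f$ that keeps every Fourier coefficient $\hat{f}(U)$ with $|U \cap S| \leq k$ and discards the rest. This polynomial has degree at most $|L|+k$, since each kept $U$ satisfies $|U| \leq |U \cap L| + |U \cap S| \leq |L|+k$. To bound the truncation error I would repeat the proof of Lemma~\ref{lem:degree-2-bound} but sum only over $i,j \in S$:
\[
\sum_{i,j \in S} \|\partial_{ij} f\|_2^2 \;=\; 16 \sum_{U \subseteq [n]} |U \cap S|\,(|U\cap S|-1)\,\hat{f}^2(U) \;\geq\; 16\,k^2 \sum_{U:\,|U \cap S|>k} \hat{f}^2(U),
\]
which combined with Parseval yields
\[
\|f-p\|_2^2 \;=\; \sum_{U:\,|U\cap S|>k} \hat{f}^2(U) \;\leq\; \frac{1}{16 k^2} \sum_{i,j \in S} \|\partial_{ij} f\|_2^2 \;=\; O\!\left(\frac{\sqrt{\alpha}\,\log^{3/2}(1/\alpha)}{k^2}\right).
\]

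Setting this expression to $\epsilon^2$ gives $k = O\!\left(\alpha^{1/4}\log^{3/4}(1/\alpha)/\epsilon\right)$, so the total degree is
\[
|L|+k \;=\; O\!\left(\frac{1}{\alpha}\log\frac{1}{\alpha}\right) + O\!\left(\frac{\alpha^{1/4}\log^{3/4}(1/\alpha)}{\epsilon}\right).
\]
Balancing the two terms suggests $1/\alpha \approx \alpha^{1/4}/\epsilon$, i.e.\ $\alpha = \Theta(\epsilon^{4/5})$; plugging in yields $|L| = O(\epsilon^{-4/5}\log(1/\epsilon))$ and $k = O(\epsilon^{-4/5}\log^{3/4}(1/\epsilon))$, hence a total degree of $O(\epsilon^{-4/5}\log(1/\epsilon))$, as required.

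The only nontrivial step is the restricted Fourier-tail inequality in the second paragraph, but it is a direct line-by-line adaptation of the proof of Lemma~\ref{lem:degree-2-bound} (the identity $\|\partial_{ij} f\|_2^2 = 16\sum_{U \supseteq \{i,j\}}\hat{f}^2(U)$ holds regardless, and summing only over $i,j \in S$ weights each $U$ by $|U\cap S|(|U\cap S|-1)$). All the heavy lifting—controlling the number of large-derivative coordinates, and showing that on the remaining coordinates the second-derivative mass is small—has already been done in Corollary~\ref{cor:threshold-norm} and Lemma~\ref{lem:submod-Lipschitz-bound}; the proof of the theorem is therefore just a balancing of parameters followed by this restricted tail estimate.
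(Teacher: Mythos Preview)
Your proposal is correct and follows essentially the same approach as the paper's proof: both set $\alpha=\Theta(\epsilon^{4/5})$, separate out the set $L$ of coordinates with $\|\partial_i f\|_T>\alpha$ via Corollary~\ref{cor:threshold-norm}, apply Lemma~\ref{lem:submod-Lipschitz-bound} on $S=[n]\setminus L$, and then use the restricted version of Lemma~\ref{lem:degree-2-bound} (summing only over $i,j\in S$ to control $\sum_{|U\cap S|>k}\hat f^2(U)$) to bound the error of the polynomial $p=\sum_{|U\cap S|\le k}\hat f(U)\chi_U$. The only cosmetic difference is that the paper fixes $k=\epsilon^{-4/5}\log(1/\epsilon)$ directly rather than deriving it from a balancing argument.
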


\begin{proof}
Let $\alpha = \epsilon^{4/5}$. Let $L$ be the set of variables $i \in [n]$ such that $\| \partial_i f \|_T > \alpha$.
By Corollary~\ref{cor:threshold-norm}, the number of such variables is $|L| = O(\frac{1}{\alpha} \log \frac{1}{\alpha}) = O(\frac{1}{\epsilon^{4/5}} \log \frac{1}{\epsilon})$. By Lemma~\ref{lem:submod-Lipschitz-bound} (for $S = [n] \setminus L$), we have $\sum_{i,j \notin L} \| \partial_{ij} f \|_2^2 = O(\sqrt{\alpha} \log^{3/2} \frac{1}{\alpha}) = O(\epsilon^{2/5} \log^{3/2} \frac{1}{\epsilon})$. On the other hand (recalling that $\|\partial_{ii} f\|_2 = 0$ and $\|\partial_{ij} f \|_2^2 = 16\sum_{S \supseteq \{i,j\}} \hat{f}^2(S)$ for $i \neq j$),
$$ \sum_{i,j \notin L} \| \partial_{ij} f \|_2^2
 =16 \sum_{S:|S \setminus L| \geq 2} |S \setminus L| (|S \setminus L|-1) \hat{f}^2(S)
  \geq 16 \sum_{S: |S \setminus L|>k} k^2 \hat{f}^2(S). $$
We set $k = \frac{1}{\epsilon^{4/5}} \log \frac{1}{\epsilon}$ and obtain
$$ \sum_{S: |S \setminus L|>k} \hat{f}^2(S) \leq \frac{1}{16k^2} \sum_{i,j \notin L} \| \partial_{ij} f \|_2^2
 = \frac{\epsilon^{8/5}}{\log^2 \frac{1}{\epsilon}} \cdot O\left( \epsilon^{2/5} \log^{3/2} \frac{1}{\epsilon} \right)
 = O\left(\epsilon^2 \log^{-1/2} \frac{1}{\epsilon} \right).$$
For $\epsilon>0$ sufficiently small, this is less than $\epsilon^2$. Therefore, the polynomial
$$ p(x) = \sum_{S: |S \setminus L| \leq k} \hat{f}^2(S) \chi_S(x) $$
satisfies
$$ \|f-p\|_2^2 = \sum_{S: |S \setminus L|>k} \hat{f}^2(S) < \epsilon^2 $$
and has degree $|L|+k = O(\frac{1}{\epsilon^{4/5}} \log \frac{1}{\epsilon})$.
\end{proof}

\fi 
\iffull
\section{Applications}
\label{sec:applications}
\subsection{Approximation of XOS functions by juntas}
We use several notions of {\em influence} of a variable on a real-valued function which are based on the standard notion of influence for Boolean functions (\eg \cite{Ben-OrLinial:85,KahnKL:88}).
\begin{definition}[Influences]
For a real-valued $f:\zo^n \rightarrow \RR$, $i \in [n]$, and $\kappa \geq 0$ we define the {\em $\ell_\kappa^\kappa$-influence} of variable $i$ as $\infl^\kappa_i(f) = \|\fr{2}\partial_i f\|_\kappa^\kappa = \E[|\fr{2}\partial_i f|^\kappa]$. We define $\infl^\kappa(f) = \sum_{i\in[n]} \infl^\kappa_i(f)$.
\end{definition}
The most commonly used notion of influence for real-valued functions is the $\ell_2^2$-influence which satisfies
$$\infl^2_i(f) = \left\|\fr{2}\partial_i f\right\|_2^2 = \sum_{S \ni i}\hat{f}^2(S)\ .$$
From here, the total $\ell_2^2$-influence is equal to $\infl^2(f) = \sum_S |S| \hat{f}^2(S)$.
We use the following generalization of Friedgut's theorem \cite{Friedgut:98} from \cite{FeldmanVondrak:13arxiv}.
\begin{theorem}[\cite{FeldmanVondrak:13arxiv}]
\label{th:rvjunta}
Let $f:\zo^n\rightarrow \RR$ be any function, $\eps \in (0,1)$ and $\kappa \in (1,2)$. For $d$ such that $\sum_{|S|>d} \hat{f}(S)^2 \leq \eps^2/2$, let $$I = \{i\in[n] \cond \infl^{\kappa}_i(f) \geq \alpha\} \mbox{ for}$$ $$\alpha = \left((\kappa-1)^{d-1} \cdot \eps^2/ (2 \cdot\infl^{\kappa}(f)) \right)^{\kappa/(2-\kappa)}\ .$$
Then for the set $\I_d =\{S \subseteq I \cond |S| \leq d\}$ we have $\sum_{S\not\in \I_d} \hat{f}(S)^2 \leq \eps^2$.
\end{theorem}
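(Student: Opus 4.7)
The plan is to adapt Friedgut's classical junta theorem \cite{Friedgut:98} to the $\ell_\kappa^\kappa$ notion of influence used here. The first step is to decompose the excluded Fourier weight as
$$\sum_{S \notin \cI_d} \hat{f}^2(S) \;=\; \sum_{|S|>d} \hat{f}^2(S) \;+\; \sum_{\substack{|S| \leq d \\ S \not\subseteq I}} \hat{f}^2(S).$$
The first sum is at most $\eps^2/2$ directly from the hypothesis on $d$, so it remains to bound the second by $\eps^2/2$. Since any $S \not\subseteq I$ must contain at least one coordinate $i \notin I$, the second sum is at most $\sum_{i \notin I} \sum_{S \ni i,\, |S| \leq d} \hat{f}^2(S)$, and it suffices to bound this per-coordinate sum in terms of $\infl^\kappa_i(f)$.

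The main tool is the Bonami--Beckner hypercontractive inequality $\|T_\rho g\|_2 \leq \|g\|_\kappa$, valid for $\rho = \sqrt{\kappa-1}$ and $\kappa \in (1,2]$. I would apply it to $g_i = \frac{1}{2}\partial_i f$, which has Fourier expansion $-\sum_{S \ni i} \hat{f}(S)\,\chi_{S\setminus\{i\}}$. A direct computation gives $\|T_\rho g_i\|_2^2 = \sum_{S \ni i} (\kappa-1)^{|S|-1} \hat{f}^2(S)$ and $\|g_i\|_\kappa^2 = (\infl^\kappa_i(f))^{2/\kappa}$. Restricting to $|S| \leq d$ and using that the weight $(\kappa-1)^{|S|-1}$ is smallest at $|S|=d$ yields
$$(\kappa-1)^{d-1} \sum_{\substack{S \ni i \\ |S| \leq d}} \hat{f}^2(S) \;\leq\; (\infl^\kappa_i(f))^{2/\kappa}.$$

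For each $i \notin I$ we have $\infl^\kappa_i(f) < \alpha$, so writing $(\infl^\kappa_i(f))^{2/\kappa} = \infl^\kappa_i(f) \cdot (\infl^\kappa_i(f))^{(2-\kappa)/\kappa}$ and bounding the second factor by $\alpha^{(2-\kappa)/\kappa}$ gives $(\infl^\kappa_i(f))^{2/\kappa} \leq \alpha^{(2-\kappa)/\kappa} \cdot \infl^\kappa_i(f)$. Summing over $i \notin I$ and using $\sum_{i} \infl^\kappa_i(f) = \infl^\kappa(f)$ yields
$$\sum_{\substack{|S| \leq d \\ S \not\subseteq I}} \hat{f}^2(S) \;\leq\; \frac{\alpha^{(2-\kappa)/\kappa}}{(\kappa-1)^{d-1}} \cdot \infl^\kappa(f).$$
Plugging in $\alpha = \bigl((\kappa-1)^{d-1} \eps^2 / (2\,\infl^\kappa(f))\bigr)^{\kappa/(2-\kappa)}$ makes the right-hand side exactly $\eps^2/2$, and combining with the high-degree tail gives the claimed $\eps^2$ bound.

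The one genuinely delicate step is the hypercontractive estimate: one must carefully track that $T_\rho$ multiplies a level-$|S|$ coefficient by $\rho^{|S|}$ while $g_i = \frac{1}{2}\partial_i f$ has Fourier support on sets of size $|S|-1$ (relative to the original $f$), producing the asymmetric factor $(\kappa-1)^{d-1}$ — precisely the factor appearing in the theorem's definition of $\alpha$. The specific form of $\alpha$ is then forced by balancing this factor against $\alpha^{(2-\kappa)/\kappa} \cdot \infl^\kappa(f)$ so that the low-degree non-junta tail also comes out to $\eps^2/2$; the remaining manipulations are routine.
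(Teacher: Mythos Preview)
The paper does not itself prove this theorem; it is quoted from \cite{FeldmanVondrak:13arxiv} as a black-box generalization of Friedgut's theorem. Your argument is the standard Friedgut-via-hypercontractivity proof (split off the high-degree tail, apply $\|T_{\sqrt{\kappa-1}}\,g_i\|_2 \le \|g_i\|_\kappa$ to $g_i=\tfrac12\partial_i f$, then sum $\infl^\kappa_i(f)$ over $i\notin I$), and it is correct as written --- this is exactly the approach of the cited reference, so there is nothing substantive to compare.
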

Finally, to apply this generalization we need a bound on the total influence of any XOS function:
\begin{lemma}
\label{lem:xos}
Let $f:\zo^n\rightarrow \RR_+$ be an XOS function. Then $\infl^1(f) \leq \|f\|_1$.
In particular, for an XOS function $f:\zo^n\rightarrow [0,1]$, for all $\kappa \geq 1$,
$\infl^\kappa(f) \leq \infl^1(f) \leq 1$.
\end{lemma}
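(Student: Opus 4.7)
The plan is to deduce this directly from the pointwise self-bounding inequality~(\ref{eq:self-bound}) that the paper already records for XOS functions, after rewriting $\infl^1_i(f)$ in a monotone-friendly form.

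First I would observe that every XOS function is monotone non-decreasing (each clause $\sum_i w_{ci} x_i$ is non-decreasing and the pointwise maximum preserves monotonicity), so $\partial_i f(x) \geq 0$ for every $x$ and every $i$. In particular,
\[ \infl^1_i(f) \;=\; \E_{x \sim \U}\left[ \left|\tfrac{1}{2} \partial_i f(x)\right| \right] \;=\; \tfrac{1}{2}\, \E_{x \sim \U}[\partial_i f(x)]. \]
This is the only place where monotonicity is used, but it is essential: for a general (non-monotone) self-bounding function, the $(\cdot)_+$ in the definition of self-boundedness would prevent a direct identification of $\infl^1_i(f)$ with a signed expectation of $\partial_i f$.

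Next I would invoke (\ref{eq:self-bound}) pointwise: for every $x \in \zo^n$,
\[ \sum_{i:\, x_i = 1} \partial_i f(x) \;\leq\; f(x). \]
Taking expectations over $x \sim \U$ yields $\E_x\!\left[\sum_i x_i\, \partial_i f(x)\right] \leq \|f\|_1$. The key observation is that $\partial_i f(x)$ does not depend on the coordinate $x_i$, so under the uniform distribution $x_i$ and $\partial_i f(x)$ are independent and $\E[x_i\, \partial_i f(x)] = \tfrac{1}{2}\, \E[\partial_i f(x)] = \infl^1_i(f)$. Summing over $i$ gives $\infl^1(f) \leq \|f\|_1$, which is exactly the main claim.

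For the ``in particular'' part I would just note that when $f$ has range $[0,1]$, we have $|\tfrac{1}{2} \partial_i f(x)| \in [0,\tfrac{1}{2}] \subseteq [0,1]$, so for any $\kappa \geq 1$, $|\tfrac{1}{2}\partial_i f(x)|^{\kappa} \leq |\tfrac{1}{2}\partial_i f(x)|$ pointwise, hence $\infl^\kappa_i(f) \leq \infl^1_i(f)$ and, summing, $\infl^\kappa(f) \leq \infl^1(f) \leq \|f\|_1 \leq 1$. There is really no obstacle here; the only minor item to be careful about is the $\tfrac{1}{2}$ factor hidden in the definition of $\infl^1_i(f)$, which matches precisely with $\E[x_i] = \tfrac{1}{2}$ and makes the inequality come out clean.
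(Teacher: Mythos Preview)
Your proof is correct and is precisely the argument the paper has in mind: the lemma is stated without a separate proof, but the paper's discussion immediately preceding equation~(\ref{eq:self-bound}) already notes that taking expectations in the self-bounding inequality yields the desired influence bound, and your write-up fills in exactly those details (monotonicity to drop absolute values, independence of $x_i$ from $\partial_i f(x)$ to extract the factor $\tfrac12$, and the pointwise inequality $t^\kappa \leq t$ on $[0,1]$ for the ``in particular'' part). There is nothing to add or correct.
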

Combining these results with the degree bounds from Corollary \ref{cor:XOS-degree-upper} gives the following bound:
\begin{corollary}
\label{cor:XOS-junta-bound}
Let $f:\zo^n \rightarrow [0,1]$ be an XOS function and $\eps>0$. There exists a $2^{O(1/\eps)}$-junta $p$ of Fourier degree $O(1/\eps)$, such that $\|f - p\|_2 \leq \epsilon$. In particular, the spectral $\ell_1$-norm of $p$ is $ \|\hat{p}\|_1 = \sum_{S \subseteq [n]} |\hat{p}(S)| = 2^{O(1/\eps^2)}$.
\end{corollary}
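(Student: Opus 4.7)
The plan is to combine the degree bound from Corollary~\ref{cor:XOS-degree-upper} with the junta-extraction result of Theorem~\ref{th:rvjunta}, using the total-influence bound of Lemma~\ref{lem:xos} to control the threshold. First, I would apply Corollary~\ref{cor:XOS-degree-upper} with error $\eps/\sqrt{2}$ to fix a degree $d = O(1/\eps)$ such that $\sum_{|S|>d}\hat{f}(S)^2 \leq \eps^2/2$, which is exactly the hypothesis of Theorem~\ref{th:rvjunta}.

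Next, I would invoke Theorem~\ref{th:rvjunta} with the constant choice $\kappa = 3/2$, so that $(\kappa-1)^{d-1} = 2^{-(d-1)}$ and $\kappa/(2-\kappa) = 3$. Since Lemma~\ref{lem:xos} gives $\infl^{3/2}(f) \leq 1$, the threshold becomes $\alpha = \bigl(2^{-(d-1)}\cdot \eps^2/2\bigr)^{3} = 2^{-O(1/\eps)}$, so the high-influence set $I = \{i : \infl^{3/2}_i(f) \geq \alpha\}$ has size $|I| \leq \infl^{3/2}(f)/\alpha = 2^{O(1/\eps)}$. The theorem then yields a polynomial $p(x) = \sum_{S \in \I_d} \hat{f}(S)\chi_S(x)$ supported on $\I_d = \{S \subseteq I : |S| \leq d\}$, with $\|f - p\|_2 \leq \eps$. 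By construction, $p$ is a junta on the $|I| = 2^{O(1/\eps)}$ coordinates of $I$ and has Fourier degree at most $d = O(1/\eps)$, which gives the first claim.

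For the $\ell_1$ spectral bound, the number of Fourier terms of $p$ is at most $|\I_d| \leq (d+1)\binom{|I|}{d} \leq |I|^{d+1} = 2^{O(d \log |I|)} = 2^{O(1/\eps^2)}$, so Cauchy--Schwarz gives $\|\hat{p}\|_1 \leq \sqrt{|\I_d|} \cdot \|p\|_2 \leq 2^{O(1/\eps^2)}\,(\|f\|_2 + \eps) = 2^{O(1/\eps^2)}$, using $\|f\|_2 \leq \|f\|_\infty \leq 1$. I do not anticipate any real obstacle: the argument is essentially a plug-and-play combination of earlier results, and any fixed $\kappa \in (1,2)$ bounded away from the endpoints would work; the choice $\kappa = 3/2$ merely makes both the base $(\kappa-1)^{d-1}$ and the exponent $\kappa/(2-\kappa)$ behave so that $1/\alpha = 2^{O(d)} = 2^{O(1/\eps)}$.
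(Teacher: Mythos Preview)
Your proof is correct and follows essentially the same approach as the paper: apply Corollary~\ref{cor:XOS-degree-upper} to get $d = O(1/\eps)$, invoke Theorem~\ref{th:rvjunta} with a fixed $\kappa \in (1,2)$, and use Lemma~\ref{lem:xos} to bound $\infl^\kappa(f) \leq 1$. The only cosmetic difference is that the paper picks $\kappa = 4/3$ (so the exponent $\kappa/(2-\kappa)$ equals $2$) while you take $\kappa = 3/2$ (exponent $3$); as you note, any $\kappa$ bounded away from the endpoints yields $1/\alpha = 2^{O(d)} = 2^{O(1/\eps)}$, so the choice is immaterial. Your explicit Cauchy--Schwarz derivation of the $\|\hat{p}\|_1$ bound is also correct and fills in a step the paper leaves implicit.
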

\begin{proof}
By Corollary \ref{cor:XOS-degree-upper} we can use $d = O(1/\eps)$ in Theorem \ref{th:rvjunta} and we choose $\kappa = 4/3$.
Let $\alpha = \left((1/3)^{d-1} \cdot \eps^2/ (2 \cdot\infl^{4/3}(f)) \right)^{2}$ be the lower bound on the influence of variables in the junta given in Theorem \ref{th:rvjunta}.
By Lemma~\ref{lem:xos}, $\infl^{4/3}(f) \leq 1$.
Note that $g = \sum_{S \in \I_d} \hat{f}(S) \chi_S$ is a function of Fourier degree $d$ that depends only on variables in $I$. Further, $\|f-g\|_2^2 \leq \eps^2$ and the set $I$ has size at most
\equn{ |I| \leq \infl^{4/3}(f)/\alpha \leq 3^{2(d-1)} \cdot (2/\eps^2)^2  = 2^{O(1/\eps)}.}
\end{proof}

\subsection{Applications to Learning}
\label{sec:applications-learning}
\subsubsection{Preliminaries: Models of Learning}
We consider two models of learning based on the PAC model \cite{Valiant:84} which assumes that the learner has access to random examples of an unknown function from a known class of functions. Here we only consider learning over the uniform distribution over $\zo^n$ and hence simplify the definitions for this setting.
\begin{definition}[PAC learning with $\ell_2$-error]
Let $\F$ be a class of real-valued functions on $\zo^n$. An algorithm $\A$ PAC learns $\F$ with $\ell_2$ error over $\U$, if given $\epsilon > 0$, for every target function $f \in \F$, given access to random  independent samples from $\U$ labeled by $f$, with probability at least $2/3$,  $\A$ returns a hypothesis $h$ such that $\|f - h \|_2 \leq  \epsilon$. 
\end{definition}

\begin{definition}[Agnostic learning with $\ell_2$-error]
Let $\F$ be a class of real-valued functions on $\zo^n$. For any distribution $\cal P$ over $\zo^n \times [0,1]$, let $\mbox{opt}(\cal P,\F)$ be defined as: $$\mbox{opt}({\cal P},\F) =  \inf_{f \in \F} \sqrt{\E_{(x,\ell) \sim {\cal P}} [ (\ell - f(x) )^2]} .$$ An algorithm $\A$, is said to agnostically learn $\F$ with $\ell_2$ {\em excess error} over $\U$ if for every $\epsilon> 0$ and any distribution $\cal P$ on $\zo^n \times [0,1]$ such that the marginal of $\cal P$ on $\zo^n$ is $\U$, given access to random independent examples drawn from $P$, with probability at least $\frac{2}{3}$, $\A$ outputs a hypothesis $h$ such that $$\sqrt{\E_{(x,\ell) \sim \cal P} [ (h(x)- \ell)^2 ]} \leq \mbox{opt}(\cal P, \F) + \epsilon.$$
\end{definition}
We remark that one can also define optimality with respect to labels from a different range. For simplicity we use the $[0,1]$ range since that is also the range of the functions we consider.

For both PAC and agnostic learning we will rely on the fact that polynomials of degree $d$ over $n$ variables can be learned agnostically in time polynomial in $(e \cdot n/d)^d$. For the uniform distribution this follows from the agnostic properties of the low-degree algorithm by Linial \etal \cite{LinialMN:93} observed by Kearns \etal \cite{KearnsSS:94}.
\begin{theorem}
\label{thm:low-degree-regression}
Let $\cH_d$ be a class of all degree $d$ polynomials over $n$ variables of $\ell_2$-norm at most 1.
Then $\cH_d$ can be learned agnostically over $\U$ with excess $\ell_2$ error of $\eps$ in time polynomial in $t$ and $1/\eps$, where $t = \sum_{i=0}^d {n \choose i} = O((e \cdot n/d)^d)$.
\end{theorem}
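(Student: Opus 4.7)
The plan is to apply the Low-Degree Algorithm of Linial, Mansour, and Nisan in its standard agnostic form. Since the marginal of ${\cal P}$ on $\zo^n$ is $\U$, orthogonality of the parities under the uniform distribution gives, for any polynomial $p$ of degree at most $d$,
$$\E_{(x,\ell) \sim {\cal P}}\bigl[(p(x) - \ell)^2\bigr] \;=\; \sum_{|S| \leq d} \bigl(\hat{p}(S) - c_S\bigr)^2 \;+\; C,$$
where $c_S := \E_{(x,\ell) \sim {\cal P}}[\ell\,\chi_S(x)]$ and $C := \E_{{\cal P}}[\ell^2] - \sum_{|S| \leq d} c_S^2 \geq 0$ depends only on ${\cal P}$ (nonnegativity follows because $C = \E[(\ell - p^*(x))^2]$ with $p^*$ defined next). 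Consequently, the unrestricted minimizer over degree-$\leq d$ polynomials is $p^*(x) := \sum_{|S|\leq d} c_S \chi_S(x)$, and every such $p$ satisfies $\sqrt{\E[(p-\ell)^2]} = \sqrt{\|p - p^*\|_2^2 + C}$.

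The first step is to estimate each $c_S$ empirically by $\hat{c}_S := \frac{1}{m}\sum_{i=1}^m \ell_i \chi_S(x_i)$ on $m$ i.i.d.\ examples. Because $\ell_i \chi_S(x_i) \in [-1,1]$, Hoeffding's inequality together with a union bound over the $t$ coefficients of degree at most $d$ gives $|\hat{c}_S - c_S| \leq \eps/\sqrt{t}$ simultaneously for all $|S| \leq d$, provided $m = O(t \log(t/\delta) / \eps^2)$. Setting $\hat{p} := \sum_{|S|\leq d} \hat{c}_S \chi_S$, Parseval then yields $\|\hat{p} - p^*\|_2^2 = \sum_{|S| \leq d}(\hat{c}_S - c_S)^2 \leq \eps^2$.

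I would then simply output $\hat{p}$ as the hypothesis. Its $\ell_2$-error satisfies
$$\sqrt{\E[(\hat{p} - \ell)^2]} \;=\; \sqrt{\|\hat{p} - p^*\|_2^2 + C} \;\leq\; \|\hat{p} - p^*\|_2 + \sqrt{C} \;\leq\; \eps + \mbox{opt}({\cal P}, \cH_d),$$
using $\sqrt{a+b} \leq \sqrt{a}+\sqrt{b}$ for $a,b \geq 0$ together with the trivial bound $\sqrt{C} \leq \sqrt{\|q - p^*\|_2^2 + C} = \sqrt{\E[(q-\ell)^2]}$ valid for any $q \in \cH_d$, and taking the infimum over $q$. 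Computing each $\hat{c}_S$ takes $O(md)$ time from the $m = \mathrm{poly}(t, 1/\eps)$ samples, so the total running time and sample complexity are polynomial in $t$ and $1/\eps$. No step presents a real obstacle; the only point requiring a little care is that we never have to project $\hat{p}$ onto $\cH_d$ to enforce the $\ell_2$-norm constraint, because the additive $\sqrt{C}$ contribution in the square-root form of the squared-error objective is automatically absorbed into $\mbox{opt}({\cal P}, \cH_d)$.
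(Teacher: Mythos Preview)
Your proposal is correct and is precisely the argument the paper has in mind: the paper does not give a proof but simply cites the agnostic analysis of the low-degree algorithm of Linial, Mansour, and Nisan due to Kearns, Schapire, and Sellie, which is exactly the Fourier-coefficient estimation via Hoeffding plus union bound that you wrote out. Your observation that no projection onto $\cH_d$ is needed (because $\sqrt{C} \leq \mbox{opt}({\cal P},\cH_d)$ already) is a clean way to handle the norm constraint and is consistent with the paper's intent.
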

We remark that this result also holds over arbitrary distributions and follows from the standard uniform convergence bounds for linear models with squared loss (\eg \cite{KakadeST:08}).
\subsubsection{PAC and Agnostic Learning of Submodular and XOS Functions}
The algorithms for PAC learning of submodular and XOS functions in \cite{FeldmanVondrak:13arxiv} are based on two steps:
\begin{enumerate}
\item Identify a set of influential variables $J$ such that there exists a submodular (or XOS accordingly) function $h$ that depends only on variables in $J$ and is close to $f$.
\item Use regression over all parity functions of degree at most $d$ on variables in $J$ to find the polynomial that best fits sampled examples.
\end{enumerate}
For XOS functions the first step involves simply choosing variables with large enough Fourier coefficients of degree 1. The analysis of both of these steps in \cite{FeldmanVondrak:13arxiv} is in $\ell_2$ norm and therefore we can directly plug in our new bounds to obtain Theorem \ref{thm:XOS-learning-intro}.
\eat{
the following bound for learning XOS functions.
\begin{theorem}[Thm.~\ref{thm:XOS-learning-intro} restated]
\label{thm:learn-lowinfluence}
There exists an algorithm $\A$ that given $\eps > 0$ and access to random uniform examples of any XOS function $f:\zon \rightarrow [0,1]$, with probability at least $2/3$, outputs a function $h$, such that $\|f-h\|_2 \leq \epsilon$. $\A$ runs in time $\tilde{O}(n) \cdot r^{O(1/\eps)}$ and uses $\log n \cdot r^{O(1/\eps)}$ examples, where $r = \min\{n,2^{1/\eps}\}$.
\end{theorem}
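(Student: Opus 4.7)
The plan is to follow the two-stage paradigm of \cite{FeldmanVondrak:13arxiv}: first identify a small set $J$ of influential coordinates so that $f$ is close in $\ell_2$ to some degree-$d$ polynomial supported on $J$, and then perform low-degree $\ell_2$-regression (Theorem~\ref{thm:low-degree-regression}) over the monomials of degree at most $d = O(1/\eps)$ on $J$.

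For the identification stage, I would exploit monotonicity of XOS functions. Since $\partial_i f(x) \geq 0$ pointwise and $f \in [0,1]$,
\[ \hat{f}(\{i\}) \;=\; \tfrac{1}{2}\,\E\!\left[\partial_i f\right] \;=\; \infl^1_i(f) \;\geq\; \infl^{4/3}_i(f), \]
where the last inequality uses $|\partial_i f/2| \leq 1$. Hence every coordinate whose $\ell_{4/3}^{4/3}$-influence exceeds the threshold $\alpha = 2^{-O(1/\eps)}$ from Corollary~\ref{cor:XOS-junta-bound} also satisfies $\hat{f}(\{i\}) \geq \alpha$. I would draw $m = O(\log(n)/\alpha^2)$ uniform examples and, in one pass, compute empirical estimates $\tilde{f}_i$ of $\hat{f}(\{i\}) = \E[f(x)(1-2x_i)]$ for every $i \in [n]$; by Hoeffding together with a union bound over $n$ coordinates, with probability at least $5/6$ each $|\tilde{f}_i - \hat{f}(\{i\})| \leq \alpha/2$. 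Taking $J = \{i : \tilde{f}_i \geq \alpha/2\}$, this set contains the junta $I$ produced by Corollary~\ref{cor:XOS-junta-bound}, and by Parseval $|J| \leq 4/\alpha^2 = 2^{O(1/\eps)}$; trivially also $|J| \leq n$, so $|J| \leq r = \min\{n,2^{O(1/\eps)}\}$.

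For the regression stage, Corollary~\ref{cor:XOS-junta-bound} guarantees a degree-$d$ polynomial $p$ supported on $I \subseteq J$ with $\|f-p\|_2 \leq \eps/2$. Applying Theorem~\ref{thm:low-degree-regression} to the class of degree-$d$ polynomials over the variables of $J$ with excess $\ell_2$-error $\eps/2$ yields a hypothesis $h$ with $\|f-h\|_2 \leq \|f-p\|_2 + \eps/2 \leq \eps$. The number of monomials of degree at most $d$ over $|J| \leq r$ variables is $t = O((er/d)^d) = r^{O(1/\eps)}$, so this stage takes $r^{O(1/\eps)}$ additional samples and $r^{O(1/\eps)}$ time, with no $n$-dependence, since once $J$ is fixed each sample contributes only the $|J|$-bit restriction and its label.

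The main subtlety is carefully separating the dependence on $n$ from that on $r$: the $\log n$ factor in the sample complexity arises only from the union bound in the identification stage, and the $\tilde O(n)$ factor in the runtime arises only when processing these samples across all $n$ coordinates to form the $\tilde{f}_i$'s. The downstream regression stage operates purely on the projection of examples to $J$ and contributes just the $r^{O(1/\eps)}$ factor. The real content of the theorem is contained in Corollary~\ref{cor:XOS-junta-bound}; given the junta and degree bound, the learning reduction above is essentially off-the-shelf, with the only minor technicalities being the monotonicity-based argument that degree-$1$ Fourier mass is an upper witness for $\ell_{4/3}^{4/3}$-influence and the standard confidence amplification to reach the $2/3$ success probability.
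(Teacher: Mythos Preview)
Your approach is exactly the one the paper sketches: identify influential coordinates via their degree-$1$ Fourier coefficients (which, by monotonicity, dominate the $\ell_{4/3}^{4/3}$-influences appearing in the Friedgut-type junta bound), then run low-degree regression over the selected variables using Corollary~\ref{cor:XOS-junta-bound} and Theorem~\ref{thm:low-degree-regression}. Two cosmetic fixes: with the paper's convention $\chi_{\{i\}}(x)=1-2x_i$ one has $\hat f(\{i\})=-\tfrac12\E[\partial_i f]\le 0$, so you should threshold $|\tilde f_i|$ rather than $\tilde f_i$; and to invoke Parseval for $|J|\le O(1/\alpha^2)$ you need estimation accuracy strictly smaller than the selection threshold (e.g.\ accuracy $\alpha/4$, threshold $\alpha/2$), otherwise membership in $J$ gives no lower bound on $|\hat f(\{i\})|$.
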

}

In the case of submodular functions in \cite{FeldmanVondrak:13arxiv} the algorithm that finds the set of influential variables only ensures that there exists a function that depends on variables in $J$ and is close in $\ell_1$ distance to $f$. We therefore provide an analogous result for $\ell_2$. As in \cite{FeldmanVondrak:13arxiv} our algorithm selects all variables that have a large degree-1 or 2 Fourier coefficient (with different values of thresholds). The set of variables it returns is larger but analysis is substantially simpler than that in \cite{FeldmanVondrak:13arxiv}.

Before proceeding we will need a few simple definitions. For a real-valued $f$ over $\zo^n$ and $\eps \in [0,1]$ let $s_f(\eps)$ denote the smallest $s$ such that there exists an $s$-junta $g$ for which $\|f-g\|_2 \leq \eps$. For a set of indices $J \subseteq [n]$ we say that a function is a $J$-junta if it depends only on variables in $J$. For a function $f$ and a set of indices $I$, we define the {\em projection} of $f$ to $I$ to be the function over $\zo^n$ whose value depends only on the variables in $I$ and its value at $x_I$ is the expectation of $f$ over all the possible values of variables outside of $I$, namely $f_I(x) = \E_{y\sim \U}[f(x_I,y_{\bar{I}})]$. Observe that an equivalent representation of $f_I$ is as follows: $$f_I(x) = \sum_{S\subseteq I} \hat{f}(S) \chi_S(x).$$ We will also need the following bound on the number of variables with large degree-1 or degree-2 Fourier coefficient from \cite{FeldmanVondrak:13arxiv} and the property of degree-2 Fourier coefficient of submodular functions from \cite{FeldmanKV:13}.
\begin{lemma}[\cite{FeldmanVondrak:13arxiv}]
\label{lem:important-var-bound}
Let $f:\zo^n \rightarrow [0,1]$ be a submodular function and $\alpha,\beta > 0$. Let
$$I = \left\{ i  \left|\ |\hat{f}(\{i\})| \geq \alpha \right.\right\} \bigcup \left\{ i\ \left|\ \exists j, |\hat{f}(\{i,j\})| \geq \beta \right.\right\}\ .$$ Then $|I| \leq \frac{2}{\min\{\alpha,\beta\}}$.
\end{lemma}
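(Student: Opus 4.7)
The plan is to control $|I|$ by an amortized Fourier inequality that uses the $2$-self-bounding property of submodular functions only once.

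The first step is a reduction. By definition of $I$, every $i \in I$ satisfies either $|\hat{f}(\{i\})| \geq \alpha$ or $|\hat{f}(\{i,j\})| \geq \beta$ for some $j \neq i$, so
$$ \max\bigl(|\hat{f}(\{i\})|,\ \max_{j\neq i} |\hat{f}(\{i,j\})|\bigr) \;\geq\; \min\{\alpha,\beta\}. $$
Summing over $i \in I$ yields $|I|\min\{\alpha,\beta\} \leq \sum_{i=1}^n \max\bigl(|\hat{f}(\{i\})|,\ \max_{j\neq i}|\hat{f}(\{i,j\})|\bigr)$, so it remains to show that this sum is at most $2$.

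The second step is a pointwise bound of this max by the $\ell_1$-norm of $\partial_i f$. The derivative identity $\partial_i f(x) = -2\sum_{S \ni i} \hat{f}(S)\chi_{S \setminus \{i\}}(x)$ recalled in the preliminaries shows that the constant Fourier coefficient of $\partial_i f$ equals $-2\hat{f}(\{i\})$, and for $j \neq i$ its coefficient on $\chi_{\{j\}}$ equals $-2\hat{f}(\{i,j\})$. Since $|\E[g \chi_T]| \leq \E[|g|] = \|g\|_1$ for any $g:\zo^n \to \RR$ and any $T$, both $|\hat{f}(\{i\})|$ and $|\hat{f}(\{i,j\})|$ (for every $j \neq i$) are at most $\tfrac{1}{2}\|\partial_i f\|_1$.

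The third step is to bound $\sum_i \tfrac{1}{2}\|\partial_i f\|_1$. Using $|\partial_i f(x)| = |f(x) - f(x \oplus e_i)|$ together with the symmetry $x \mapsto x \oplus e_i$, one has $\tfrac{1}{2}\|\partial_i f\|_1 = \E[(f(x) - f(x \oplus e_i))_+]$. Summing over $i$ and applying the $2$-self-bounding inequality $\sum_i (f(x) - f(x \oplus e_i))_+ \leq 2f(x)$ pointwise gives $\sum_i \tfrac{1}{2}\|\partial_i f\|_1 \leq 2\,\E[f] \leq 2$. Chaining with the previous two steps completes the argument.

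The main subtlety is that bounding $\sum_i |\hat{f}(\{i\})|$ and $\sum_i \max_{j\neq i}|\hat{f}(\{i,j\})|$ separately by the same method would double the final constant, giving only $|I| \leq 4/\min\{\alpha,\beta\}$. The tight bound is recovered by bounding the pointwise maximum of the two quantities, which allows both defining conditions of $I$ to be charged to a single application of $2$-self-boundedness.
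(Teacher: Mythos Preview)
Your argument is correct. The paper does not give its own proof of this lemma; it is quoted as a result from \cite{FeldmanVondrak:13arxiv}, so there is no in-paper proof to compare against. Your three steps are all sound: the reduction to bounding $\sum_i \max(|\hat f(\{i\})|,\max_{j\neq i}|\hat f(\{i,j\})|)$ is valid; the inequality $|\hat f(S)|\le \tfrac12\|\partial_i f\|_1$ for any $S\ni i$ follows from $\widehat{\partial_i f}(S\setminus\{i\})=-2\hat f(S)$ and $|\widehat{g}(T)|\le \|g\|_1$; and the identity $\tfrac12\|\partial_i f\|_1=\E[(f(x)-f(x\oplus e_i))_+]$ together with the $2$-self-bounding property of nonnegative submodular functions (stated in the paper's preliminaries) gives $\sum_i \tfrac12\|\partial_i f\|_1 = \infl^1(f)\le 2\,\E[f]\le 2$.

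Your closing remark is also on point: had you bounded the two sets in the union separately, the same method would give $|I|\le 4/\min\{\alpha,\beta\}$; taking the pointwise maximum before summing is exactly what recovers the stated constant $2$. In the paper's notation this is simply the observation that every Fourier coefficient $\hat f(S)$ with $S\ni i$ is dominated by $\infl_i^1(f)$, and $\infl^1(f)\le 2$ for submodular $f:\{0,1\}^n\to[0,1]$ (the submodular analogue of Lemma~\ref{lem:xos}).
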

\begin{lemma}[\cite{FeldmanKV:13}]
\label{lem:upp-bound-sum}
Let $f:\zon \rightarrow [0,1]$ be a submodular function and $i,j \in [n]$, $i\neq j$.
\equn{|\hat{f}(\{i,j\})| \geq \frac12 \sum_{S \ni i,j} (\hat{f}(S))^2  .}
\end{lemma}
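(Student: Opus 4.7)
The plan is to express everything in terms of the second-degree discrete derivative $\partial_{ij} f$, whose Fourier structure ties the quantity $\sum_{S \ni i,j} \hat{f}^2(S)$ directly to $\hat{f}(\{i,j\})$, and to exploit two structural facts: (i) submodularity implies $\partial_{ij} f(x) \leq 0$ pointwise, and (ii) the range $[0,1]$ of $f$ implies $\|\partial_{ij} f\|_\infty \leq 2$.

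Concretely, using the identity from the preliminaries, $\partial_{ij} f(x) = 4 \sum_{S \ni i,j} \hat{f}(S) \chi_{S \setminus \{i,j\}}(x)$. Reading off the constant term gives $\E[\partial_{ij} f(x)] = 4 \hat{f}(\{i,j\})$, while Parseval gives $\|\partial_{ij} f\|_2^2 = 16 \sum_{S \ni i,j} \hat{f}^2(S)$. Now I would invoke submodularity: since $\partial_{ij} f \leq 0$ everywhere, the mean equals (up to sign) the $\ell_1$-norm, so $\|\partial_{ij} f\|_1 = -\E[\partial_{ij} f(x)] = 4|\hat{f}(\{i,j\})|$ (and in particular $\hat{f}(\{i,j\}) \leq 0$). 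Finally, applying the elementary interpolation $\|g\|_2^2 \leq \|g\|_\infty \cdot \|g\|_1$ to $g = \partial_{ij} f$ with $\|g\|_\infty \leq 2$ yields
\[
16 \sum_{S \ni i,j} \hat{f}^2(S) \;=\; \|\partial_{ij} f\|_2^2 \;\leq\; 2\, \|\partial_{ij} f\|_1 \;=\; 8\, |\hat{f}(\{i,j\})|,
\]
which rearranges to $|\hat{f}(\{i,j\})| \geq 2 \sum_{S \ni i,j} \hat{f}^2(S)$, implying the claim with room to spare.

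There is no real obstacle here; the only substantive step is recognizing that submodularity turns $\partial_{ij} f$ into a non-positive function, so its mean (which is controlled by a single Fourier coefficient) equals its $\ell_1$-norm, after which the pointwise bound $|\partial_{ij} f| \leq 2$ does the rest via a standard $\ell_2 \leq \sqrt{\ell_\infty \cdot \ell_1}$ comparison.
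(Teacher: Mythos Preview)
Your proof is correct. The paper does not include its own proof of this lemma---it is simply quoted from \cite{FeldmanKV:13}---so there is nothing in the paper to compare against directly. Your argument is the natural one: submodularity forces $\partial_{ij} f \leq 0$ pointwise, so $\|\partial_{ij} f\|_1 = -\E[\partial_{ij} f] = -4\hat{f}(\{i,j\}) = 4|\hat{f}(\{i,j\})|$; Parseval gives $\|\partial_{ij} f\|_2^2 = 16\sum_{S \ni i,j} \hat{f}^2(S)$; and the range bound $|\partial_{ij} f| \leq 2$ (immediate from $f \in [0,1]$) closes the loop via $\|g\|_2^2 \leq \|g\|_\infty \|g\|_1$. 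As you observe, this in fact yields $|\hat{f}(\{i,j\})| \geq 2\sum_{S \ni i,j} \hat{f}^2(S)$, a factor of $4$ stronger than the stated bound.
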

We now state the guarantees of our algorithm for finding relevant variables.
\begin{theorem}
\label{thm:find-junta-examples}
Let $f:\zo^n \rightarrow [0,1]$ be a submodular function. There exists an algorithm, that given any $\eps > 0$ and access to random and uniform examples of $f$, with probability at least $5/6$, finds a set of variables $I$ of size at most $32 \cdot (s_f(\eps/2))^2/\eps^2$ such that there exists a submodular $I$-junta $h$ satisfying $\|f -h\|_2 \leq \eps$. The algorithm runs in time $O(n^2 \log (n) \cdot (s_f(\eps/2))^4/\eps^4)$ and uses $O(\log (n) \cdot (s_f(\eps/2))^4/\eps^4)$ examples.
\end{theorem}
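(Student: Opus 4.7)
I plan to define $I$ as the set of variables carrying a detectably large degree-$1$ or degree-$2$ empirical Fourier coefficient, and to take $h := f_I$, the $\ell_2$-projection of $f$ onto $I$-juntas. The crucial observation is that $f_I(x) = \E_y[f(x_I, y_{\bar I})]$ is a convex combination of restrictions of the submodular function $f$, so $h$ is automatically submodular and an $I$-junta. Thus only two things need verification: $|I|$ is small, and $\|f - f_I\|_2 \leq \eps$.

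Let $s := s_f(\eps/2)$ and $\alpha := \eps^2/(12\, s^2)$. The algorithm draws $m = O(\log(n)/\alpha^2) = O(\log(n)\,s^4/\eps^4)$ uniform examples and computes the empirical Fourier estimator $\widetilde f(S) = \frac{1}{m}\sum_{\ell} f(x^{(\ell)})\chi_S(x^{(\ell)})$ for every $S$ with $|S|\leq 2$. Because $|f\chi_S|\leq 1$, Hoeffding's inequality combined with a union bound over the $O(n^2)$ relevant sets ensures that with probability at least $5/6$ every estimate lies within $\alpha/4$ of its true value, and I condition on this event. Declaring $I$ to be the variables $i$ for which $|\widetilde f(\{i\})| \geq \alpha$ or $\max_j|\widetilde f(\{i,j\})|\geq \alpha$ ensures that $I \subseteq \{i : |\hat f(\{i\})| \geq 3\alpha/4 \text{ or } \exists j,\ |\hat f(\{i,j\})| \geq 3\alpha/4\}$, so Lemma~\ref{lem:important-var-bound} applied with thresholds $3\alpha/4$ yields $|I| \leq 8/(3\alpha) = 32\,s^2/\eps^2$, matching the claimed bound.

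For the approximation bound, let $g$ be any submodular $J$-junta with $|J|\leq s$ and $\|f-g\|_2 \leq \eps/2$, as furnished by the definition of $s_f(\eps/2)$. Since the Fourier coefficients of $f_I$ equal those of $f$ on $S \subseteq I$ and vanish otherwise,
\[
\|f - f_I\|_2^2 \;=\; \sum_{S\not\subseteq I}\hat f(S)^2 \;=\; \underbrace{\sum_{S\subseteq J,\, S\not\subseteq I}\hat f(S)^2}_{(A)} \;+\; \underbrace{\sum_{S\not\subseteq J}\hat f(S)^2}_{(B)}.
\]
Since $\hat g(S) = 0$ for $S\not\subseteq J$, $(B) \leq \sum_{S\not\subseteq J}(\hat f(S)-\hat g(S))^2 \leq \|f-g\|_2^2 \leq \eps^2/4$. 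Every set counted in $(A)$ contains some $i\in J\setminus I$, so $(A) \leq \sum_{i\in J\setminus I}\sum_{S\subseteq J,\,S\ni i}\hat f(S)^2$. For each such $i$, both $|\hat f(\{i\})|$ and every $|\hat f(\{i,j\})|$ lie below $5\alpha/4$. Lemma~\ref{lem:upp-bound-sum} therefore gives $\sum_{S\supseteq\{i,j\}}\hat f(S)^2 \leq 5\alpha/2$, and the elementary inequality $\sum_{S\subseteq J,\,S\ni i,\,|S|\geq 2}\hat f(S)^2 \leq \sum_{j\in J\setminus\{i\}}\sum_{S\supseteq\{i,j\}}\hat f(S)^2$ (each $S$ of size $\geq 2$ appearing $|S|-1$ times on the right) bounds this quantity by $5(s-1)\alpha/2$. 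Adding $\hat f(\{i\})^2 \leq 25\alpha^2/16$ and summing over at most $s$ indices $i\in J\setminus I$ produces $(A) \leq 25 s\alpha^2/16 + 5s^2\alpha/2 \leq \eps^2/4$ for the chosen $\alpha$, and hence $\|f-f_I\|_2 \leq \eps$.

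The conceptual crux is controlling $(A)$: generic smallness of degree-$1$ and degree-$2$ Fourier coefficients does not constrain higher-degree Fourier mass. The submodular-specific Lemma~\ref{lem:upp-bound-sum} is what bridges this gap, since it lets each small degree-$2$ coefficient $\hat f(\{i,j\})$ simultaneously control the entire tail $\sum_{S\supseteq\{i,j\}}\hat f(S)^2$. A minor technicality is that the algorithm must know $s_f(\eps/2)$ in order to set $\alpha$; this can be handled by trying a geometric sequence of guesses for $s$ and returning the smallest one for which the selected $I$ admits a sufficiently good polynomial regression fit on a held-out sample, which adds only a logarithmic factor to the stated time and sample complexity.
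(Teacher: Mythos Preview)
Your argument is correct and follows essentially the same route as the paper: select variables with large empirical degree-$1$ or degree-$2$ Fourier coefficients, bound $|I|$ via Lemma~\ref{lem:important-var-bound}, take the projection $f_I$ (which is submodular as an average of restrictions), and control the missing Fourier mass using Lemma~\ref{lem:upp-bound-sum}. The paper uses two different thresholds, $\eps/(4\sqrt{s})$ for degree~$1$ and $\eps^2/(8s^2)$ for degree~$2$, and exhibits $f_{I'\cap J}$ rather than $f_I$ as the approximating junta, but these are cosmetic differences; your uniform threshold $\alpha=\eps^2/(12s^2)$ and direct use of $f_I$ work just as well and are arguably cleaner.

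One small correction: the displayed decomposition should be an inequality, not an equality. The sets $\{S\subseteq J,\ S\not\subseteq I\}$ and $\{S\not\subseteq J\}$ are disjoint, but their union also contains sets with $S\not\subseteq J$ and $S\subseteq I$, which do not appear in $\sum_{S\not\subseteq I}\hat f(S)^2$. So you only have $\|f-f_I\|_2^2 \le (A)+(B)$, which is of course all you need. Also, the final remark about guessing $s$ is unnecessary here: in the paper's application (Corollary~\ref{cor:find-junta-examples}) one simply plugs in the known bound $s_f(\eps/2)=O(\log(1/\eps)/\eps^2)$, so the algorithm can use that explicit upper bound for $s$ rather than searching.
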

\begin{proof}
Denote $s = s_f(\eps/2)$ and let $J$ be the set of variables such that there exists a $J$-junta $g$ such that $\|f-g\|_2 \leq \eps/2$. We can assume without loss of generality that $g = f_J$ since $f_J$ is a submodular $J$-junta and it is the $J$-junta closest to $f$ in $\ell_2$ distance.
Let $$I' = \left\{ i  \left|\  |\hat{f}(\{i\})| \geq \frac{\eps}{4 \cdot \sqrt{s}} \right.\right\} \bigcup \left\{ i  \left|\  \exists j, |\hat{f}(\{i,j\})| \geq \frac{\eps^2}{8 \cdot s^2} \right.\right\}\ .$$
We claim that $\|f_J - f_{I'\cap J} \|_2 \leq \eps/2$. Note that by triangle inequality this would imply that $\|f - f_{I'\cap J} \|_2 \leq \eps$ meaning that it would suffice to find the variables in $I'$.

Using Lemma \ref{lem:upp-bound-sum} and the definition of $I'$, we prove the claim as follows:
\alequn{
\|f_J - f_{I'\cap J} \|_2^2 & = \sum_{S \subseteq J,\ S \not\subseteq I'} \hat{f}(S)^2
\\ & = \sum_{i \in J\setminus I'} \hat{f}(\{i\})^2 + \sum_{S \subseteq J,\ S \not\subseteq I',\ |S| \geq 2} \hat{f}(S)^2
\\ & \leq  |J\setminus I'| \cdot \frac{\eps^2}{16 \cdot s} + \sum_{i,j \in J,\ \{i,j\} \not\subseteq I',\ i>j}  \sum_{S \subseteq J,\ S \ni i,j } \hat{f}(S)^2
\\ & \leq \frac{\eps^2}{16} + \sum_{i,j \in J,\ \{i,j\} \not\subseteq I',\ i>j} 2 \cdot  |\hat{f}(\{i,j\})|
\\ & \leq \frac{\eps^2}{16} + \frac{|J|^2}{2} \cdot 2 \cdot \frac{\eps^2}{8 \cdot s^2} \leq \frac{\eps^2}{4}.
}

All we need now is to find a small set of indices $I \supseteq I'$. We simply estimate degree-1 and 2 Fourier coefficients of $f$ to accuracy $\eps^2/(32 \cdot s^2)$ with confidence at least $5/6$ using random examples. Let $\tilde{f}(S)$ for $S\subseteq [n]$ of size 1 or 2 denote the obtained estimates. We define
$$I = \left\{ i \lcond |\tilde{f}(\{i\})| \geq \frac{3\eps}{16 \cdot \sqrt{s}} \right.\right\} \bigcup \left\{ i \lcond \exists j, |\tilde{f}(\{i,j\})| \geq \frac{3\eps^2}{32 \cdot s^2} \right.\right\}\ .$$
If estimates are within the desired accuracy, then clearly, $I \supseteq I'$. At the same time $I \subseteq I''$, where
$$I'' = \left\{ i \lcond |\hat{f}(\{i\})| \geq \frac{\eps}{8\cdot \sqrt{s}} \right.\right\} \bigcup \left\{ i \lcond \exists j, |\hat{f}(\{i,j\})| \geq \frac{\eps^2}{16 \cdot s^2} \right.\right\}\ .$$ By Lem.~\ref{lem:important-var-bound}, $|I''| \leq 32 \cdot s^2/\eps^2$.

Finally, to bound the running time we observe that, by the standard application of Chernoff bound with the union bound, $O(\log (n) \cdot  s^4/\eps^4)$ random examples are sufficient to obtain the desired estimates with confidence of $5/6$. The estimation of the coefficients can be done in $O(n^2 \log(n) \cdot s^4/\eps^4)$ time.
\end{proof}

We can now use the result from \cite{FeldmanVondrak:13arxiv} that for every submodular function $f:\zon \rightarrow [0,1]$, $s_f(\eps/2) = O(\log(1/\eps)/\eps^2)$ to obtain the following corollary.
\begin{corollary}
\label{cor:find-junta-examples}
Let $f:\zo^n \rightarrow [0,1]$ be a submodular function. There exists an algorithm, that given any $\eps > 0$ and access to random and uniform examples of $f$, with probability at least $5/6$, finds a set of variables $I$ of size $\tilde{O}(1/\eps^{6})$ such that there exists a submodular $I$-junta $h$ satisfying $\|f -h\|_2 \leq \eps$. The algorithm runs in time $\tilde{O}(n^2 /\eps^{12})$ and uses $\tilde{O}(\log (n) / \eps^{12})$ examples.
\end{corollary}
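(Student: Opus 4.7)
The plan is to simply instantiate Theorem \ref{thm:find-junta-examples} with a known upper bound on the junta size $s_f(\eps/2)$ for submodular functions. Recall that the cited result from \cite{FeldmanVondrak:13arxiv} gives $s_f(\eps/2) = O(\log(1/\eps)/\eps^2)$ for any submodular $f:\zon \to [0,1]$. Thus the proof is essentially a substitution step with no new ideas required.

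First I would invoke Theorem \ref{thm:find-junta-examples} with the stated $\eps$ and target function $f$, which produces a set $I$ of variables of size at most $32 \cdot (s_f(\eps/2))^2/\eps^2$ along with a submodular $I$-junta $h$ satisfying $\|f - h\|_2 \leq \eps$. Substituting the bound $s_f(\eps/2) = O(\log(1/\eps)/\eps^2)$ yields $|I| = O(\log^2(1/\eps)/\eps^6) = \tilde{O}(1/\eps^{6})$, matching the claimed junta size.

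Next I would read off the running time and sample complexity bounds from Theorem \ref{thm:find-junta-examples}: the time is $O(n^2 \log(n) \cdot (s_f(\eps/2))^4/\eps^4)$ and the number of examples is $O(\log(n) \cdot (s_f(\eps/2))^4/\eps^4)$. Plugging in $s_f(\eps/2) = O(\log(1/\eps)/\eps^2)$ raises these to $O(n^2 \log(n) \cdot \log^4(1/\eps)/\eps^{12}) = \tilde{O}(n^2/\eps^{12})$ and $O(\log(n) \cdot \log^4(1/\eps)/\eps^{12}) = \tilde{O}(\log(n)/\eps^{12})$, respectively. Since all of $\log(n)$, $\log(1/\eps)$ factors are hidden by the $\tilde{O}$ notation, this yields exactly the stated bounds.

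Because this is a direct corollary, no real obstacle arises; the only thing to be a bit careful about is the accounting of the $\log(1/\eps)$ factors (which come through as polylog and are absorbed into $\tilde{O}$) and verifying that the confidence of $5/6$ in Theorem \ref{thm:find-junta-examples} is inherited unchanged. If one wants to boost this to arbitrarily high confidence $1-\delta$, a standard repetition argument would suffice, but the statement only requires the $5/6$ bound that is already guaranteed. Hence the corollary is established.
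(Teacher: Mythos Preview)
Your proposal is correct and matches the paper's approach exactly: the paper simply invokes the bound $s_f(\eps/2) = O(\log(1/\eps)/\eps^2)$ from \cite{FeldmanVondrak:13arxiv} and substitutes it into Theorem~\ref{thm:find-junta-examples}. Your accounting of the polylogarithmic factors and the resulting $\tilde{O}$ bounds is accurate.
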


We use Corollary \ref{cor:find-junta-examples} with the least squares regression over polynomials of degree $O(\log(1/\eps)/\eps^{4/5})$ on the influential variables to obtain the learning algorithm claimed in Theorem \ref{thm:submod-learning-intro}.

Finally, for completeness we also state the corollaries for agnostic learning of XOS and submodular functions:
\begin{theorem}
Let $\C_s$ be the class of all submodular functions with range in $[0,1]$. There exists an algorithm that learns $\C_s$ agnostically with excess $\ell_2$-error $\eps$ and runs in time $n^{O(\log(1/\eps)/\eps^{4/5})}$.
\end{theorem}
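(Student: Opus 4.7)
The plan is to directly combine the low-degree approximation result for submodular functions (Theorem \ref{thm:submod-lowdegree}) with the agnostic learnability of low-degree polynomials (Theorem \ref{thm:low-degree-regression}). The idea is that if every submodular function is well approximated in $\ell_2$ by a low-degree polynomial, then running agnostic regression over polynomials of that degree automatically competes with the best submodular function up to a small additive loss.

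First, I would set $d = c \cdot \log(1/\epsilon) / \epsilon^{4/5}$ for the constant $c$ given by Theorem \ref{thm:submod-lowdegree}, so that for every submodular $f:\zo^n \rightarrow [0,1]$ there exists a polynomial $p_f$ of degree at most $d$ with $\|f - p_f\|_2 \leq \epsilon/2$ (where the norm is with respect to $\U$). Since $\|f\|_2 \leq 1$, the triangle inequality gives $\|p_f\|_2 \leq 1 + \epsilon/2 \leq 2$, so $p_f$ lies in the class $\cH_d$ of degree-$d$ polynomials of $\ell_2$-norm at most $2$ (a trivial rescaling of the class in Theorem \ref{thm:low-degree-regression}).

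Next, fix any distribution $\cal P$ on $\zon \times [0,1]$ whose marginal on $\zon$ is $\U$, and let $f^* \in \C_s$ be a near-optimizer, i.e. $\sqrt{\E_{(x,\ell)\sim\cal P}[(\ell - f^*(x))^2]} \leq \mbox{opt}(\cal P, \C_s) + \epsilon/4$. Because the marginal is $\U$, the approximation $\|f^* - p_{f^*}\|_2 \leq \epsilon/2$ holds with respect to the same distribution appearing in the agnostic objective, so the Minkowski inequality (applied to random variables $\ell - f^*(x)$ and $f^*(x) - p_{f^*}(x)$ on $\cal P$) yields
\[
\sqrt{\E_{(x,\ell)\sim\cal P}[(\ell - p_{f^*}(x))^2]} \leq \mbox{opt}(\cal P, \C_s) + \epsilon/2 + \epsilon/4.
\]
Hence $\mbox{opt}(\cal P, \cH_d) \leq \mbox{opt}(\cal P, \C_s) + 3\epsilon/4$.

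The algorithm is then simply to run the agnostic learner from Theorem \ref{thm:low-degree-regression} for $\cH_d$ with excess error $\epsilon/4$; by the triangle inequality chain above, its output $h$ satisfies $\sqrt{\E_{\cal P}[(h(x)-\ell)^2]} \leq \mbox{opt}(\cal P, \C_s) + \epsilon$. The running time is polynomial in $t = \sum_{i=0}^{d} \binom{n}{i} = O((en/d)^d)$ and $1/\epsilon$, which for $d = O(\log(1/\epsilon)/\epsilon^{4/5})$ is $n^{O(\log(1/\epsilon)/\epsilon^{4/5})}$, as claimed. There is no real obstacle here; the only bookkeeping issue is handling the mild inflation of the polynomial's $\ell_2$-norm above $1$, which is absorbed into the constant in $\cH_d$'s norm bound, and verifying that the low-degree algorithm's guarantees are with respect to the same marginal distribution, which holds by assumption.
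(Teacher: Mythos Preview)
Your proposal is correct and follows exactly the approach the paper intends: the theorem is stated in the paper without a detailed proof, as a direct corollary of Theorem~\ref{thm:submod-lowdegree} (the $O(\log(1/\eps)/\eps^{4/5})$ degree bound) combined with Theorem~\ref{thm:low-degree-regression} (agnostic learning of degree-$d$ polynomials), and your triangle-inequality chaining of the errors is the standard way to fill in the details.
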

\begin{theorem}
Let $\C_x$ be the class of all XOS functions with range in $[0,1]$. There exists an algorithm that learns $\C_x$ agnostically with excess $\ell_2$-error $\eps$ and runs in time $n^{O(1/\eps)}$.
\end{theorem}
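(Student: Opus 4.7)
The plan is to reduce agnostic $\ell_2$-learning of $\C_x$ to agnostic $\ell_2$-regression over the class of bounded-norm, low-degree polynomials, using the low-degree approximability guarantee from Corollary~\ref{cor:XOS-degree-upper} as the bridge. Set $d = O(1/\eps)$. By Corollary~\ref{cor:XOS-degree-upper}, every XOS function $f:\zon\rightarrow [0,1]$ admits a polynomial $p_f$ of degree $d$ with $\|f-p_f\|_2 \leq \eps/2$; since $\|f\|_2 \leq 1$, we also get $\|p_f\|_2 \leq 1+\eps/2$. Hence the class $\cH_d$ of degree-$d$ polynomials of $\ell_2$-norm at most $2$ forms an $(\eps/2)$-cover of $\C_x$ in $\ell_2$ (over $\U$).

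Next, I would run the agnostic regression algorithm of Theorem~\ref{thm:low-degree-regression} (applied to $\cH_d$, after a trivial constant-factor rescaling to accommodate norm at most $2$ rather than $1$) with excess error parameter $\eps/2$ on samples drawn from the input distribution $\cal P$. Since $\binom{n}{\leq d} = O((en/d)^d)$, this runs in time $n^{O(1/\eps)}$ and, with probability at least $2/3$, outputs a hypothesis $h$ satisfying
$$\sqrt{\E_{(x,\ell)\sim \cal P}[(h(x)-\ell)^2]} \ \leq\  \inf_{p \in \cH_d} \sqrt{\E_{(x,\ell)\sim \cal P}[(p(x)-\ell)^2]} \ + \ \eps/2.$$

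To close the reduction, let $f^*\in\C_x$ (nearly) attain $\mbox{opt}(\cal P,\C_x)$ and let $p^* = p_{f^*}\in\cH_d$ be its approximant from Corollary~\ref{cor:XOS-degree-upper}. Since the marginal of $\cal P$ on $\zon$ is $\U$, the $\cal P$-norm and the $\U$-norm agree on functions of $x$ alone, so $\|f^* - p^*\|_{{\cal P},2} = \|f^*-p^*\|_2 \leq \eps/2$. By the triangle inequality in $L^2(\cal P)$,
$$\inf_{p \in \cH_d}\|p-\ell\|_{{\cal P},2} \ \leq \ \|p^*-\ell\|_{{\cal P},2} \ \leq \ \|f^*-\ell\|_{{\cal P},2} + \|f^*-p^*\|_{{\cal P},2} \ \leq \ \mbox{opt}(\cal P,\C_x) + \eps/2.$$
Combining this with the regression guarantee yields $\|h - \ell\|_{{\cal P},2} \leq \mbox{opt}(\cal P,\C_x) + \eps$, proving the theorem.

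There is no substantive obstacle here: all of the novel work has already been absorbed into Corollary~\ref{cor:XOS-degree-upper}, and the rest is a textbook reduction from agnostic learning of a class $\C$ to agnostic regression on an $\ell_2$-cover of $\C$. The only bookkeeping points are the constant-factor norm rescaling required by Theorem~\ref{thm:low-degree-regression} and replacing $\eps$ by $\eps/2$ throughout; both are routine.
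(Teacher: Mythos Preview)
Your proposal is correct and matches the paper's intended argument. The paper does not spell out a proof for this theorem; it states it ``for completeness'' as an immediate corollary of Corollary~\ref{cor:XOS-degree-upper} combined with the agnostic low-degree regression of Theorem~\ref{thm:low-degree-regression}, which is exactly the reduction you carry out (including the routine $\eps/2$ split and triangle-inequality step).
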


\newcommand{\maj}{\mathsf{maj}}
\newcommand{\hs}{\mathsf{hs}}
\newcommand{\NS}{\mathsf{NS}}
\section{Lower Bounds}
\label{sec:lower-bounds}
In this section we prove tight lower bounds on low-degree spectral concentration and learning of monotone submodular, XOS and self-bounding functions. 

\subsection{Monotone Submodular Functions}
We start by showing that for any $\eps > 0$ there exists an explicit monotone submodular function over $\Theta(\eps^{-4/5})$ variables that requires degree $\Theta(\eps^{-4/5})$ to $\ell_2$-approximate within $\eps$. The ``hockey-stick" function of $k$ (out of $n$) variables is defined as follows: $\hs_k(x)  = \min\left\{1, 2 \cdot w_k(x)/k\right\}$, where $w_k(x) = \sum_{i=1}^{k} x_i$ is the Hamming weight of the first $k$ bits of $x$. In \cite{FeldmanKV:13} it was shown that this function has a Fourier coefficient of degree $k$ whose value is at least $\Omega(k^{-3/2})$. This immediately implies a lower bound of $\Omega(\eps^{-2/3})$ on $\aedeg(\hs_k)$ for $k=\Theta(\eps^{-2/3})$. We now give a more careful analysis of the low-degree spectral concentration of $\hs_k$ that leads to the nearly tight lower bound.

The hockey-stick function is closely related to the well-studied Boolean majority function for which tight spectral concentration bounds are known \cite{ODonnell14:book}. Specifically, it is easy to see that for every $i$,
\equ{\partial_i \hs_k(x) = 2(1-\maj_k(x))/k, \label{eq:hs2maj}} where $\maj_k(x)  = 1$ if $w_k(x) \geq k/2$ and $0$ otherwise. This correspondence allows us to easily obtain a lower bound on the low-degree spectral concentration of $\hs_k(x)$ .

\begin{lemma}
\label{lem:submod-conc-lower}
For any $k \leq n$ and $d \leq k/2$, $W^{> d}(\hs_k) = \Omega(d^{-3/2}/k)$. In particular, for some constant $c_1$, $k = c_1 \eps^{-4/5}$ and $d = \lfloor k/2 \rfloor$ gives $W^{> d}(\hs_k) \geq \eps^2$.
\end{lemma}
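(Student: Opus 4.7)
The plan is to leverage the tight connection between $\hs_k$ and the Boolean majority function of $k-1$ variables provided by~(\ref{eq:hs2maj}), and then invoke the well-studied Fourier spectrum of majority. For concreteness I fix $k$ even; the case of odd $k$ requires only a small boundary adjustment but introduces no new ideas.

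\emph{Step 1 (per-level reduction to majority).} Equation~(\ref{eq:hs2maj}), specialized to $k$ even and any $i \in [k]$, gives the exact relation $\partial_i \hs_k = \tfrac{2}{k}(1 - \maj_{k-1})$, where $\maj_{k-1}$ is the $\{0,1\}$-valued majority of the $k-1$ variables in $[k] \setminus \{i\}$. Combining this with the identity $\hat{f}(S) = -\tfrac{1}{2}\widehat{\partial_i f}(S \setminus \{i\})$ valid for $i \in S$, and exploiting both the total symmetry of $\hs_k$ on $[k]$ and the analogous symmetry of $\maj_{k-1}$, I would derive the clean per-level identity
\[
W^s(\hs_k) \;=\; \frac{W^{s-1}(\maj_{k-1})}{k\,s} \qquad \text{for every } 1 \leq s \leq k.
\]

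\emph{Step 2 (majority Fourier spectrum).} Next I invoke the classical Fourier spectrum of Boolean majority: for the $\{0,1\}$-valued $\maj_m$ with $m$ odd, $W^t(\maj_m) = 0$ at every even $t \geq 2$ (since $2\maj_m - 1$ is an odd function), while at every odd $t$ with $1 \leq t \leq m$ one has $W^t(\maj_m) = \Theta(t^{-3/2})$, uniformly in $m$. The matching lower bound can be extracted either from the explicit Krawtchouk-type formulas for $\widehat{\maj}_m(S)$ combined with Stirling, or from a short noise-sensitivity argument based on the convergence $\NS_\delta(\maj_m) \to (1/\pi)\arccos(1-2\delta)$ and a comparison at two different noise rates.

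\emph{Step 3 (summation and the concrete example).} Plugging Step~2 into the identity of Step~1 shows $W^s(\hs_k) = \Theta(1/(k\,s\,(s-1)^{3/2}))$ at every even $s \in [2,k]$ and $W^s(\hs_k) = 0$ at every odd $s \geq 3$. Summing over the even $s \in (d, 2d]$ (all contained in $[1,k]$ because $d \leq k/2$) yields
\[
W^{>d}(\hs_k) \;\geq\; \sum_{\substack{d < s \leq 2d \\ s\ \text{even}}} \Theta\!\left(\frac{1}{k\,s^{5/2}}\right) \;=\; \Omega\!\left(\frac{1}{k\,d^{3/2}}\right),
\]
which is the desired tail bound. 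For the explicit example, setting $k = c_1\,\epsilon^{-4/5}$ and $d = \lfloor k/2\rfloor = \Theta(\epsilon^{-4/5})$ turns this into $W^{>d}(\hs_k) = \Omega(k^{-5/2}) = \Omega(c_1^{-5/2}\,\epsilon^2)$, which is at least $\epsilon^2$ once the constant $c_1$ is chosen sufficiently small.

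The main obstacle is Step~2: I need the uniform (in $m$) lower bound $W^t(\maj_m) \geq c\,t^{-3/2}$ across the full range $t \in [1, m]$ that we use, rather than merely the ``$m \to \infty$'' asymptotic version. Handling this carefully, especially when $t$ approaches $m$, requires either the explicit Krawtchouk-based Fourier formulas for majority together with Stirling's estimate, or a quantitative noise-sensitivity comparison at two scales $\delta$ that localizes the Fourier mass of $\maj_m$ inside a dyadic window. All remaining steps are elementary Fourier bookkeeping.
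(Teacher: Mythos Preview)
Your proposal is correct and follows essentially the same route as the paper: relate the level-$s$ Fourier weight of $\hs_k$ to the level-$(s-1)$ weight of majority via the derivative identity~(\ref{eq:hs2maj}), invoke the known $\Theta(t^{-3/2})$ per-level bound for majority from \cite{ODonnell14:book}, and sum over a range of degrees above $d$. The only cosmetic differences are that the paper works with $\maj_k$ rather than $\maj_{k-1}$ (yielding the slightly different identity $W^j(\hs_k)=\tfrac{k-j+1}{k^2 j}W^{j-1}(\maj_k)$) and sums over $d<j\leq 2k/3+1$ via an integral comparison instead of over the dyadic window $(d,2d]$; your identified ``obstacle'' of needing the uniform-in-$m$ lower bound on $W^t(\maj_m)$ is exactly what the paper handles by citing \cite{ODonnell14:book}.
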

\begin{proof}
We first observe that by the properties of partial derivatives given in Sec.~\ref{sec:prelims} and eq.\eqref{eq:hs2maj}, for every $S \subseteq [k]$ such that $|S| \geq 2$ and $i\in S$, $$\widehat{\hs_k}(S) = -\widehat{\partial_i \hs_k}(S\setminus i)/2 = \frac{\widehat{\maj_k}(S\setminus i)}{k}.$$
For $2 \leq j \leq k$,
\alequn{W^j(\hs_k) &= \sum_{S\subseteq[k],\ |S| =j}\widehat{\hs_k}(S)^2 = \sum_{S\subseteq[k],\ |S| =j,\ i \in S}\frac{\widehat{\maj_k}(S\setminus i)^2}{k^2} \\ &= \frac{k-j+1}{j}\sum_{S\subseteq[k],\ |S|=j-1}\frac{\widehat{\maj_k}(S)^2}{k^2}
= \frac{k-j+1}{k^2j}\cdot W^{j-1}(\maj_k)}
We now use the estimate $W^{j-1}(\maj_k) \geq c (j-1)^{-3/2}$ for some constant $c > 0$ \cite{ODonnell14:book}.
This estimate implies that
\alequn{W^{> d}(\hs_k) &\geq \sum_{2k/3+1 \geq j > d} W^{j}(\hs_k) = \sum_{2k/3+1 \geq j > d} \frac{k-j+1}{k^2j}\cdot W^{j-1}(\maj_k)\\& \geq c \sum_{2k/3 \geq j \geq d}\frac{k-j}{k^2} j^{-5/2} \geq  \frac{c}{3k} \sum_{2k/3 \geq j \geq d} j^{-5/2} \geq  \frac{c}{3k} \int_d^{2k/3} t^{-5/2} dt \\ &\geq \frac{c}{3k} \cdot \frac{2}{3} \left( d^{-3/2} - (2k/3)^{-3/2}\right) \geq \frac{2c}{27}\cdot \frac{d^{-3/2}}{k}, } where in the last inequality we used the condition that $d \leq k/2$ and hence $d^{-3/2} - (2k/3)^{-3/2} \geq d^{-3/2}/3$.
\end{proof}

We now show that any algorithm that PAC learns monotone submodular functions with $\ell_2$ error of $\epsilon$ must use $2^{\Omega(\epsilon^{-4/5})}$ examples. This result is based on a reduction from  learning the class all Boolean functions on $k$ variables with error $1/4$ to the problem of learning submodular functions on $2t = k + \lceil \log{k} \rceil + O(1)$ variables with $\ell_2$ error of $\Theta(\frac{1}{t^{5/4}})$. Any algorithm that learns the class of all Boolean functions on $k$ variables to accuracy $1/4$ requires at least $2^{\Omega(k)}$ bits of information about the target function and, in particular, at least that many random examples or other Boolean-valued queries are necessary.
The reduction is identical to the reduction in \cite{FeldmanKV:13} which proved an analogous result for learning with $\ell_1$ error of $\Theta(\frac{1}{t^{3/2}})$. Therefore the analysis of the reduction follows closely that from \cite{FeldmanKV:13}.



\begin{lemma}
For $k > 0$, let $t > 0$ be the smallest such that ${{2t} \choose t} \geq 2^k$ (and thus $4\cdot 2^k > {{2t} \choose t} \geq 2^k$). For every Boolean function $h:\zo^k \rightarrow \zo$ there exists a monotone submodular function $f:\zo^{2t} \rightarrow [0,1]$ such that:
\begin{enumerate}
\item $f$ can be computed at any point $x \in \zo^{2t}$ in at most a single query to $h$ and in time $O(k)$; given a single random and uniform example of $h$, a random and uniform example of $f$ can be produced in time $O(k)$.
\item Let $\alpha = \frac{2^k\cdot \sqrt{t}}{2^{2t}} = \Theta(1)$. For any $\beta > 0$, given a function $\tilde{f}:\zo^{2t} \rightarrow \R$ such that $\|f - \tilde{f}\|_2 \leq \frac{\sqrt{\alpha\beta}}{4\cdot t^{5/4}}$, one can obtain a Boolean function $\tilde{h}:\zo^k \rightarrow \zo$ such that $\pr_{\U} [ \tilde{h}(x) \neq h(x)] \leq \beta$ and $\tilde{h}$ can be computed at any point $x \in \zo^k$, with a single query to $\tilde{f}$ in time $O(k)$.
\end{enumerate}\label{lem:embed}
\end{lemma}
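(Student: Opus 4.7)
The plan is to reuse, essentially verbatim, the embedding of \cite{FeldmanKV:13} and redo its analysis in the $\ell_2$ rather than $\ell_1$ norm. First I would fix an arbitrary injection $\phi:\zo^k \rightarrow \{y\in\zo^{2t}: w(y)=t\}$, which exists because the middle layer has cardinality $\binom{2t}{t} \geq 2^k$ by the choice of $t$, and which can be made $O(k)$-time computable in both directions by using the combinatorial number system. Then I would define $f:\zo^{2t}\rightarrow[0,1]$ to depend only on $w(x)$ outside the middle layer, following a hockey-stick shape with baseline slope $\Theta(1/t)$ (so that the monotone-submodular constraints $\partial_i f\ge 0$, $\partial_{ij} f\le 0$ are satisfied strictly there), and on the middle layer to equal $\tfrac12+\delta\cdot h(\phi^{-1}(x))$ when $x\in\phi(\zo^k)$ and $\tfrac12$ otherwise, for a small absolute constant $\delta=c/t$. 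The only content of choosing $\delta$ this small is to preserve monotonicity and submodularity across the interface $w(x)=t$: the perturbation can shift the marginal gain from $t-1\to t$ or $t\to t+1$ by at most $\delta$, and this must be dominated by the baseline slopes $\Theta(1/t)$. This verification is exactly the one carried out in \cite{FeldmanKV:13}, so I would cite it rather than repeat it.

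Property (1) is then routine: on input $x$ one computes $w(x)$ in $O(k)$ time; if $w(x)\neq t$, output the baseline value; if $w(x)=t$, decode the index in the middle layer in $O(k)$ time and, if it lies in $\phi(\zo^k)$, query $h$ once to compute $h(\phi^{-1}(x))$. Producing a labelled sample of $f$ from a labelled sample $(y,h(y))$ of $h$ is done by sampling $x\sim\U_{2t}$ and outputting $(x,f(x))$, which needs the one query $h(\phi^{-1}(x))$ only when $x\in\phi(\zo^k)$.

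For property (2), given $\tilde f$ with $\|f-\tilde f\|_2\le \eta:=\frac{\sqrt{\alpha\beta}}{4\,t^{5/4}}$, I define the decoded hypothesis by $\tilde h(y)=\b1[\tilde f(\phi(y))\ge \tfrac12+\delta/2]$, which is computable from a single query to $\tilde f$ plus an $O(k)$-time encoding. Decoding can only fail at $y\in\zo^k$ when $|\tilde f(\phi(y))-f(\phi(y))|>\delta/2$, so by Markov the number of $x\in\zo^{2t}$ with $(\tilde f(x)-f(x))^2>\delta^2/4$ is at most $\frac{4\eta^2\cdot 2^{2t}}{\delta^2}$. Since each $y\in\zo^k$ corresponds to a distinct middle-layer string $\phi(y)$, the fraction of bad $y$'s is at most
\[
\frac{4\eta^2\cdot 2^{2t}/2^k}{\delta^2}\;=\;\frac{4\eta^2\sqrt{t}}{\alpha\,\delta^2},
\]
using $2^{2t}/2^k=\sqrt{t}/\alpha$. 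Plugging in $\delta=\Theta(1/t)$ and the assumed bound on $\eta$, this is at most $\beta$, giving $\Pr_\U[\tilde h(y)\neq h(y)]\le\beta$.

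The only step that is not completely mechanical is the submodularity/monotonicity check at the interface $w(x)=t$; this is the reason the perturbation amplitude $\delta$ is constrained to be $O(1/t)$, and it is exactly the obstacle handled in \cite{FeldmanKV:13}. Everything else is bookkeeping, and the new $t^{-5/4}$ scaling (as opposed to the $t^{-3/2}$ scaling of the $\ell_1$ version) emerges purely from the $\ell_2$ Markov step: $\eta \sim \delta\sqrt{\beta}/t^{1/4}$ with $\delta\sim 1/t$ and $\alpha=\Theta(1)$, as in the bound required by the lemma.
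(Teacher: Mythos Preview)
Your approach is essentially the paper's: embed $h$ into the middle layer of $\hs_{2t}$ with a $\Theta(1/t)$ perturbation and rerun the \cite{FeldmanKV:13} decoding analysis in $\ell_2$ (the paper's threshold is $\tilde f(m(y))\ge 1-1/(4t)$ and the same Markov step yields $\Pr[\tilde h\neq h]\le \tfrac{16t^{5/2}}{\alpha}\|f-\tilde f\|_2^2$). Two small but concrete slips are worth fixing.

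First, your stated middle-layer values $\tfrac12$ and $\tfrac12+\delta$ do not sit on the hockey-stick $\hs_{2t}(x)=\min\{1,w(x)/t\}$, which already equals $1$ for $w(x)\ge t$. With a plateau adjusted to meet your baseline, the submodularity check at $|S|=t-2$ fails: the unperturbed marginal $1/(2t)$ must dominate the marginal into the middle layer, which can be as large as $1/(2t)+\delta$. The construction in \cite{FeldmanKV:13} (and in the paper) instead perturbs \emph{down} from the plateau, setting the middle-layer value to $1-\tfrac{1-h(y)}{2t}\in\{1-\tfrac{1}{2t},1\}$; the three-case verification you want to cite is for that choice, not yours.

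Second, in your random-example simulation you write ``needs the one query $h(\phi^{-1}(x))$'', but you are handed a single labelled pair $(y,h(y))$, not oracle access to $h$. The fix is the standard one: sample $x\sim\U_{2t}$; if $x\in\phi(\zo^k)$, replace $x$ by $\phi(y)$ and use the given label $h(y)$ to compute $f(\phi(y))$. Since $y$ is uniform and $\phi$ is a bijection onto its image, the output remains uniform.

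Both points are cosmetic once corrected; the $t^{-5/4}$ scaling and the rest of your argument are exactly what the paper does.
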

\begin{proof}
We construct $f$ by embedding $h$ into the middle layer of $\hs=\hs_{2t}$ while preserving the monotonicity and submodularity. The embedding modifies the values of $\hs$ by at most $\frac{1}{2t}$.

Let $s = {{2t} \choose t}$. Let $M_{t} = \{x  \in \zo^{2t}\mid w_{2t}(x) = t\}$ be the middle layer of $\zo^{2t}$ and let $m: \zo^k \rightarrow M_{t}$ be an injective map such that both $m$ and $m^{-1}$ (whenever it exists) can be computed in time $O(k)$ at any given point (for example using lexicographic ordering on both sets). We now define $f$ as:

\[
f(x)=\left\{\begin{array}{cl}
	\hs(x) & x \not\in M_t  \\
          1-\frac{1-h(y)}{2t} & x\in M_t \text{ and } \exists y \in \zo^k,\ m(y) = x \\
          1 & \text{ otherwise }\\
	   \end{array}\right.
\]

Notice that given any $x \in \zo^{2t}$, the value of $f(x)$ can be computed using a single query to $h$ and it is easy to see that given a single random and uniform example of $h$ we can output a random and uniform example of $f$ in time $O(k)$.

Given a function $\tilde{f}: \zo^{2t} \rightarrow \R$, define $\tilde{h}:\zo^k \rightarrow \zo$ so that $\tilde h(y) = 1$ if $\tilde{f}(m(y))  \geq (1-(1/4t))$ and $\tilde h(y) = 0$ otherwise. By definition,
$$ \tilde{h}(y) \neq h(y) \Rightarrow |\tilde{f}(m(y))-f(m(y))| \geq \frac{1}{4t}. $$
 Using that $\pr_{x \sim \U_{2t}} [ \exists y,\ m(y) = x] = \frac{\alpha}{\sqrt{t}}$, we have:
\begin{align*}
 \pr_{\U_k}[\tilde h(y) \neq h(y)] & \leq \pr_{x \sim \U_{2t}}[ |\tilde{f}(x)-f(x)| > 1/4t \cond \exists y,\ m(y) = x ] \\ &\leq (4t)^2 \cdot \E_{x \sim \U_{2t}}[ |\tilde{f}(x)-f(x)|^2  \cond \exists y,\ m(y) = x ] \\
    & \leq (4t)^2 \cdot \frac{\E_{x \sim \U_{2t}}[ |\tilde{f}(x)-f(x)|^2 ]}{\frac{\alpha}{\sqrt{t}}} = \frac{16 \cdot t^{5/2}}{\alpha} \cdot \|\tilde{f}-f\|_2^2.
 \end{align*}

Using $\|\tilde{f}-f\|_2 \leq \frac{\sqrt{\alpha\beta}}{4\cdot t^{5/4}}$, we have: $ \E_{x \sim \U_k} [\tilde{h}(x) \neq h(x)] \leq \beta$.

Now, observe that $\hs$ is monotone and $f$ is obtained by modifying $\hs$ only on points in $M_{t}$ and by at most $\frac{1}{2t}$, which ensures that for any $x \leq y$ such that $w_{2t}(x) < w_{2t}(y)$, $f(x) \leq f(y)$. 
Finally, we show that $f$ is submodular for any Boolean function $h$. It will be convenient to switch notation and look at input $x$ as the indicator function of the set $S_x= \{ x_i \mid x_i = 1\}$. We will verify that for each $S \subseteq [n]$ and $i,j \notin S$,
\begin{equation}
f(S \cup \{i\}) - f(S) \geq f(S \cup \{i,j\})  - f(S \cup \{j\}).\label{submodularity}
 \end{equation}
Notice that $\hs$ is submodular, and $f = \hs$ on every $x$ such that $w_{2t}(x) \neq t$. Thus, we only need to check eq.\eqref{submodularity} for $S, i, j$ such that $|S| \in \{ t-2, t-1, t\}$.
We analyze these $3$ cases separately:
\begin{enumerate}
\item
 $\bm{|S| = t-1:}$ Notice that $f(S) = \hs(S) = 1-(1/t)$ and $f(S \cup \{i, j\}) = \hs( S \cup \{i, j\}) = 1$. Also observe that for any $h$, $f(S \cup \{i\})$ and $f(S \cup \{j\})$ are at least $(1-\frac{1}{2t})$. Thus, $f(S \cup \{i\}) + f(S \cup \{j\}) \geq 2 - \frac{1}{t} = f(S) + f(S \cup \{i,j\})$.
 \item
 $\bm{|S| = t-2:}$ In this case, $f(S) = (1-(2/t))$ and $f(S \cup \{i\}) = f(S \cup \{j\}) = (1-(1/t))$. In this case, the maximum value for any $h$, of $f(S \cup \{i,j\}) = 1$. Thus, $$f(S) +  f(S \cup \{i,j\})  \leq 2 - (2/t) = f(S \cup \{i\}) + f(S \cup \{j\}).$$
 \item
$\bm {|S| = t:}$ Here, $f(S \cup \{i\}) = f(S \cup \{j\}) =  f(S \cup \{i,j\}) = 1$. The maximum value of $f(S)$ for any $h$ is $1$. Thus, $$f(S) +  f(S \cup \{i,j\})  \leq 2 = f(S \cup \{i\}) + f(S \cup \{j\}).$$
 \end{enumerate}
This completes the proof that $f$ is submodular.
\end{proof}

By choosing $\beta =1/4$ in Lemma \ref{lem:embed} we obtain the following result:

\begin{theorem}
\label{thm:submod-PAC-hardness}
Any algorithm that PAC learns all monotone submodular functions with range $[0,1]$ to $\ell_2$ error of $\eps>0$ requires $2^{\Omega(\eps^{-4/5})}$ random examples of (or value queries to) the target function.
\end{theorem}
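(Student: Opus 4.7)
The plan is a direct reduction via Lemma~\ref{lem:embed} with $\beta = 1/4$. Any hypothetical PAC learner $\A_{\mathrm{sub}}$ that learns monotone submodular functions on $2t$ variables to $\ell_2$ error $\eta = \sqrt{\alpha\beta}/(4 t^{5/4}) = \Theta(t^{-5/4})$ using $q$ random examples can be converted, at no extra sample cost, into a PAC learner for arbitrary Boolean functions on $k$ variables (where $t$ is minimal with $\binom{2t}{t}\geq 2^k$) to classification error at most $1/4$ using $q$ examples. Part~1 of the lemma lets us turn each random example $(y,h(y))$ into a random example $(x,f(x))$ in $O(k)$ time, preserving uniformity; we feed these to $\A_{\mathrm{sub}}$ and part~2 post-processes its output $\tilde f$ (which satisfies $\|f-\tilde f\|_2\leq \eta$ with probability $2/3$) into a Boolean hypothesis $\tilde h$ with $\Pr_{\U_k}[\tilde h\neq h]\leq 1/4$. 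The same reduction applies verbatim to value-query algorithms.

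Second, I would invoke the standard information-theoretic lower bound for learning all Boolean functions on $k$ variables under the uniform distribution: any algorithm (with random examples or value queries) that achieves classification error $\leq 1/4$ with constant success probability requires $\Omega(2^k)$ samples. This follows from an adversary argument where the target is drawn uniformly from the $2^{2^k}$ Boolean functions; after $q$ labeled points, the learner has no information about the labels of the remaining $2^k - q$ points, so its expected error is at least $\tfrac{1}{2}(1-q/2^k)$, which exceeds $1/4$ unless $q \geq 2^{k-1}$.

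Finally, I would compose the two bounds. Stirling applied to $\binom{2t}{t}\sim 4^t/\sqrt{\pi t}\geq 2^k$ gives $k = 2t - \Theta(\log t)$, so $k = \Theta(t)$. Setting $\eps = \eta = \Theta(t^{-5/4})$ yields $t = \Theta(\eps^{-4/5})$ and hence $k = \Theta(\eps^{-4/5})$; consequently $\A_{\mathrm{sub}}$ must use at least $2^{\Omega(k)} = 2^{\Omega(\eps^{-4/5})}$ examples. The main conceptual work is already carried by Lemma~\ref{lem:embed} --- the quantitative $t^{-5/4}$ scaling of the $\ell_2$ tolerance comes from the $\Theta(\sqrt t)$ relative measure of the middle layer $M_t$ combined with the $1/(4t)$ gap between ``$h(y)=0$'' and ``$h(y)=1$'' encodings --- and that $5/4$ exponent is exactly what produces the matching $\eps^{-4/5}$ in the final bound. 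After that, the proof is a routine sample-preserving reduction plus the Boolean coupon-collector lower bound, so the only thing to verify is that none of the constants hidden in $\eta$ and in ``$\Theta(t) = k$'' swallow the desired exponent, which they do not.
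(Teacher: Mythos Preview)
Your proposal is correct and follows exactly the approach the paper takes: invoke Lemma~\ref{lem:embed} with $\beta=1/4$, combine it with the standard $2^{\Omega(k)}$ information-theoretic lower bound for learning arbitrary Boolean functions on $k$ variables, and solve $\eps=\Theta(t^{-5/4})$, $k=\Theta(t)$ to get the $2^{\Omega(\eps^{-4/5})}$ bound. The paper in fact states the theorem with essentially no additional argument beyond ``by choosing $\beta=1/4$ in Lemma~\ref{lem:embed}'', so your write-up is if anything more detailed than the original.
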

\eat{
\begin{proof}
We borrow notation from the statement of Lemma \ref{embed} here. Given an algorithm that PAC learns monotone submodular functions on $2t$ variables, we describe how one can obtain a learning algorithm for all Boolean function on $k$ variables with accuracy $1/4$.
Given an access to a Boolean function $f:\zo^k \rightarrow \zo$, we can translate it into an access to a submodular function $h$ on $2t$ variables with an overhead of at most $O(t) = O(k)$ time using Lemma \ref{embed}. Using the PAC learning algorithm, we can obtain a function $g:\zo^{2t} \rightarrow \R$ that approximates $h$ within an error of at most $\alpha \cdot \frac{1}{8t^{3/2}}$ and Lemma \ref{embed} shows how to obtain $\tilde{f}$ from $g$ with an overhead of at most $O(t) = O(k)$ time such that $\tilde{f}$ approximates $f$ within $\frac{1}{4}$.
Choose $k =  \lceil \epsilon^{-2/3} \rceil$ and $t$ as described in the statement of Lemma \ref{embed}.
Now, using any algorithm that learns monotone submodular functions to an accuracy of $\epsilon >0$ we obtain an algorithm that learns all Boolean functions on $k = \lceil \epsilon^{-2/3} \rceil$ variables to accuracy $1/4$.
\end{proof}
}

\subsection{XOS functions}
\label{sec:lower-xos}
The lower bounds for XOS functions are based on a simple mapping from monotone DNF (MDNF) formulas to XOS functions. We say that a function $h$ is $s$-term $t$-MDNF if $h(x)= \bigvee_{j \in [s]} T_j(x)$, where each $T_j \subseteq [k]$, $|T_j| \leq t$ and $T_j(x) = \bigwedge_{i \in T_j} x_i$.
\begin{lemma}
For every $s$-term $t$-MDNF $h:\zo^k \rightarrow \zo$, let $f:\zo^k \rightarrow [0,1]$ be given by $f(x) = 1-\frac{1 - h(x)}{t}$, if $x \neq \bf 0$ and $f(x)=0$ otherwise. Then $f$
is an XOS function of size $s+k$.
\label{lem:xosdnf}
\end{lemma}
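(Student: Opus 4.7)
My plan is to construct the XOS representation explicitly with two families of clauses: $s$ ``term clauses'' coming from the DNF terms, and $k$ ``singleton clauses'' coming from the individual variables. Concretely, for each term $T_j$ of $h$ I take a clause $c_j$ with weights $w_{c_j,i} = 1/|T_j|$ for $i \in T_j$ and $w_{c_j,i}=0$ otherwise; and for each variable $i \in [k]$ I take a clause $c'_i$ with $w_{c'_i,i} = 1 - 1/t$ and $w_{c'_i,\ell} = 0$ for $\ell \neq i$. This gives $s+k$ clauses with nonnegative weights, so it defines an XOS function $g(x) = \max_c \sum_i w_{c,i} x_i$ of the stated size.

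The verification is then a clean case analysis on $x$. For $x = \mathbf{0}$ every linear form evaluates to $0$, matching $f(\mathbf{0}) = 0$. For $x \neq \mathbf{0}$ with $h(x) = 1$, some term $T_j \subseteq \{i : x_i = 1\}$, so the clause $c_j$ evaluates to $\sum_{i \in T_j}(1/|T_j|)x_i = 1$; and no clause can exceed $1$ because singleton clauses cap at $1-1/t < 1$ and every term clause is a convex combination of $\{0,1\}$ values. For $x \neq \mathbf{0}$ with $h(x) = 0$, each term $T_j$ is missing at least one variable under $x$, so clause $c_j$ contributes at most $(|T_j|-1)/|T_j| \leq 1 - 1/t$, while some singleton clause $c'_i$ (for any $i$ with $x_i=1$, which exists since $x \neq \mathbf{0}$) contributes exactly $1 - 1/t$. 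Taking the max gives exactly $1 - 1/t$, which agrees with $f(x) = 1 - (1-h(x))/t$.

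Putting the three cases together, $g = f$ pointwise on $\zo^k$, and $g$ is an XOS function of size $s+k$ by construction. The only substantive point is the inequality $1 - 1/|T_j| \leq 1 - 1/t$ in the last case, which uses the hypothesis $|T_j| \leq t$; the rest is bookkeeping. I do not anticipate any real obstacle, since the value $1-1/t$ was precisely engineered so that the singleton clauses ``fill in'' the gap between $0$ and $1$ on points where no term is satisfied while never interfering with the terms.
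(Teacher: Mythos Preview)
Your construction is exactly the one the paper uses: $s$ term clauses with weights $1/|T_j|$ on $T_j$ and $k$ singleton clauses with weight $(t-1)/t$, and the paper simply asserts that the resulting $\max$ equals $f$. Your case analysis is a correct and more detailed verification of what the paper leaves as ``easy to verify.''
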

\begin{proof}
Let $h(x)= \bigvee_{j \in [s]} T_j(x)$ be an $s$-term $t$-MDNF representation of $h$. Then it is easy to verify that $$f(x)= \max\left\{\max \frac{\sum_{i \in T_j} x_i}{|T_j|},\ \max_{i \in [k]}
\frac{t-1}{t} x_i
  \right\}.$$
\end{proof}

An immediate corollary of Lemma \ref{lem:xosdnf} is that for any $\beta > 0$, a function $g$ such that $\|f-g\|_2 \leq \sqrt{\beta}/(2t)$ gives a function $\tilde h$ such that $\pr_{\U} [ \tilde{h}(x) \neq h(x)] \leq \beta + 2^{-k}$.


To obtain our lower bounds, we rely on known results for MDNFs obtained by choosing random conjunctions of size $\Theta(\sqrt{k})$. Such MDNFs were first analyzed by Talagrand \cite{Talagrand:96}. For our spectral concentration lower bound we will use the fact that Talagrand's DNFs are noise sensitive \cite{mossel2002noise} together with a reverse connection between noise sensitivity and low-degree spectral concentration.

We first recall the definition and basic properties of the noise sensitivity.
\begin{definition}[Noise sensitivity]
For $\alpha \in [0,1], x \in \zon$, we define a distribution $N_\alpha(x)$ over $y \in \zon$ by letting $y_i = x_i$ with probability $1-\alpha$ and $y_i = 1-x_i$ with probability $\alpha$, independently for each $i$. For a Boolean function $h$, the noise sensitivity of $h$ with noise rate $\alpha$ is defined as
$$\NS_\alpha(h) = \pr_{x\sim \U,\ y \sim N_\alpha(x)} [h(x) \neq h(y)].$$
Noise sensitivity satisfies (\eg \cite{ODonnell14:book}):
\equ{\NS_\alpha(h) = \frac{1}{2} \sum_{i=0}^k  (1-(1-2\alpha)^i) \cdot W^i(h). \label{eq:ns-fourier}}
\end{definition}

The following theorem was proved in \cite{mossel2002noise}, following Talagrand's analysis \cite{Talagrand:96}.

\begin{theorem}[\cite{mossel2002noise}]
\label{thm:tal-is-noisy}
For every $k$, there exists a $\sqrt{k}$-MDNF $h$ such that $\NS_{1/\sqrt{k}}(h) = \Omega(1)$.
\end{theorem}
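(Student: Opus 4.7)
My plan is to exhibit $h$ by a probabilistic construction. Take $m = C \cdot 2^{\sqrt{k}}$ independently chosen monotone terms $T_1,\ldots,T_m$, where each $T_i$ is a uniformly random size-$t$ subset of $[k]$ with $t = \sqrt{k}$, and let $h(x) = \bigvee_i T_i(x)$. For any fixed $x$, $\Pr[T_i(x)=1] = \binom{w_k(x)}{t}/\binom{k}{t}$, which is $2^{-t}(1 \pm o(1))$ for $x$ with Hamming weight near $k/2$. Tuning the constant $C$ yields $\E_x[h(x)] \in [1/3, 2/3]$ in expectation over the random terms; a standard second-moment/concentration argument over the random choice of the $T_i$'s produces a specific deterministic $h$ satisfying this variance lower bound, in particular $\Var(h) = \Omega(1)$.

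For the noise-sensitivity lower bound, I plan to show that $\Pr_{(x,y)}[h(x)=1, \, h(y)=0] = \Omega(1)$, where $y \sim N_{1/\sqrt{k}}(x)$. The key observation is that if $T$ is a fixed size-$t$ term, then conditional on $T(x)=1$, the probability that $T(y)=1$ equals $(1 - 1/\sqrt{k})^{\sqrt{k}} \to 1/e$; thus whenever $x$ has a unique satisfied term, there is constant probability that the noise destroys it while failing to activate any new term. To formalize this, I would proceed in two steps: (i) lower bound the probability that exactly one $T_i$ is satisfied by $x$ (this follows from inclusion-exclusion plus Poisson heuristics when the expected number of satisfied terms is a moderate constant); and (ii) conditional on this event, estimate the probability that the noise kills the unique satisfied term and also that none of the other $m-1$ terms, which depend on disjoint randomness in the random construction, are activated by $y$. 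Together these give $\NS_{1/\sqrt{k}}(h) = \Omega(1)$ for the random $h$, and hence for some specific $h$.

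The main obstacle is handling correlations: once we fix a specific $h$, the terms are no longer stochastically independent, so the ``unique satisfied term plus no spurious activation'' event is not a clean product of marginals. I would bypass this by doing all calculations in the random model (over both the terms and $(x,y)$), swapping the order of expectation so that the coordinates are averaged first: for a uniformly random term $T$, $\E_{T,x,y}[T(x)(1-T(y))] = 2^{-t}(1 - (1 - 1/\sqrt{k})^{\sqrt{k}})$, and the different terms become genuinely independent under this averaging. A final union-bound / Chernoff argument then transfers the constant lower bound on $\E_T[\NS_{1/\sqrt{k}}(h)]$ and the constant lower bound on $\E_T[\Var(h)]$ simultaneously to a single deterministic $h$, completing the proof.
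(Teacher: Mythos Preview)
The paper does not prove this theorem: it is quoted verbatim as a result of Mossel and O'Donnell (building on Talagrand's random DNF construction) and is used as a black box to derive Corollary~\ref{cor:spectralmdnf}. So there is no in-paper argument to compare against; I can only assess your sketch on its own.

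Your construction and high-level strategy are exactly the Talagrand/Mossel--O'Donnell argument: take $m=\Theta(2^{\sqrt{k}})$ random width-$\sqrt{k}$ monotone terms so that the expected number of satisfied clauses at a uniform $x$ is a constant, and show that noise at rate $1/\sqrt{k}$ destroys a satisfied clause with constant probability. Steps (i)--(ii), carried out in the random-term model, are the right plan, and your observation that the clauses are genuinely independent once you average over the construction is the key point.

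There is, however, a gap in the final paragraph. The quantity you compute, $\E_{T,x,y}[T(x)(1-T(y))]$, is a single-term statistic; it does not by itself bound $\E_T[\NS_{1/\sqrt{k}}(h)]$, since $h$ is an OR of $m$ terms rather than a sum. What you actually need (for typical $x,y$, over the i.i.d.\ random clauses) is the product-form identity
\[
\Pr_T[h(x)=1,\ h(y)=0]\;=\;(1-p_y)^m-(1-p_x-p_y+p_{xy})^m,
\]
with $p_x=\Pr_T[T(x)=1]$, $p_y=\Pr_T[T(y)=1]$, $p_{xy}=\Pr_T[T(x)=T(y)=1]$; plugging in $mp_x\approx mp_y\approx C$ and $p_{xy}/p_x\to 1/e$ gives a positive constant. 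Equivalently, your steps (i)--(ii) can be completed, but in (ii) a union bound over the remaining $m-1\approx 2^{\sqrt{k}}$ terms is too weak (the expected number satisfied by $y$ is $\Theta(1)$, not $o(1)$); you must instead use the product $\prod_{j\neq i^*}\Pr[T_j(y)=0\mid T_j(x)=0]\approx e^{-C}$. Finally, the derandomization is simpler than you suggest: once $\E_T[\NS_{1/\sqrt{k}}(h)]=\Omega(1)$, the existence of a good deterministic $h$ is immediate from averaging, and $\NS_\alpha(h)=\Omega(1)$ already forces $h$ to be balanced, so no separate variance argument or Chernoff step is needed.
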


This result implies that such functions have a large Fourier mass above level $\Omega(\sqrt{k})$.

\begin{corollary}
\label{cor:spectralmdnf}
For every $k$, there exists a $\sqrt{k}$-MDNF $h$ such that for $d=\Omega{(\sqrt{k})}$, $W^{>d}(h) = \Omega(1)$.
\end{corollary}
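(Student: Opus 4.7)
The plan is to combine Theorem~\ref{thm:tal-is-noisy} with the Fourier formula for noise sensitivity in equation \eqref{eq:ns-fourier}. Let $h$ be the $\sqrt{k}$-MDNF from Theorem~\ref{thm:tal-is-noisy}, so $\NS_\alpha(h) = \Omega(1)$ for $\alpha = 1/\sqrt{k}$. We want to argue that most of the Fourier weight of $h$ must sit above level $d = c\sqrt{k}$ for a sufficiently small constant $c$; otherwise the noise sensitivity would be too small.

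The key estimate is that for $i \leq d$ the coefficient $1 - (1-2\alpha)^i$ in \eqref{eq:ns-fourier} is small. Using the elementary inequality $(1-2\alpha)^i \geq 1 - 2\alpha i$, we get $1-(1-2\alpha)^i \leq 2\alpha i \leq 2\alpha d$. Therefore, splitting the sum in \eqref{eq:ns-fourier} at degree $d$ and using $\sum_{i\leq d} W^i(h) \leq \|h\|_2^2 \leq 1$ (since $h$ is Boolean) together with the trivial bound $1-(1-2\alpha)^i \leq 1$ for $i>d$, we obtain
$$ \NS_\alpha(h) \leq \alpha d \cdot \sum_{i \leq d} W^i(h) + \tfrac{1}{2} W^{>d}(h) \leq \alpha d + \tfrac{1}{2} W^{>d}(h).$$

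Rearranging, $W^{>d}(h) \geq 2\NS_\alpha(h) - 2\alpha d$. Plugging in $\alpha = 1/\sqrt{k}$ and $d = c\sqrt{k}$ gives $\alpha d = c$, so choosing $c$ small enough (say $c < \NS_\alpha(h)/2$, which is a positive constant by Theorem~\ref{thm:tal-is-noisy}) yields $W^{>d}(h) = \Omega(1)$ for $d = \Omega(\sqrt{k})$, as required.

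There is essentially no technical obstacle here: the only thing to verify is that the low-degree contribution to $\NS_\alpha(h)$ is controllable, which follows from the first-order expansion of $(1-2\alpha)^i$ and the fact that $h$ is Boolean. The real content is entirely in Theorem~\ref{thm:tal-is-noisy}; the corollary is just a standard conversion between noise sensitivity and tail Fourier mass.
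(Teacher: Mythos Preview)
Your proof is correct and essentially identical to the paper's: both split the Fourier formula \eqref{eq:ns-fourier} at degree $d$, bound the low-degree contribution by $\alpha d$ via $1-(1-2\alpha)^i \leq 2\alpha d$, bound the high-degree contribution by $\tfrac12 W^{>d}(h)$, and then choose $d = c\sqrt{k}$ with $c$ a small constant to make $\alpha d$ strictly less than the constant noise sensitivity from Theorem~\ref{thm:tal-is-noisy}. The only cosmetic difference is that the paper first bounds $1-(1-2\alpha)^i \leq 1-(1-2\alpha)^d$ and then linearizes, whereas you apply Bernoulli directly; the resulting inequality $\NS_\alpha(h)\leq \alpha d + \tfrac12 W^{>d}(h)$ and the choice $d = \Theta(\sqrt{k})$ are the same.
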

\begin{proof}
Equation \eqref{eq:ns-fourier} implies that for every $d$,
\alequn{\NS_\alpha(h) &= \frac{1}{2} \sum_{i=0}^k  (1-(1-2\alpha)^k) \cdot W^k(h) \leq
\frac{1}{2} \sum_{i=0}^d  (1-(1-2\alpha)^d) \cdot W^i(h) + \frac{1}{2} W^{>d}(h)
\\ &\leq \frac{1}{2} \left( (1-(1-2\alpha)^d) \|h\|_2^2 + W^{>d}(h) \right) < \frac{1}{2} \left( 2\alpha d \cdot \|h\|_2^2 + W^{>d}(h) \right) \\ &= \alpha d \cdot \|h\|_2^2 + W^{>d}(h)/2 \leq \alpha d  + W^{>d}(h)/2 .}
By Theorem \ref{thm:tal-is-noisy}, there exists a $\sqrt{k}$-MDNF $h$ such that for some constant $c > 0$,
$\NS_{1/\sqrt{k}}(h) \geq c$. Let $d = c\sqrt{k}/2$ we obtain that $$W^{>d}(h) \geq 2\left(\NS_{1/\sqrt{k}}(h) - \frac{d}{\sqrt{k}}\right) \geq c.$$
\end{proof}

From here we obtain a lower bound on low-degree spectral concentration of XOS functions using Lemma \ref{lem:xosdnf}.
\begin{theorem}
For every $\eps>0$ there exists $k = \Theta(1/\eps^2)$ and an XOS function $f:\zo^k \rightarrow [0,1]$ such that $\aedeg(f)=\Omega(1/\epsilon)$.
\end{theorem}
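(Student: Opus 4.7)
The plan is to chain the two ingredients already in hand: the high noise sensitivity of Talagrand's monotone DNF formulas (captured by Corollary~\ref{cor:spectralmdnf}) and the MDNF-to-XOS embedding (Lemma~\ref{lem:xosdnf}). First I would set $k = \Theta(1/\eps^2)$ and invoke Corollary~\ref{cor:spectralmdnf} to obtain a $\sqrt{k}$-MDNF $h:\zo^k\rightarrow \zo$ for which $W^{>d^\ast}(h) \geq c$ for some absolute constant $c>0$ and some $d^\ast = \Omega(\sqrt{k}) = \Omega(1/\eps)$. Then, using $t = \sqrt{k}$ in Lemma~\ref{lem:xosdnf}, I would build the XOS function $f:\zo^k \rightarrow [0,1]$ defined by $f(x) = 1 - (1-h(x))/t$ for $x \neq \bf 0$ and $f(\bf 0) = 0$. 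All that remains is to transfer the lower bound on the Fourier tail from $h$ to $f$.

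The next step is to factor out the affine shift/scaling that relates $f$ to $h$. Consider the auxiliary function $g(x) = \frac{t-1}{t} + \frac{1}{t} h(x)$. Since $h(\mathbf 0)=0$, we have $g(\mathbf 0) = (t-1)/t$, so $f$ and $g$ agree on every nonzero $x$ and differ only at the origin by $(t-1)/t$. Consequently $\hat g(S) = \hat h(S)/t$ for every nonempty $S$, so
\[
W^{>d^\ast}(g) \;=\; \frac{1}{t^2}\, W^{>d^\ast}(h) \;\geq\; \frac{c}{t^2} \;=\; \frac{c}{k} \;=\; \Omega(\eps^2).
\]
The error $\delta := f - g$ is supported at a single point, so $\|\delta\|_2^2 \leq 1/2^k$, and in particular $W^{>d^\ast}(\delta) \leq \|\delta\|_2^2 \leq 2^{-k}$.

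Finally, by the triangle inequality for the $\ell_2$ norm on the tail,
\[
\sqrt{W^{>d^\ast}(f)} \;\geq\; \sqrt{W^{>d^\ast}(g)} - \sqrt{W^{>d^\ast}(\delta)} \;\geq\; \sqrt{c/k} - 2^{-k/2}.
\]
Since $k = \Theta(1/\eps^2)$ is polynomial in $1/\eps$ whereas $2^{-k/2}$ is doubly-exponentially small, the $2^{-k/2}$ term is negligible for all sufficiently small $\eps$, and we conclude $W^{>d^\ast}(f) > \eps^2$. This means no polynomial of degree at most $d^\ast = \Omega(1/\eps)$ can $\eps$-approximate $f$ in $\ell_2$, giving $\aedeg(f) = \Omega(1/\eps)$.

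There is no real obstacle in this argument — everything is set up in the preceding paragraphs. The only mildly delicate point is that Lemma~\ref{lem:xosdnf} forces $f(\mathbf 0)=0$ rather than $(t-1)/t$, so $f$ is not exactly an affine rescaling of $h$; but as shown above this single-point discrepancy has exponentially small $\ell_2$ mass and is swamped by the $\Omega(1/k)$ tail coming from $h$.
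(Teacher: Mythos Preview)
Your proof is correct and follows essentially the same route as the paper: invoke Corollary~\ref{cor:spectralmdnf} to get a $\sqrt{k}$-MDNF with constant Fourier tail above level $\Omega(\sqrt{k})$, embed it via Lemma~\ref{lem:xosdnf}, and transfer the tail bound to $f$ by noting that $f$ is an affine rescaling of $h$ (dividing the tail by $t^2=k$) except at a single point whose $\ell_2$ mass is $2^{-\Omega(k)}$. Your use of the auxiliary function $g$ and the triangle inequality on $\sqrt{W^{>d^\ast}(\cdot)}$ is in fact slightly cleaner than the paper's one-line estimate, but the substance is identical.
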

\begin{proof}
For $k > 0$, let $h$ be the $\sqrt{k}$-MDNF $h$ such that for $d=\Omega{(\sqrt{k})}$, $W^{>d}(h) = \Omega(1)$. Let $f$ be the XOS function obtained from $h$ using Lemma \ref{lem:xosdnf}. Then, by the linearity of Fourier coefficients and the fact that $f$ differs from $1-\frac{1 - h(x)}{t}$ only on a single point, we obtain that
$$W^{>d}(f) \geq W^{>d}(h)/d^2 - 2^{-k} = \Omega(1/k).$$ This means that for some $k = \Theta(1/\eps^2)$ and $d=\Omega(1/\epsilon)$ we have $W^{>d}(f) \geq \eps^2$.
\end{proof}

Our lower bound for PAC learning of XOS functions is based on the following lower bound for learning MDNF by Blum \etal \cite{BlumBL:98}.

\begin{theorem}[\cite{BlumBL:98}]
\label{thm:bbl-hard}
For any sufficiently large $k$ and $q \geq k$, any algorithm that PAC learns $t$-MDNF for $t = \log(3qk)$ over the uniform distribution and uses at most $q$ random examples (or value queries) will have error of at least $1/2 - O(\log(qk)/\sqrt{k})$.
\end{theorem}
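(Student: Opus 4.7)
The plan is to prove this via a standard distinguishing argument: exhibit a distribution $\mathcal{D}$ over $t$-MDNFs such that a random $h \sim \mathcal{D}$ appears nearly indistinguishable from a uniformly random Boolean function to any learner making only $q$ example queries. Once this is established, the learner cannot do better than random guessing on a freshly drawn uniform test point, up to the claimed additive error.

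I would first construct $\mathcal{D}$ as follows. With $t = \log(3qk)$, set $m = \lceil 2^t \ln 2 \rceil$, and draw $h = T_1 \vee \cdots \vee T_m$, where each $T_j$ is an independent uniformly random monotone conjunction of exactly $t$ variables from $[k]$. For any fixed $x$ of typical Hamming weight $|x| \approx k/2$, we have $\Pr_{T_j}[T_j(x)=1] = \binom{|x|}{t}/\binom{k}{t} \approx 2^{-t}$, so $\Pr_{h}[h(x)=1] \approx 1-(1-2^{-t})^m \approx 1/2$, making $h$ unbiased on typical inputs.

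Second, I would show that for any fixed sequence of query points $x_1,\dots,x_q$ (or, by an averaging argument, a sequence of random examples), the joint distribution of $(h(x_1),\dots,h(x_q))$ is within total-variation distance $O(\log(qk)/\sqrt{k})$ of $q$ independent fair coins. The key step is the pairwise correlation bound: for distinct typical $x,y$, $\Pr_T[T(x)=T(y)=1] = \binom{|x \cap y|}{t}/\binom{k}{t}$, and since $|x\cap y|$ concentrates around $k/4$ for independent uniform $x,y$, this ratio agrees with $2^{-2t}$ up to a multiplicative factor $(1\pm O(t/\sqrt{k}))$, yielding per-term covariance $O(t\cdot 2^{-2t}/\sqrt{k})$. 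After combining $m$ independent terms via inclusion--exclusion, $\mathrm{Cov}(h(x),h(y)) = O(t/\sqrt{k}) = O(\log(qk)/\sqrt{k})$. A Yao-style reduction then bounds any $q$-query learner's advantage in predicting $h$ on a fresh uniform test point by this statistical distance, and averaging over $\mathcal{D}$ exhibits a single $t$-MDNF meeting the claimed error.

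The main obstacle is avoiding a factor of $q$ that would arise from a naive hybrid or union-bound argument when lifting the pairwise-correlation statement to a $q$-point joint bound. One way to handle this is a direct Fourier computation showing that the Fourier mass of a random MDNF drawn from $\mathcal{D}$ concentrates at levels $\Omega(\sqrt{k})$ with high probability over the choice of $h$ (a calculation in the same spirit as Talagrand's random-DNF analysis used elsewhere in the paper), which implies that any low-degree estimator built from $q$ random examples has uniformly small correlation with $h$'s value at a fresh point. An alternative is to bound the $\chi^2$-divergence between the posterior of $h$ after seeing $q$ examples and the prior $\mathcal{D}$, and then convert to prediction error via Pinsker-type inequalities; in either case the Fourier concentration replaces the naive $q^2$ pairwise contribution with the tighter $O(\log(qk)/\sqrt{k})$ bound stated in the theorem.
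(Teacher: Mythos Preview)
This theorem is not proved in the paper at all: it is stated with the attribution \texttt{[\textbackslash cite\{BlumBL:98\}]} and used as a black box in the proof of Theorem~\ref{thm:xos-pac-hardness}. There is therefore no ``paper's own proof'' to compare your proposal against.

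As for your sketch on its own terms, the construction (a random $t$-MDNF with $t=\log(3qk)$ and roughly $2^t$ independent random terms) is indeed the one Blum, Burch and Langford use, and your heuristic that a random such DNF is near-balanced on typical inputs is correct. However, your outline has a real gap precisely where you flag it: a pairwise covariance bound of $O(t/\sqrt{k})$ between $h(x)$ and $h(y)$ does \emph{not} by itself bound the total-variation distance of the $q$-tuple $(h(x_1),\dots,h(x_q))$ from $q$ independent coins without picking up a polynomial-in-$q$ factor, which would destroy the conclusion. Neither of your proposed remedies is actually an argument. ``Fourier mass concentrates above level $\Omega(\sqrt{k})$'' would bound the error of the low-degree algorithm, not of an arbitrary algorithm with $q$ queries; and a $\chi^2$ bound between posterior and prior still requires controlling higher-order joint moments of $h$ at many query points, which you have not done. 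The BBL argument handles this by a direct information-theoretic analysis of how much a single example (or query) reveals about the random target, summing over the $q$ queries; if you want to reconstruct the proof, that is the step you should work out rather than the pairwise-correlation route.
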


We note that Theorem \ref{thm:bbl-hard} implies a slightly weaker (by a logarithmic factor in the degree) version of Corollary \ref{cor:spectralmdnf} since low-degree spectral concentration implies learning (in fact, as shown in \cite{DachmanFTWW:15} this argument also implies a lower bound on $\ell_1$-approximation by polynomials).
We now prove a lower bound for PAC learning XOS functions which we state for the $\ell_1$ error (which implies the same lower bound for $\ell_2$ error).

\begin{theorem}
\label{thm:xos-pac-hardness}
Any algorithm that PAC learns all XOS functions from $\zon$ to $[0,1]$ with $\ell_1$ error of $\eps>0$ requires $2^{\Omega(1/\eps)}$ random examples of (or value queries to) the target function.
\end{theorem}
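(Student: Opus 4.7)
The plan is to reduce PAC learning of monotone DNF, for which a strong lower bound is given by Theorem~\ref{thm:bbl-hard}, to PAC learning of XOS functions, using the embedding from Lemma~\ref{lem:xosdnf}. Suppose for contradiction that an algorithm $\A$ PAC learns all XOS functions $f:\zo^n \rightarrow [0,1]$ to $\ell_1$ error $\eps$ using $q$ random examples; by standard confidence amplification we may assume the success probability is as close to $1$ as needed while losing only a constant factor in $q$.

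Given a $t$-MDNF $h:\zo^k \rightarrow \zo$, I would run $\A$ (with $n=k$) on the XOS function $f$ produced by Lemma~\ref{lem:xosdnf}, namely $f(x) = 1 - (1-h(x))/t$ for $x \neq \mathbf{0}$ and $f(\mathbf{0})=0$. Random uniform examples of $h$ are converted into random uniform examples of $f$ in time $O(k)$, so with high probability $\A$ returns $g$ with $\|f-g\|_1 \leq \eps$. Define the Boolean hypothesis by thresholding: $\tilde h(x) = 1$ iff $g(x) \geq 1 - 1/(2t)$. For any $x \neq \mathbf{0}$, the values of $f(x)$ lie in $\{1-1/t,1\}$, so $\tilde h(x) \neq h(x)$ forces $|f(x)-g(x)| \geq 1/(2t)$, and Markov's inequality gives $\Pr[|f(x)-g(x)| \geq 1/(2t)] \leq 2t\|f-g\|_1 \leq 2t\eps$. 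Adding the $2^{-k}$ contribution from the single point $\mathbf{0}$ yields $\Pr_{\U}[\tilde h(x)\neq h(x)] \leq 2t\eps + 2^{-k}$, with the procedure using at most $q$ examples of $h$.

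The final step is to set parameters so that this contradicts Theorem~\ref{thm:bbl-hard}. Choose $k = \Theta(1/\eps^2)$ and $t = \log(3qk)$ (the setting required by the theorem). Theorem~\ref{thm:bbl-hard} then forces
\[
 2t\eps + 2^{-k} \;\geq\; \tfrac{1}{2} - O\!\left(\tfrac{\log(qk)}{\sqrt{k}}\right).
\]
Because $\sqrt{k} = \Omega(1/\eps)$ and $2^{-k}$ is negligible, the right-hand side is bounded below by, say, $1/3$ as long as $\log(qk) = o(1/\eps)$. Hence $2t\eps \geq 1/3$, i.e.\ $\log(3qk) \geq 1/(6\eps)$, and therefore $q \geq 2^{1/(6\eps)}/(3k) = 2^{\Omega(1/\eps)}$. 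The only thing to check carefully is that the two lower-order error terms $2^{-k}$ and $O(\log(qk)/\sqrt k)$ are actually dominated, which is immediate under $k=\Theta(1/\eps^2)$ and the contradiction hypothesis $q < 2^{c/\eps}$ for a small enough constant $c$.

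I do not foresee a serious obstacle: once Lemma~\ref{lem:xosdnf} embeds MDNFs into XOS functions with a multiplicative $1/t$ gap between the two Boolean label classes, the reduction is a one-line application of Markov's inequality, and the bookkeeping of the failure probability of $\A$ (pushed into an additive constant bounded away from $1/2$) combined with Theorem~\ref{thm:bbl-hard} yields the bound directly. The only place where a little care is needed is distinguishing $\ell_1$ from $\ell_2$ in the rounding step (the remark after Lemma~\ref{lem:xosdnf} is phrased in $\ell_2$), but Markov gives the right dependence without invoking Chebyshev.
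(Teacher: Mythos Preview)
Your proposal is correct and follows essentially the same route as the paper: embed $t$-MDNF into XOS via Lemma~\ref{lem:xosdnf}, threshold the XOS hypothesis at $1-1/(2t)$, use Markov's inequality to convert $\ell_1$ error into Boolean disagreement (exactly as you note, the paper's remark after Lemma~\ref{lem:xosdnf} is stated for $\ell_2$), and then invoke Theorem~\ref{thm:bbl-hard} with $k=\Theta(1/\eps^2)$. The only difference is presentational: the paper fixes $q=2^{c\sqrt{k}}$ from Theorem~\ref{thm:bbl-hard} and shows the induced $\ell_1$ accuracy threshold is $\Theta(1/\sqrt{k})=\Theta(\eps)$, whereas you argue by contradiction assuming $q<2^{c/\eps}$ and derive the same inconsistency.
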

\begin{proof}
We reduce learning of $t$-MDNF over $k$ variables (for $t$ and $k$ to be chosen later) to learning of XOS using Lemma \ref{lem:xosdnf}, namely we replace each example $(x,f(x))$ with $\left(x, 1-\frac{1 - f(x)}{t}\right)$ and then replace the hypothesis $h(x)$ with $h'$ such that $h'(x)=1$ whenever $h(x) \geq 1-1/(2t)$. By Lemma \ref{lem:xosdnf}, any algorithm that achieves $\ell_1$ error of $\frac{1/4}{2t} - 2^{-k}$ gives a Boolean hypothesis for the MDNF problem with error of less than $1/4$.
 
By Theorem \ref{thm:bbl-hard}, there exists a constant $c > 0$ such that for $q= 2^{c \sqrt{k}}$ and $t= \log(3qk)$, the error of any PAC learning algorithm for $t$-MDNF that uses at most $q$ random examples (or value queries) is at least $1/4$. Note that  $$\frac{1/4}{2t} - 2^{-k}  = \frac{1}{8\log(3qk)} - 2^{-k} = \frac{1}{8(\log(3k)+c\sqrt{k})} - 2^{-k}, $$ and therefore there exists a constant $c_1 >0$ such that for every $\eps > 0$ and $k = c_1/\eps^2$, $\frac{1/4}{2t} - 2^{-k} \geq \eps$.
Applying the guarantees of Theorem \ref{thm:bbl-hard}, we get that the number of random examples (or value queries) used to learn with $\ell_1$ error of $\eps$ must be larger than $q= 2^{c  \sqrt{k}} = 2^{\Omega(1/\eps)}$.

\end{proof}

\subsection{Self-bounding functions}
We now show that upper bounds on low-degree spectral concentration that we proved for XOS and submodular functions cannot be extended to the whole class of self-bounding functions.
Our construction is based on the classical Hamming code which we briefly describe here for completeness. For an integer $r$ a Hamming code is a linear mapping (over $\gfield{2}$) $c:\zo^{2^r-r-1} \rightarrow \zo^{r}$ such that for any two distinct $v,w\in \zo^{2^r-r-1}$, the Hamming distance between $v \circ c(v)$ and $w \circ c(w)$ is at least 3, where we use ``$\circ$" to denote the concatenation of strings. We now show that for $k = 2^r-r-1$ a Hamming code gives a way to embed any Boolean function into a self-bounding function which we describe below.

\begin{lemma}
For an integer $r$, $k = 2^r-r-1$ and any Boolean function $h:\zo^k \rightarrow \zo$ let $f:\zo^{k+r} \rightarrow [0,1]$ be given by $f(x \circ z) = h(x)$, if $z = c(x)$ and $f(x \circ z) =1$, otherwise. Then $f$ is a self-bounding function.
\label{lem:sb-embed}
\end{lemma}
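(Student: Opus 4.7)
My plan is to verify the two conditions in the definition of a self-bounding function: that $f$ is $1$-Lipschitz, and that $\sum_i (f(y) - f(y \oplus e_i))_+ \leq f(y)$ for every $y \in \zo^{k+r}$. Since $f$ takes values in $\{0,1\}$, the $1$-Lipschitz property is immediate (any two adjacent points differ by at most $1$). So the work is in checking the summation inequality, and the approach will be to case-split on whether the input $y = x \circ z$ lies on the code, i.e.\ whether $z = c(x)$. Call such points ``codeword points'' and the rest ``off-code points.''

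In the first case, $y$ is a codeword point, so $f(y) = h(x) \in \{0,1\}$. I will argue that every neighbor $y \oplus e_i$ is off-code and hence has $f(y \oplus e_i) = 1$: indeed, two distinct codeword points $x \circ c(x)$ and $x' \circ c(x')$ have Hamming distance equal to the code distance (which is $\geq 3$) when $x \neq x'$, and flipping a single ``parity bit'' $z_j$ trivially destroys the relation $z = c(x)$. Consequently every positive-part term $(f(y)-f(y\oplus e_i))_+$ equals $0$ (since $f(y) \leq 1 = f(y \oplus e_i)$), so the sum is $0 \leq f(y)$.

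In the second case $y$ is off-code, so $f(y) = 1$. The key claim I will prove is: \emph{at most one} neighbor $y \oplus e_i$ of $y$ is a codeword point. Suppose for contradiction two distinct indices $i \neq j$ both gave codeword neighbors; then $y \oplus e_i$ and $y \oplus e_j$ are two distinct codeword points at Hamming distance exactly $2$, contradicting the minimum distance $3$ of the Hamming code. Given this claim, at most one term in the sum can be nonzero, and that term is at most $1$, so $\sum_i (f(y) - f(y\oplus e_i))_+ \leq 1 = f(y)$.

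The only genuine content is that single-bit displacement of a Hamming codeword lands off-code and that no off-code point has two codeword neighbors, both of which follow directly from the minimum distance $3$ property of $c$. No estimates or summations are required beyond this combinatorial observation, so I do not anticipate a technical obstacle; the only thing to be careful about is making sure both sub-cases ($h(x)=0$ and $h(x)=1$) in the codeword case are handled by the uniform bound $f(y \oplus e_i) = 1$ just noted.
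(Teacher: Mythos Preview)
Your proposal is correct and follows essentially the same argument as the paper. The paper splits on whether $f(x\circ z)=0$ or $f(x\circ z)=1$ rather than on-code versus off-code, but the core observation is identical: the minimum distance $3$ of the Hamming code forces any point to have at most one codeword neighbor, so the sum of positive drops is at most $1$.
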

\begin{proof}
Let $x \circ z$ be a point in $\zo^{k+r}$. If $f(x \circ z) = 0$ then $f$ cannot be lower on any point that differs from $x \circ z$ in one coordinate, and therefore the self-bounding condition holds at $x \circ z$. If $f(x \circ z) = 1$ then there exists at most one point $y \in \zo^{k+r}$ that differs from $x \circ z$ in a single coordinate and $f(y) = 0$. This follows from the fact that, by definition of $f$, if $f(y) = 0$ then $y = x' \circ c(x')$ for some $x'\in \zo^k$. By the properties of $c$, any two points of this form are at Hamming distance at least 3 and therefore two distinct points cannot be at Hamming distance 1 to $x \circ z$. This means that $$\sum_{i \in [k+r]} |f(x \circ z) - f((x \circ z) \oplus e_i)|   \leq 1 = f(x \circ z).$$
\end{proof}

A spectral concentration bound can be obtained by analyzing the embedding of a $\zo$-parity function $h= \sum_{i \in S} x_i \mod 2$. To avoid the direct calculation which requires using additional properties of the Hamming code we will derive the lower-bound via lower bounds for learning below.

\begin{theorem}
Any algorithm that PAC learns all self-bounding functions from $\zon$ to $[0,1]$ with $\ell_2$ error of $\eps>0$ requires $2^{\Omega(1/\eps^2)}$ random examples of (or value queries to) the target function.
\end{theorem}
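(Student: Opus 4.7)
The plan is to reduce PAC learning of an arbitrary Boolean function on $k$ variables (for $k = \Theta(1/\eps^2)$) to PAC learning of self-bounding functions in $\ell_2$ via the Hamming-code embedding from Lemma~\ref{lem:sb-embed}, and then invoke the standard information-theoretic hardness of learning all Boolean functions. First I would choose $r$ and $k$ so that $k = 2^r - r - 1$ with $r = \Theta(\log(1/\eps))$, guaranteeing $2^r = \Theta(1/\eps^2)$. For any Boolean $h:\zo^k \to \zo$, Lemma~\ref{lem:sb-embed} produces a self-bounding $f_h:\zo^{k+r}\to[0,1]$ that equals $h(x)$ on the codeword points $\{x \circ c(x) : x \in \zo^k\}$ (which carry total mass exactly $2^{-r}$ under the uniform distribution) and equals $1$ elsewhere.

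The key quantitative observation is that $\ell_2$-distance is strongly compressed by this embedding: for any hypothesis $f':\zo^{k+r}\to\RR$,
$$\E_x\left[(f'(x\circ c(x)) - h(x))^2\right] = 2^r \cdot \E_{x\circ z}\left[(f'(x\circ z) - f_h(x\circ z))^2 \cdot \b1_{[z = c(x)]}\right] \leq 2^r \cdot \|f' - f_h\|_2^2.$$
Next I would describe the reduction. Given a PAC learner $\A$ that outputs $f'$ with $\|f' - f_h\|_2 \leq \eps$ using $q$ random examples (or value queries), we simulate its input from $h$ in a one-for-one fashion: a value query to $f_h$ at $x\circ z$ is answered by returning $1$ when $z \neq c(x)$ and by querying $h(x)$ otherwise, while a uniform random example of $f_h$ can be produced from a uniform random example $(x, h(x))$ of $h$ by independently drawing $z$ uniformly in $\zo^r$ and returning $(x\circ z, h(x))$ if $z = c(x)$ and $(x\circ z, 1)$ otherwise. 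From $f'$ we construct the Boolean hypothesis $h'(x) = \b1_{[f'(x\circ c(x)) \geq 1/2]}$. Using $\b1_{[h'(x) \neq h(x)]} \leq 4(f'(x\circ c(x)) - h(x))^2$ together with the above inequality yields
$$\Pr_x[h'(x) \neq h(x)] \leq 4 \cdot 2^r \cdot \|f' - f_h\|_2^2 \leq 4 \cdot 2^r \cdot \eps^2 \leq 1/4$$
by our choice of $r$, with probability at least $2/3$ over the algorithm's randomness.

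Finally I would invoke the standard information-theoretic lower bound: PAC learning the class of all Boolean functions on $k$ variables to error $1/4$ (with constant success probability) requires $2^{\Omega(k)}$ random examples or value queries, since each such sample reveals at most one bit of the target truth table and the adversary can draw $h$ uniformly from $2^{2^k}$ candidates, forcing the hypothesis to err on at least a constant fraction of the unprobed domain. Substituting $k = \Theta(1/\eps^2)$ yields the claimed $q = 2^{\Omega(1/\eps^2)}$ lower bound. The only delicate ingredient is the $2^{r/2}$ compression factor in the $\ell_2$-error translation, which is exactly what forces $r$ to be logarithmic in $1/\eps$ and thereby leaves $k = \Theta(1/\eps^2)$ bits of Boolean hardness available; the remainder of the argument is routine, and the use of a distance-$3$ code in Lemma~\ref{lem:sb-embed} is precisely what allows this compression without sacrificing self-boundedness.
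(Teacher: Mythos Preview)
Your proposal is correct and follows essentially the same approach as the paper: the Hamming-code embedding of Lemma~\ref{lem:sb-embed}, simulation of examples/queries, thresholding at $1/2$ to recover a Boolean hypothesis, the $2^r$ blow-up in squared $\ell_2$ error from restricting to codewords, and the information-theoretic lower bound for learning all Boolean functions on $k=\Theta(1/\eps^2)$ variables. The quantitative choices and the chain of inequalities match the paper's argument line for line.
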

\begin{proof}
We reduce learning of all Boolean functions on $k=2^r-r-1$ (for $r$ to be chosen later) variables over the uniform distribution to learning of self-bounding functions using Lemma \ref{lem:sb-embed}. Namely, given a random and uniform example $(x,\ell)$ of some unknown Boolean target function $h$ we output a random example $(x \circ z, \ell')$ of the function $f$ that is equal to the embedding of $h$ given by Lemma \ref{lem:sb-embed}. This is done by choosing $z$ uniformly from $\zo^{r}$ and having $\ell'=\ell$ if $z = c(x)$ and $\ell'=1$ otherwise (a value query can be answered similarly using a single value query to $h$).
Given a hypothesis $\tilde f$ we define $\tilde h(x) = 1$ if $\tilde f(x \circ c(x)) \geq 1/2$ and $\tilde h(x) = 0$ otherwise.
Observe that,
\begin{align*}
 \pr_{\U_k}[\tilde h(y) \neq h(y)] & \leq \pr_{x \circ z \sim \U_{k+r}}[ |\tilde{f}(x \circ z)-f(x\circ z)| \geq 1/2 \cond c(x) = z ] \\ &\leq 4 \cdot \E_{x \circ z \sim \U_{k+r}}[ |\tilde{f}(x \circ z)-f(x \circ z)|^2   \cond c(x) = z  ] \\
    & \leq 4 \cdot \frac{\E_{x \circ z \sim \U_{k+r}}[ |\tilde{f}(x \circ z)-f(x \circ z)|^2 }{2^{-r}} = 2^{r+2} \cdot \|\tilde{f}-f\|_2^2.
 \end{align*}
We now let $r= \lfloor\log(1/\eps^2)\rfloor + 4$. This choice ensures that if $\tilde f$ has $\ell_2$ error of less than $\eps$ then $\tilde h$ has error of less than $1/4$. Learning all Boolean functions to error of at most $1/4$ requires $2^{\Omega(k)} = 2^{\Omega(2^r)} = 2^{\Omega(1/\eps^2)}$ random examples (or value queries) and therefore we obtain our claim.
\end{proof}

We now observe that there exists some constant $c$ such that $\aedeg(f) \leq c/\epsilon^2$. Otherwise, for any constant $c_0$, using Theorem \ref{thm:low-degree-regression} we could obtain an algorithm that learns self-bounding functions using $2^{c_1/\eps^2}$ random examples contradicting Theorem \ref{thm:low-degree-regression}.

\begin{theorem}
For every $\eps>0$ there exists $k =O(1/\eps^2)$ and a self-bounding function $f:\zo^{k} \rightarrow [0,1]$ such that $\aedeg(f)=\Omega(1/\epsilon^2)$.
\end{theorem}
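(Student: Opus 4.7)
The plan is to derive this lower bound on $\aedeg$ directly from the $2^{\Omega(1/\eps^2)}$ PAC learning lower bound for self-bounding functions (the previous theorem), via the contrapositive of the agnostic low-degree regression algorithm of Theorem~\ref{thm:low-degree-regression}. The informal sentence preceding the theorem statement already outlines exactly this contradiction; my job is to fill in the quantitative details.

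First I would suppose for contradiction that the claim fails, meaning that for every constant $c_2 > 0$ there exist arbitrarily small $\eps > 0$ for which every self-bounding function $f:\zo^k \to [0,1]$ with $k = O(1/\eps^2)$ satisfies $\|f - p\|_2 \leq \eps/2$ for some polynomial $p$ of degree less than $c_2/\eps^2$. I would apply this hypothesis to exactly the instances produced by the previous PAC lower bound, which live on $k = 2^r - 1 = \Theta(1/\eps^2)$ variables for $r = \lfloor \log(1/\eps^2)\rfloor + 4$, and for which any PAC learner achieving $\ell_2$ error $\eps$ requires at least $2^{c_0/\eps^2}$ examples, where $c_0 > 0$ is the absolute constant extracted from the proof of that theorem.

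Next I would run the regression algorithm of Theorem~\ref{thm:low-degree-regression} on those instances with degree $d = \lceil c_2/\eps^2\rceil$ and excess error $\eps/2$. By the assumed upper bound on $\aedeg$, the optimal degree-$d$ polynomial lies within $\ell_2$-distance $\eps/2$ of $f$, so the returned hypothesis $h$ satisfies $\|f - h\|_2 \leq \eps$. The sample complexity and running time are $\poly(t, 1/\eps)$ with
\[
t \;=\; O\!\left((ek/d)^d\right) \;=\; \bigl(O(1)/c_2\bigr)^{c_2/\eps^2} \;=\; 2^{\,O(c_2 \log(1/c_2)/\eps^2)}.
\]

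The only delicate step is the choice of $c_2$: since $c_2 \log(1/c_2) \to 0$ as $c_2 \to 0$, I can fix $c_2$ small enough so that the exponent above is strictly less than $c_0$. The resulting PAC learner then uses $2^{o(1/\eps^2)}$ examples, contradicting the previous theorem. Hence the assumption is false and there exist $c_2 > 0$ and, for all sufficiently small $\eps > 0$, a self-bounding $f:\zo^k \to [0,1]$ on $k = O(1/\eps^2)$ variables with $\aedeg(f) \geq c_2/\eps^2$. I do not anticipate any serious obstacle: the argument is entirely a quantitative contrapositive and the only bookkeeping is to absorb the $\poly(1/\eps)$ factors of Theorem~\ref{thm:low-degree-regression} into the dominant exponential, which is immediate because $c_2 \log(1/c_2)/\eps^2$ dominates $\log(1/\eps)$.
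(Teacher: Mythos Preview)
Your proposal is correct and follows exactly the paper's approach: the paragraph immediately preceding the theorem statement sketches precisely this contrapositive argument, deriving the degree lower bound from the $2^{\Omega(1/\eps^2)}$ PAC learning lower bound via the low-degree regression algorithm of Theorem~\ref{thm:low-degree-regression}. You have simply carried out the quantitative bookkeeping (the $(ek/d)^d$ estimate and the choice of $c_2$ so that $c_2\log(1/c_2)$ beats the lower-bound constant) that the paper leaves implicit.
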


\fi

\paragraph*{Acknowledgements}
We would like to thank Pravesh Kothari for useful discussions and his help with the proof of Lemma \ref{lem:submod-conc-lower}.

\bibliographystyle{alpha}
\bibliography{l2degreerefs}
\iffull
\appendix
\section{Rademacher complexity, XOS and self-bounding functions}
\label{sec:rademacher}
The Rademacher complexity of a class of functions $\F$ is one of the most popular and powerful tools in statistical learning theory for proving uniform convergence bounds on the generalization error \cite{KoltchinskiiPanchenko:00,BartlettMendelson:02}. Specifically, for a possibly unknown distribution $\cP$ over some domain $X$ we would like to upper bound the value of $n$ for which
$$\pr_{x^1,\ldots,x^n \sim \cP}\left[\sup_{f \in \F} \left| \E_{x\sim \cP}[f(x)] - \frac{1}{n}\sum_{i\in [n]} f(x^i)\right| \geq \eps \right] \leq \delta . $$
Such bounds imply learnability via empirical loss minimization and can be obtained by considering the Rademacher complexity of $\F$ relative to $\cP$ which is defined as follows: for a (multi-)set $S$ of $n$ points from $X$ let the empirical Rademacher complexity be defined as
$$ \cR(\cF \circ S) = \frac{1}{n} \E_{\sigma \sim \on^n} \left[ \sup_{f \in \F} \sum_{i\in [n]} \sigma_i f(x^i) \right] ,$$
where $\sigma$ is distributed uniformly over $\on^n$, or equivalently each $\sigma_i$ is an independent Rademacher variable. More generally, Rademacher complexity of any bounded set of vectors $V \subseteq \R^n$ is defined as
$$ \cR(V) = \frac{1}{n} \E_{\sigma \sim \on^n} \left[ \sup_{v \in V} \sum_{i\in [n]} \sigma_i v_i \right] .$$
The Rademacher complexity of $\cF$ over $\cP$ for sample size $n$ is then $\cR_n(\cF,\cP) = \E_{S \sim \cP^n} [\cR(\cF \circ S) ]$. To study the concentration properties of empirical Rademacher complexity it is viewed as a function over subsets of $[n]$ defined as $$\cR(\cF \circ S,A) = \frac{1}{n}\E_{\sigma \sim \on^{A}} \left[ \sup_{f \in \F} \sum_{i\in A} \sigma_i f(x^i) \right],$$
in other words it measures the Rademacher complexity of $S$ restricted to points with indices in $A$. This function is known to be self-bounding ---  an essential property for the applications of Rademacher complexity that rely on strong concentration of measure results (\eg \cite{BoucheronLB03}). Here we show that Rademacher complexity of any set of vectors $V$ is in fact an XOS function. For completeness, in Section \ref{sec:separate} we show that this is a strictly smaller class than that of monotone self-bounding functions. 
\subsection{Equivalence of XOS and Rademacher complexity functions}
For convenience we remove the normalizing factor $\frac{1}{n}$ in the definition of the Rademacher complexity since it does not affect the membership of a function in XOS.
\begin{theorem}
\label{thm:rademacher-is-XOS}
Let $V$ be a bounded set of vectors from $\R^n$. Then function $\phi:2^{[n]} \rightarrow \R$ defined as
$$\cR(V,A) = \frac{1}{n}\E_{\sigma \sim \on^{A}} \left[ \sup_{v \in V} \sum_{i\in A} \sigma_i v_i \right]$$ is XOS.
\end{theorem}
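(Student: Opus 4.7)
The plan is to construct an explicit XOS representation by taking as clauses all ``strategies'' $\phi:\on^n\to V$ that assign to each sign pattern a vector in $V$. For each such $\phi$, define the weights
$$w_{\phi,i} \;=\; \tfrac{1}{n}\,\E_{\sigma\sim\on^n}\!\left[\sigma_i\,\phi(\sigma)_i\right].$$
These weights may be positive or negative a priori, so we restrict attention to the subfamily $\Phi=\{\phi:w_{\phi,i}\ge 0 \text{ for all } i\}$. The claim to prove is then the identity
$$\cR(V,A) \;=\; \sup_{\phi\in\Phi}\sum_{i\in A} w_{\phi,i},$$
which exhibits $\cR(V,\cdot)$ as a supremum of nonnegative linear functions of $\mathbf{1}_A$ and hence as XOS (allowing an infinite clause family, which is harmless).

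The easy direction is the upper bound. For any $\phi\in\Phi$, since $\sigma_i\phi(\sigma)_i$ is independent of $\sigma_j$ for $j\notin A$ only in part, one writes
$$\sum_{i\in A} w_{\phi,i} \;=\; \tfrac{1}{n}\,\E_{\sigma}\!\left[\sum_{i\in A}\sigma_i\phi(\sigma)_i\right] \;\le\; \tfrac{1}{n}\,\E_\sigma\!\left[\sup_{v\in V}\sum_{i\in A}\sigma_i v_i\right] \;=\; \cR(V,A).$$
Hence $\sup_\Phi\le \cR(V,A)$, and the nonnegativity of the weights ensures this is a legitimate XOS clause.

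The harder direction is exhibiting a $\phi^*\in\Phi$ that achieves equality for the given $A$. The natural choice is $\phi^*(\sigma)\in\arg\max_{v\in V}\sum_{i\in A}\sigma_i v_i$, which depends only on $\sigma|_A$. For $i\notin A$, the coordinate $\phi^*(\sigma)_i$ is independent of $\sigma_i$, so $w_{\phi^*,i}=0\ge 0$ automatically, and the contribution to the sum is zero. For $i\in A$, plugging in gives $\sum_{i\in A}w_{\phi^*,i}=\cR(V,A)$ by construction, so all that remains is to verify $w_{\phi^*,i}\ge 0$ for $i\in A$. This is the main technical obstacle, and it follows from a standard two-point swap argument: given $\sigma,\sigma'\in\on^n$ that differ only at coordinate $i\in A$ with $\sigma_i=+1$, $\sigma'_i=-1$, the two optimality inequalities
$$\sum_{j\in A}\sigma_j \phi^*(\sigma)_j \ge \sum_{j\in A}\sigma_j \phi^*(\sigma')_j,\qquad \sum_{j\in A}\sigma'_j \phi^*(\sigma')_j \ge \sum_{j\in A}\sigma'_j \phi^*(\sigma)_j$$
add to $(\sigma_i-\sigma'_i)(\phi^*(\sigma)_i-\phi^*(\sigma')_i)\ge 0$, i.e., $\phi^*(\sigma)_i\ge \phi^*(\sigma')_i$. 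Averaging this pointwise inequality over all coupled pairs exactly gives $\E_\sigma[\sigma_i\phi^*(\sigma)_i]\ge 0$.

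Combining the two directions gives the desired XOS representation. One small technicality: if $V$ is not closed, the $\arg\max$ may fail to be attained; this is handled either by passing to the closure of $V$ (which leaves $\cR(V,A)$ unchanged) or by using $\eps$-approximate maximizers and sending $\eps\to 0$, neither of which affects the monotonicity argument. The normalizing factor $1/n$ is immaterial since XOS is closed under positive scalar multiplication.
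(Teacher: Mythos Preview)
Your proof is correct and takes a genuinely different route from the paper's.

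The paper does not construct an XOS representation at all; instead it invokes Feige's equivalence between XOS and fractionally subadditive functions and verifies the latter directly. Concretely, the paper first shows monotonicity of $A\mapsto \cR(V,A)$ (via essentially the same independence observation you use for $i\notin A$), then checks $\phi(A)\le \sum_j \beta_j \phi(B_j)$ for exact fractional covers $\{(B_j,\beta_j)\}$ by plugging the maximizer $v^{A,\sigma}$ into each $B_j$-sum and using $\sum_j \beta_j \mathbf{1}[i\in B_j]=1$.

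Your approach is more constructive: you produce an explicit (infinite) XOS clause family, one clause per selection strategy $\phi:\on^n\to V$ with nonnegative weights, and the two-point swap (rearrangement) argument certifies that the greedy strategy $\phi^*_A$ lies in this family. This yields slightly more information---an actual representation rather than just membership in the class---at the cost of a larger clause set. The swap inequality you prove is morally the same monotonicity the paper uses, but deployed coordinate-wise to get $w_{\phi^*,i}\ge 0$ rather than set-wise to get $\cR(V,A')\ge \cR(V,A)$. Both arguments handle the non-attained supremum the same way (closure or $\eps$-approximation), and both are short.
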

\begin{proof}
For convenience we prove that $\phi(A) = n\cdot \cR(V,A)$ is XOS which naturally implies that $\cR(V,A)$ is XOS.
We first observe that $\phi(\emptyset) = 0$. Next we show that $\phi$ is monotone.
For simplicity we assume that $V$ is compact (the extension to general sets is straightforward). For $A$ and a vector $\sigma \in \on^A$ let $v^{A,\sigma} \in V$ be a vector such that $\sum_{i\in A} \sigma_i v^{A,\sigma}_i = \sup_{v \in V} \sum_{i\in A} \sigma_i v_i$.
For subsets $A \subset A' \subset [n]$ and $\sigma' \in \on^{A'}$ we denote by $\sigma'_A$ the vector containing the bits of $\sigma'$ with indices in $A$. Then
\alequn{\phi(A') & = \E_{\sigma' \sim \on^{A'}} \left[ \sup_{v \in V} \sum_{i\in A'} \sigma'_i v_i \right] \geq \E_{\sigma' \sim \on^{A'}} \left[ \sum_{i \in A'} \sigma'_i v^{A,\sigma'_A}_i \right]
\\ & = \E_{\sigma' \sim \on^{A'}} \left[ \sum_{i \in A} \sigma'_i v^{A,\sigma'_A}_i \right] + \E_{\sigma' \sim \on^{A'}} \left[\sum_{i \in A'\setminus A} \sigma'_i v^{A,\sigma'_A}_i \right]
\\ & = \E_{\sigma \sim \on^{A}} \left[ \sum_{i \in A} \sigma_i v^{A,\sigma}_i \right]
\\ & = \E_{\sigma \sim \on^{A}} \left[ \sup_{v \in V} \sum_{i\in A} \sigma_i v_i \right] = \phi(A),
}
where we used the fact that $\sigma'_i$ for $i\in A'\setminus A$ is independent of $v^{A,\sigma'_A}_i$.

The function $\phi$ has non-negative range and $\phi(\emptyset) = 0$. Therefore it is sufficient to prove that  $\phi$ is fractionally subadditive. That is we need to prove that $\phi(A) \leq \sum_{j\in [m]} \beta_j \phi(B_j)$ whenever $\beta_j \geq 0$ and $\sum_{j:i \in B_j} \beta_j \geq 1 \ \forall i \in A$. Monotonicity of $\phi$ implies that it is sufficient to prove this condition for exact fractional covers: that is $\sum_{j:i \in B_j} \beta_j = 1 \ \forall i \in A$. This condition implies that for every vector $w \in \R^n$,
$$\sum_{j\in [m]} \sum_{i\in B_j} \beta_j w_i = \sum_{i\in A} w_i .$$ Using this equality we can conclude:
\alequn{\phi(A) & = \E_{\sigma \sim \on^{A}} \left[ \sum_{i \in A} \sigma_i v^{A,\sigma}_i \right]\\
& = \E_{\sigma \sim \on^{A}} \left[ \sum_{j \in [m]}\sum_{i \in B_j} \beta_j \cdot  \sigma_i v^{A,\sigma}_i \right] \\
& = \sum_{j \in [m]} \left(\beta_j \cdot  \E_{\sigma \sim \on^{A}} \left[ \sum_{i \in B_j} \sigma_i v^{A,\sigma}_i \right] \right)\\
& \leq  \sum_{j \in [m]} \left(\beta_j \cdot \E_{\sigma \sim \on^{A}} \left[ \sum_{i \in B_j} \sigma_i v^{B_j,\sigma_{B_j}}_i \right] \right) \\ & = \sum_{j \in [m]} \left(\beta_j \cdot \E_{\sigma \sim \on^{B_j}} \left[ \sum_{i \in B_j} \sigma_i v^{B_j,\sigma}_i \right] \right) = \sum_{j \in [m]} \beta_j \cdot \phi(B_j).
}
\end{proof}

We remark that this proof also applies to Gaussian complexity of a set of vectors $V$, another measure of complexity studied in convex geometry and statistical learning theory. In this measure in place of a Rademacher variable, a 0-mean Gaussian with variance 1 is used (the only fact about $\sigma_i$'s that we used is that it is $0$-mean and independent of all other variables).

It turns out that the converse of Theorem \ref{thm:rademacher-is-XOS} is also true. Any XOS function can be represented as Rademacher complexity of some set of vectors.
\begin{theorem}
Let $f:2^{[n]} \rightarrow \R$ be an XOS function. Then there exists a set $V$ such that for every set $A \subseteq [n]$,
$$f(A) = \frac{1}{n} \E_{\sigma \sim \on^{A}} \left[ \max_{v \in V} \sum_{i\in A} \sigma_i v_i \right].$$
\end{theorem}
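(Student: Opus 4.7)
The plan is to exhibit an explicit set $V$ of vectors by leveraging Feige's XOS characterization. Write $f(A) = \max_{c \in C} \sum_{i \in A} w_{ci}$ with $w_{ci} \geq 0$; since the domain $2^{[n]}$ is finite, we may assume $|C|$ is finite. The key idea is to augment each clause $c$ by all possible sign flips, so that the inner maximum becomes constant in the Rademacher randomness. Concretely, for each $c \in C$ and each $\tau \in \{-1,1\}^n$, define
\[
 v^{c,\tau} \in \RR^n, \qquad v^{c,\tau}_i \;=\; n \cdot \tau_i \cdot w_{ci},
\]
and let $V = \{v^{c,\tau} : c \in C, \tau \in \{-1,1\}^n\}$ (adjoining the zero vector if needed to handle $A=\emptyset$).

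The main step is to verify that for every $A \subseteq [n]$ and every $\sigma \in \{-1,1\}^A$,
\[
 \max_{v \in V} \sum_{i \in A} \sigma_i v_i \;=\; n \cdot f(A).
\]
This follows because
\[
 \max_{v \in V} \sum_{i \in A} \sigma_i v_i \;=\; \max_{c \in C}\; \max_{\tau \in \{-1,1\}^n}\; \sum_{i \in A} n \sigma_i \tau_i w_{ci},
\]
and for each fixed $c$, since $w_{ci} \geq 0$, the inner maximum over $\tau$ is attained by choosing $\tau_i = \sigma_i$ for $i \in A$ (the coordinates of $\tau$ outside $A$ do not appear in the sum). The inner maximum then equals $n \sum_{i \in A} w_{ci}$, which is independent of $\sigma$, and the outer maximum over $c$ gives $n \cdot f(A)$.

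Once the inner expression is a constant, taking expectation over $\sigma$ is trivial:
\[
 \frac{1}{n}\, \E_{\sigma \sim \{-1,1\}^A}\!\left[\,\max_{v \in V} \sum_{i \in A} \sigma_i v_i \,\right] \;=\; \frac{1}{n} \cdot n f(A) \;=\; f(A),
\]
as desired. There is no real obstacle in this argument; the only subtlety is the clever choice of $V$ that absorbs the Rademacher signs via the extra factor $\tau$. This step is crucial because, without including all sign-flipped copies, the inner maximum would depend on $\sigma$ and the expectation could be strictly less than $f(A)$. The construction yields a finite $V$ of size $|C| \cdot 2^n$ which is bounded (every $\|v^{c,\tau}\|_\infty \leq n \max_{c,i} w_{ci}$), so the resulting function is indeed a Rademacher complexity in the sense defined in the paper.
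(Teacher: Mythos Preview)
Your proof is correct and follows essentially the same approach as the paper: both constructions take the clause vectors $(w_{c1},\ldots,w_{cn})$ from the XOS representation and include all sign-flipped copies so that the inner maximum over $V$ equals $n\cdot f(A)$ regardless of $\sigma$, making the expectation trivial. Your scaling factor $n$ is in fact the correct one (the paper's stated factor $\tfrac{1}{n}$ appears to be a typo, as it would yield $\tfrac{1}{n^2}f(A)$ rather than $f(A)$).
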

\begin{proof}
By definition of XOS, there exists a set of clauses $C$ such that
$f(A) = \max_{c \in C} \sum_{i \in A}^{n} w_{ci}$, for some non-negative weights $w_{ci}$.
Let $$V = \{\frac{1}{n} (w_{c1} \sigma_1,w_{c2} \sigma_2,\ldots, w_{cn} \sigma_n ) \cond c\in C,\ \sigma \in \on^n\}.$$
Then for every $A$ and $\sigma$,
$$\max_{v \in V} \sum_{i\in A} \sigma_i v_i= \max_{v \in V} \sum_{i\in A} |v_i| = \max_{c \in C} \sum_{i\in A} w_{ci} =  n \cdot f(A).$$ This implies that
$$\frac{1}{n} \E_{\sigma \sim \on^{A}} \left[ \max_{v \in V} \sum_{i\in A} \sigma_i v_i \right] = f(A).$$
\end{proof}

\subsection{Separation of XOS and monotone self-bounding functions}
\label{sec:separate}
Here we show a simple monotone self-bounding function which is not XOS. We remark that formally, such a separation is trivial since XOS functions must satisfy $f({\bf 0}) = 0$,  unlike monotone self-bounding functions. Here we present a more interesting example, a function $f:\zo^3 \rightarrow \RR_+$ such that $f$ is 1-Lipschitz monotone self-bounding, $f({\bf 0}) = 0$ and $f$ is not XOS. The function is defined as follows (in set notation):
\begin{compactitem}
\item $f(\emptyset) = 0$
\item $f(\{1\}) = 1/5, f(\{2\}) = 2/5, f(\{3\}) = 3/5$
\item $f(\{1,2\}) = 3/5, f(\{1,3\}) = 4/5, f(\{2,3\}) = 3/5$
\item $f(\{1,2,3\}) = 1$
\end{compactitem}
The reader can verify that this function is monotone self-bounding but not XOS (in fact not even subadditive, since $f(\{1,2\}) + f(\{3\}) > f(\{1,2,3\})$). 
\fi

\end{document}